\renewcommand{\QED}{\QEDopen}
\newtheorem{theorem}{Theorem}[section]
\newtheorem{proposition}[theorem]{Proposition}
\newtheorem{lemma}[theorem]{Lemma}
\newtheorem{corollary}[theorem]{Corollary}
\newtheorem{remark}[theorem]{Remark}
\newtheorem{assumption}[theorem]{Assumption}
\newtheorem{example}[theorem]{Example}
\newtheorem{problem}[theorem]{Problem}
\newtheorem{open problem}[theorem]{Open problem}
\newtheorem{definition}[theorem]{Definition}
\newcommand \mc {\mathcal}
\DeclareMathAlphabet{\mbb}{U}{bbold}{m}{n} 
\newcommand \mb [1] {\ifthenelse{\equal{#1}{0}}{\mbb{0}}{\ifthenelse{\equal{#1}{1}}{\mbb{1}}{\mathbb{#1}}}} %
\newcommand \RR {\mb R}
\newcommand \NN {\mb N}
\newcommand{\on}{\operatorname}
\newcommand \inv {^{-1}}
\newcommand \T {^\top}
\newcommand \Tinv {^{-\top}}
\newcommand \invT {^{-\top}}
\renewcommand \H {^\mathrm{*}}
\newcommand \comp {^{\mathrm{c}}}
\newcommand \inv* {\mathstrut^{-1}}
\newcommand \T* {\mathstrut^\top}
\newcommand \Tinv* {\mathstrut^{-\top}}
\newcommand \invT* {\mathstrut^{-\top}}
\newcommand \H* {\mathstrut^\mathrm{*}}
\newcommand \comp* {\mathstrut^{\mathrm{c}}}
\let\tilde\widetilde
\let\hat\widehat
\renewcommand \subset {\subseteq}
\newcommand \casesif {\text{if }}
\newcommand \casesand {\text{ and }}
\newcommand \im {\on{im}}
\newcommand \diag [1] {\on{diag}}
\newcommand {\cl} [1] {\operatorname{cl}(#1)}
\newcommand {\conv} [1] {\operatorname{conv}(#1)}
\newcommand {\inter} [1] {\operatorname{int}(#1)}
\newcommand {\myvphantom} [1] {\let \\ \empty \vphantom{#1}}
\newcommand {\set} [2] {%
\left\{%
\myvphantom{#1#2}%
\right.%
#1%
\left|%
\myvphantom{#1#2}%
\right.#2%
\left.%
\myvphantom{#1#2}%
\right\}%
}
\renewcommand \d {\mathrm{d}}
\newcommand{\pdd}[2]{\frac{\partial#1}{\partial#2}}
\newcommand{\tpdd}[2]{\tfrac{\partial#1}{\partial#2}}
\newcommand \Ie {\textit{I.e.}}
\def \ie {\textit{i.e.}}
\newcommand \eg {\textit{e.g.}}
\renewcommand{\l}{\left}
\renewcommand{\r}{\right}
\def\mytitle{DC power grids with constant-power loads\textemdash Part I: A full characterization of power flow feasibility, long-term voltage stability and their correspondence}
\def\mythanks{This work is supported by NWO (Netherlands Organisation for Scientific Research) project `Energy management strategies for interconnected smart microgrids' within the DST-NWO Joint Research Program on Smart Grids.}
\def\myauthors{Mark Jeeninga, Claudio De Persis and Arjan van der Schaft}
\def\myaffiliation{University of Groningen, 9747AG Groningen, The Netherlands (e-mail: \{m.jeeninga, c.de.persis, a.j.van.der.schaft\}@rug.nl)}
\title{\mytitle}
\author{\myauthors%
	\thanks{\mythanks}%
	\thanks{\myaffiliation}%
}
\begin{document}

\maketitle
\thispagestyle{empty}
\pagestyle{empty}

\begin{abstract}
In this two-part paper we develop a unifying framework for the analysis of the feasibility of the power flow equations for DC power grids with constant-power loads.

In Part I of this paper we present a detailed introduction to the problem of power flow feasibility of such power grids, and the associated problem of selecting a desirable operating point which satisfies the power flow equations.
We introduce and identify all long-term voltage semi-stable operating points, and show that there exists a one-to-one correspondence between such operating points and the constant power demands for which the power flow equations are feasible.
Such operating points can be found by solving an initial value problem, and a parametrization of these operating points is also obtained.
In addition, we give a full characterization of the set of all feasible power demands, and give a novel proof for the convexity of this set.
Moreover, we present a necessary and sufficient LMI condition for the feasibility of a vector of power demands under small perturbation, which extends a necessary condition in the literature.
\end{abstract}
\section{Introduction}\label{section:introduction}
A classical problem in the study of power grid stability is the long-term voltage stability problem. %
The problem concerns the long-term (in)stability of a power grid due to limitations in the transportation of power from sources to loads.
These limits in power transportation are due to a combination of generation limits, load limits and/or limits due to the network (infra)structure.
The transportation of power from sources to loads is known as \emph{power flow} or \emph{load flow}, and is captured in the \emph{power flow equations} (or also, {load flow equations}).
Consequently, the power flow equations implicitly describe the limitations of power flow in a power grid.

Another motivation for the study of power flow are phenomena such as voltage drop, voltage collapse and power outages. Such phenomena may occur when transportation limits are exceeded and the power flow equations cannot be satisfied. 
Loosely speaking, control schemes which are designed to satisfy the power flow equations in the long-term time scale may display unintended behavior when the the power flow equations cannot be satisfied. A possible consequence is that critical components may reach their operational limits, start to fail, and cause a chain reaction of more failures.
Satisfaction of the power flow equations is therefore crucial
to guarantee (long-term) safe operation of the power grid.

Long-term voltage instability is a load-driven phenomenon \cite{cutsem2008voltage}, and different load characteristics may be considered for analysis\textemdash see, \eg, \cite{eminoglu2005power}.
For a load characteristic with given parameters, we refer to solutions of the power flow equations as \emph{operating points} of the power grid.
The power flow equations are \emph{feasible} if at least one operating point exists. 
In general, the power flow equations are nonlinear, and no operating points may exist.
Likewise, multiple operating points may exist, while a single operating point should be selected.

For practical power grids there are several distinct properties to select an operating point.
First, it is desirable that an operating point is \emph{long-term voltage stable}, meaning that all voltage magnitudes of an operating point decrease if any load demand increases \cite{cutsem2008voltage}. 
This is to say that the Jacobian of the voltage magnitudes at the loads as a function of the power demand has negative elements \cite{simpson2016voltage}.
Second, it is desirable that the selected operating point is the solution to the power flow equations that minimizes the total power dissipated in the lines at steady state.
A third property is that the operating point is a \emph{high-voltage solution}, meaning that the selected operating point element-wise dominates all other operating point that satisfy the power flow equation.

It is \textit{a priori} not clear if, or under which conditions, these types of operating points coincide, or give rise to a unique operating point.
For specific types of power grids it has been shown that these types of operating point coincide ``almost surely'', and that a sufficient condition exists for the uniqueness of the long-term voltage stable operating point \cite{matveev2020tool}. %
A similar result for a general power grid is not available in the literature, but several sufficient conditions for are known \cite{dorfler2013synchronization,simpson2016voltage,bolognani2015existence}.

There have been several publications on the problem of long-term voltage stability, and in particular on the feasibility of power flow equations. Here we list a few of them.\\
The paper \cite{bolognani2015existence} considers a generic AC power grid with a single source, whereas
the paper \cite{simpson2016voltage} considers a lossless AC power grid. Both \cite{simpson2016voltage} and \cite{bolognani2015existence} give a sufficient condition for feasibility of the reactive power flow equations, and use fixed-point methods to conclude the existence of an operating point. Estimates of the operating point are also given. In addition, \cite{simpson2016voltage} shows that this operating point is the high-voltage solution, and that it is long-term voltage stable.\\
The paper \cite{barabanov2016} considers a general power transportation system at steady state and proves a necessary conditions for feasibility of a power demand, which is also sufficient in certain cases.\\
The paper \cite{matveev2020tool} presents an algorithm to determine if the power flow equations of a DC power grid are feasible, and shows that there exists a high-voltage solution, which is ``almost surely'' long-term voltage stable. Conditions for the long-term voltage stable operating point to be unique are given.\\
The paper \cite{dymarsky2014convexity} proves that the set of feasible solutions to the power flow equations for DC power grids is convex, as follows from the study of convexity of the non-homogeneous numerical range of a generalized quadratic form.

In this two-part paper we focus on the power flow of DC power grids with constant-power loads, and assume there are no limits on voltage potentials and line currents.
While many types of power grids are studied in the literature, %
DC power grids with constant-power loads are among the simplest type of power grid where feasibility of the power flow equations is nontrivial.

It is noted that there are several types of power grids for which the power flow equations are equivalent to or well-approximated by the power flow of DC power grids with constant-power loads.
The paper \cite{dorfler2013novel} shows how the active power flow problem for a lossless AC power grid may be approximated by a DC power flow grid. 
The papers \cite{simpson2015solvability,simpson2016voltage} show a similar result for the reactive power flow problem for a lossless AC power grid. %
See also \cite{bolognani2015existence,matveev2020tool} for other examples.
Even though the literature provides handles to study power flow, the interplay between the different results is not clear, and an over-arching analysis is missing. The main motivation of this paper is to bridge these gaps in the literature for DC power grids, and to develop a unified framework for the analysis of DC power flow with constant-power loads.

\subsection*{Contribution}
This paper is split into two parts. 
Part~I of this paper presents a geometric framework to analyze the feasibility of the DC power flow equations with constant-power loads.
This framework is extended in Part~II to unify and generalize the main contributions of the previously mentioned publications in the context of DC power flow.
The novelty of our approach is that we combine results in matrix theory, convex analysis, and initial value problems to analyze DC power flow. 
In contrast to other approaches, we do not explicitly rely on fixed point analysis or iterative methods.

The main objective of this twin paper is to analyze the set, denoted by $\mc F$, of constant power demands for which the power flow equations are feasible.
We would like to emphasize that these constant power demands are not sign-restricted. \Ie, the power demand at a load is allowed to be negative, in which case the load provides power to the grid. %
We let $\mc D$ denote the set of long-term voltage stable operating points. 
We refer to the vectors in $\cl{\mc D}$, the closure of $\mc D$, as \emph{long-term voltage semi-stable} operating points.
In Part~I of this paper we develop a geometric framework for DC power flow feasibility for constant-power loads.
The main contributions of Part~I are as follows.
\newcommand{\mainresult}[1]{\textbf{M\ref{main results:#1}}}
\begin{enumerate}[\hspace{5pt}\bf M1.]
\item We give a parametrization of $\mc D$, its closure and its boundary, which establishes a constructive method to describe the long-term voltage (semi-)stable operating points (Theorem \ref{theorem:parametrization of D}). \label{main results:parametrization of D}
\item For each vector of power demand that lies on the boundary of $\mc F$ there exists a unique corresponding operating point which solves the power flow equations. Moveover, these operating points form the boundary of $\mc D$ (Corollary~\ref{corollary:one-to-one boundary}).%
\label{main results:one-to-one boundary}%
\item There is a one-to-one correspondence between the feasible power demands $\mc F$ and the long-term voltage semi-stable operating points $\cl{\mc D}$. This means that if the power flow equations are feasible, then there exists a unique long-term voltage semi-stable operating point that solves the power flow equations.
This operating point can be found by solving an initial value problem (Theorem~\ref{theorem:one-to-one correspondence}). \label{main results:one-to-one correspondence}%
\item We give a novel and insightful proof for the fact that the set $\mc F$ is closed and convex. Consequently, $\mc F$ is the intersection of all supporting half-spaces of $\mc F$. We describe all such half-spaces, which gives a complete geometric characterization of $\mc F$ (Theorem~\ref{theorem:convexity of F}).\label{main results:convexity of F}%
\item We prove %
a necessary and sufficient LMI condition for the feasibility of the power flow equations, %
and a necessary and sufficient LMI condition for the feasibility of the power flow equations under small perturbations of the power demands (Theorem \ref{theorem:necessary and sufficient condition}).\label{main results:necessary and sufficient condition}
\end{enumerate}

Part II of this paper continues the approach, and recovers and extends several results of the previously mentioned publications for DC power grids with constant-power loads. The main contributions of Part~II are as follows.
\begin{enumerate}[\hspace{5pt}\bf M1.]\setcounter{enumi}{5}
\item We give an alternative parametrization of $\mc D$, its closure and its boundary.\label{main results:second parametrization of D}
\item We give two parametrizations of $\partial\mc F$, the boundary of the set of feasible power demands.\label{main results:boundary of F}
\item We refine the results of \mainresult{boundary of F} and \mainresult{necessary and sufficient condition} for nonnegative power demands, which are cheaper to compute.\label{main results:nonnegative power demands}
\item We prove that any vector of power demands that is element-wise dominated by a feasible vector of power demands is also feasible.\label{main results:power demand domination}
\item We present two novel sufficient conditions for the feasibility of the power flow equations which generalize the sufficient conditions in \cite{simpson2016voltage} and \cite{bolognani2015existence}, and show how these conditions are related.\label{main results:sufficient conditions} %
\item We show that the long-term voltage stable operating point is a strict high-voltage solution.
Consequently, the operating points associated to a feasible power demand which are either long-term voltage stable, a high-voltage solution, or dissipation-minimizing, are one and the same.\label{main results:desirable operating point}
\end{enumerate}

It is important to explain how these results are related to the existing literature, and in which regard these results are, to the best of the authors' knowledge, novel.

Regarding \mainresult{one-to-one correspondence}, it was shown in \cite{matveev2020tool} that if the power flow equations are feasible, then there ``almost surely'' exists an operating point which is long-term voltage stable, and that it is the unique long-term voltage stable operating point if all power demands are positive, or all are negative. %
By studying long-term voltage \textit{semi}-stable operating points, we show %
that for each feasible vector of power demands there always exists a unique long-term semi-stable operating point. \\ %
Regarding \mainresult{convexity of F}, the convexity of $\mc F$ was already shown in \cite{dymarsky2014convexity} (see also \cite{Dymarsky_2019}), and follows from an analysis of the convexity of the numerical range of non-homogeneous quadratic maps. Our approach to prove convexity is different from and less general than the one proposed by \cite{dymarsky2014convexity}, and is a byproduct of the proof of \mainresult{one-to-one correspondence}, the one-to-one correspondence between $\mc F$ and $\cl{\mc D}$. We believe our proof for convexity to be simpler.\\ %
Regarding \mainresult{necessary and sufficient condition}, our contribution is a necessary and sufficient condition for the feasibility of power demands under small perturbations. In \cite{barabanov2016} a similar condition was shown to be sufficient for a more general system with constant-power loads at steady-state. It was shown in \cite{barabanov2016} to also be necessary whenever $\mc F$ is closed convex, as is the case here.\\
Regarding \mainresult{desirable operating point}, it was shown in \cite{matveev2020tool} that, if the power flow equations are feasible, then there exists a high-voltage solution, \ie, an operating point that element-wise dominates all other operating points which satisfy the power flow equations, and that this operating point is ``almost surely'' long-term voltage stable. 
We show that the element-wise domination is strict, and that this operating point always 
coincides with the unique long-term voltage semi-stable operating point.
This shows the algorithm proposed in \cite{matveev2020tool} converges to the unique long-term voltage semi-stable operating point wherever the power flow equations are feasible.\\

\subsection*{Organization of Part I}

In Section~\ref{section:problem section} we formulate the DC power flow equations, discuss the problem of their feasibility, and define different types of desirable operating points.
We give a detailed introduction to this feasibility problem and its difficulties.

In Section~\ref{section:problem analysis} we develop a geometric framework to analyze the DC power flow equations. 
The main objective of Section~\ref{section:problem analysis} is to prove that there is a one-to-one correspondence between $\mc F$ and $\mc D$, and to give a method to compute the desired operating point (\mainresult{one-to-one correspondence}). In addition, we prove that $\mc F$ is convex and present a full geometric characterization of $\mc F$ as an intersection of half-spaces (\mainresult{convexity of F}). 
To establish this, we present a parametrization of $\mc D$ (\mainresult{parametrization of D}), and prove that there is a one-to-one correspondence between the boundary of $\mc D$ and the boundary of the convex hull of $\mc F$, which is a prelude to proving that there is a one-to-one correspondence between the boundary of $\mc D$ and the boundary of $\mc F$ (\mainresult{one-to-one boundary}). The section is concluded by presenting a necessary and sufficient LMI condition for the feasibility of a vector of power demands, and a similar condition for feasibility under small perturbation (\mainresult{necessary and sufficient condition}).

Section~\ref{section:conclusion} %
concludes Part I of the paper.

\subsection*{Notation and matrix definitions}
For a vector $x = \begin{pmatrix}
x_1 & \cdots & x_k
\end{pmatrix}\T$ we denote 
\begin{align*}
[x]:=\on{diag} (x_1,\dots,x_k). 
\end{align*}
We let $\mb 1$ and $\mb 0$ denote the all-ones and all-zeros vector, respectively, and let $I$ denote the identity matrix.
We let their dimensions follow from their context.
All vector and matrix inequalities are taken to be element-wise. 
We write $x \lneqq y$ if $x\le y$ and $x\neq y$.
We let $\|x\|_p$ denote the $p$-norm of $x\in\RR^k$.

We define $\boldsymbol n := \{1,\dots,n\}$.
All matrices are square $n\times n$ matrices, unless stated otherwise.
The submatrix of a matrix $A$ with rows and columns indexed by $\alpha,\beta\subset\boldsymbol n$, respectively, is denoted by $A_{[\alpha,\beta]}$. The same notation $v_{[\alpha]}$ is used for subvectors of a vector $v$.
We let $\alpha\comp$ denote the set-theoretic complement of $\alpha$ with respect to $\boldsymbol n$. 
For a set $S$, the notation $\inter S$, $\cl S$, $\partial S$ and $\conv S$ is used for the interior, closure, boundary and convex hull of $S$, respectively.

We list some classical definitions from matrix theory.
\begin{definition}[\cite{fiedler1986special}, Ch. 5]\label{definition:Z-matrix}
A matrix $A$ is a \emph{Z-matrix} if $A_{ij}\le 0$ for all $i\neq j$.
\end{definition}
\begin{definition}[\cite{fiedler1986special}, Thm. 5.3]\label{definition:M-matrix}
A Z-matrix is an \emph{M-matrix} if all its eigenvalues have nonnegative real part.
\end{definition}
\begin{definition}[\cite{fiedler1986special}, pp. 71]\label{definition:irreducible matrix}
A matrix $A$ is \emph{irreducible} if for every nonempty set $\alpha\subsetneqq \boldsymbol n$ we have $A_{[\alpha,\alpha\comp]}\neq 0$.
\end{definition}

\section{The DC power flow problem}\label{section:problem section}
In this section we formulate the DC power flow equations and explore their feasibility. 
We study DC power grids that consist of nodes (buses), which are either loads or sources, and are interconnected by lines.
Source nodes are voltage controlled buses which provide power to the power grid, and represent generators such as power plants.
Load nodes are voltage controlled buses which generically extract power from the power grid. %
The power flow equations describe the power balance at the load nodes.
We are interested in the existence of a solution to the power flow equations in the long-term time scale, and therefore study DC power grid at steady-state. Note that, at steady-state, the line dynamics do not contribute to this power balance. %
We therefore model the power grid as a resistive circuit. We refer to \cite{schaft2010characterization,schaft2019flow} for a detailed discussion on resistive circuits.

We proceed with the modeling of DC power grids with purely resistive lines at steady state.
For the sake of simpli- city we do not consider operational limits on lines, currents or voltage potentials in the power grid.
We consider a DC power grid with $n$ load nodes and $m$ source nodes.
We write $i\sim j$ if there exists a line between node $i$ and node $j$, and $i\not\sim j$ otherwise.
The conductance of the line between node $i$ and node $j$ is denoted by $w_{ij}=w_{ji}$, which is a positive real number.
The Kirchhoff matrix $Y \in \RR^{(n+m)\times(n+m)}$ describes both the topology and the conductances of the lines in the power grid. It is defined by
\begin{align*}
Y_{ij} := \begin{cases}
\sum_{k\sim i}w_{ik} & \casesif i=j\\
-w_{ij} & \casesif i \neq j \casesand i\sim j\\
0 & \casesif i \neq j \casesand i\not\sim j
\end{cases}.
\end{align*}
Note that $Y$ is a symmetric Z-matrix of which all rows and columns sum to zero. 
Hence $Y\mb 1=\mb 0$, and thus the matrix $Y$ is singular.
We assume that the nodes and lines form a connected graph. This implies that $\mb 1$ spans the kernel of $Y$ \cite{schaft2010characterization}, and that all principal submatrices of $Y$ are invertible \cite{schaft2019flow}. %
One may verify that for vectors $x$, $z$ we have
\begin{align}\label{eqn:positive definiteness Y}
x\T Y z = \sum_{i,j:~i\sim j} w_{ij} (x_i-x_j)(z_i-z_j).
\end{align}
Since $w_{ij}>0$ whenever $i\sim j$, \eqref{eqn:positive definiteness Y} implies $x\T Yx \ge 0$, and hence $Y$ is positive semi-definite.
Consequently, all principal submatrices of $Y$ are positive definite.

We partition $Y$ according to whether nodes are loads ($L$) or sources ($S$):
\begin{align}
Y = \begin{pmatrix}\label{eqn:partition of Y}
Y_{LL} & Y_{LS} \\ Y_{SL} & Y_{SS}
\end{pmatrix}.
\end{align}
The matrices $Y_{LL}$ and $Y_{SS}$ are positive definite, as they are principal submatrices of $Y$.
Following the same partition, let
\begin{align*}
V = \begin{pmatrix}
V_L \\ V_S
\end{pmatrix}\in\RR^{n + m}
\end{align*}
denote the vector of voltage potentials at the nodes. 
All voltage potentials are assumed to be positive (\ie, $V>\mb 0$).

We let $\mc I \in \RR^{n+ m}$ denote the electric current injected into the power grid at the nodes. The power that a node provides to the power grid is given by the vector 
\begin{align*}%
P = [V]\mc I \in \RR^{n + m}.
\end{align*}
Kirchhoff's laws together with Ohm's law state that
\begin{align}\label{eqn:KCL}
\begin{pmatrix}
\mc I_L \\ \mc I_S
\end{pmatrix} = \begin{pmatrix}
Y_{LL} & Y_{LS} \\ Y_{SL} & Y_{SS}
\end{pmatrix}\begin{pmatrix}
V_L \\ V_S
\end{pmatrix},
\end{align}
and therefore
\begin{align}\label{eqn:power at the nodes}
\begin{pmatrix}
P_L \\ P_S
\end{pmatrix} = \begin{bmatrix}
\begin{pmatrix}
V_L \\ V_S
\end{pmatrix}
\end{bmatrix}\begin{pmatrix}
Y_{LL} & Y_{LS} \\ Y_{SL} & Y_{SS}
\end{pmatrix}\begin{pmatrix}
V_L \\ V_S
\end{pmatrix}.
\end{align}
The \emph{total dissipated power} in the lines is derived in \cite{schaft2010characterization} as
\begin{align*}
R(V_L,V_S) := V\T Y V \ge 0.
\end{align*}

\subsection{Feasible constant power demands}\label{subsection:feasible constant power demands}
Throughout this paper we consider $V_L$ as a variable of the system, whereas the Kirchhoff matrix $Y$ and the voltages at the sources $V_S$ are fixed. 
We therefore write $\mc I_L = \mc I_L(V_L)$ and $P_L = P_L(V_L)$.

\begin{definition}
We define the \emph{source-injected currents} by 
\begin{align}\label{eqn:source-injected currents}
\mc I_L^*:=-Y_{LS}V_S = -\mc I_L(\mb 0),
\end{align}
which correspond to the currents injected into the loads by the sources when $V_L=\mb 0$.
\end{definition}
\begin{proposition}[{\cite[Prop. 3.6]{schaft2010characterization}}]
The \emph{open-circuit voltages} $V_L^*$ are the unique voltage potentials at the loads so that $\mc I_L(V_L) = \mb 0$, given by
\begin{align}\label{eqn:open-circuit voltages}
V_L^* := -{Y_{LL}}\inv Y_{LS}V_S = {Y_{LL}}\inv \mc I_L^*.
\end{align}
\end{proposition}
\begin{lemma}\label{lemma:positive open-circuit voltages}
The source-injected currents are nonnegative and not all zero (\ie, $\mc I_L^*\gneqq \mb 0$) and the open-circuit voltages $V_L^*$ are positive (\ie, $V_L^*>\mb 0$).
\end{lemma}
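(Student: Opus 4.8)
The plan is to treat the two claims separately, each exploiting the Z-matrix structure of $Y$ together with the positivity of $V_S$, and both ultimately resting on the inverse-nonnegativity of $Y_{LL}$.

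First I would establish $\mc I_L^*\gneqq\mb 0$. Since $Y$ is a Z-matrix and the index sets $L$ and $S$ are disjoint, every entry of the off-diagonal block $Y_{LS}$ is nonpositive, so $-Y_{LS}\ge 0$ entrywise. Because $V_S>\mb 0$, this immediately gives $\mc I_L^*=-Y_{LS}V_S\ge\mb 0$. To rule out $\mc I_L^*=\mb 0$ I would invoke connectivity: a connected graph has an irreducible Kirchhoff matrix, so by Definition \ref{definition:irreducible matrix} applied to $Y$ with $\alpha$ the set of load indices we get $Y_{LS}=Y_{[L,S]}\neq 0$, i.e. at least one line joins a load to a source. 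Hence $-Y_{LS}$ has a strictly positive entry, say in row $i$; as the corresponding component of $V_S$ is positive, the $i$-th component of $\mc I_L^*$ is strictly positive, and therefore $\mc I_L^*\gneqq\mb 0$.

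For $V_L^*>\mb 0$ I would avoid arguing about the (possibly reducible) sign pattern of $Y_{LL}\inv$ directly and instead use a stochastic-matrix argument. The key ingredient is that $Y_{LL}$, being a symmetric positive definite Z-matrix, is a nonsingular M-matrix (Definition \ref{definition:M-matrix}) and hence has a nonnegative inverse, $Y_{LL}\inv\ge 0$. Writing $M:=-Y_{LL}\inv Y_{LS}=Y_{LL}\inv(-Y_{LS})$, both factors are nonnegative, so $M\ge 0$. I would then read off the row sums of $M$ from $Y\mb 1=\mb 0$: its load-row block is $Y_{LL}\mb 1+Y_{LS}\mb 1=\mb 0$, whence $M\mb 1=-Y_{LL}\inv Y_{LS}\mb 1=\mb 1$, so each row of $M$ sums to one. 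Thus $V_L^*=MV_S$ expresses every open-circuit voltage as a convex combination of the strictly positive source voltages, giving $V_L^*\ge(\min_j (V_S)_j)\,\mb 1>\mb 0$.

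The step I expect to be the crux is the inverse-nonnegativity $Y_{LL}\inv\ge 0$; everything else is bookkeeping with signs and with the identity $Y\mb 1=\mb 0$. This inverse-positivity is a standard characterization of nonsingular M-matrices available in \cite{fiedler1986special}, which I would cite rather than reprove. I note that the convex-combination argument is chosen precisely so that I need not appeal to the stronger, and here not necessarily valid, statement $Y_{LL}\inv>0$: two load nodes may be connected only through a source, so $Y_{LL}$ need not be irreducible and $Y_{LL}\inv$ may contain zero entries. Combining the row-stochastic structure of $M$ with $V_S>\mb 0$ sidesteps this entirely and still yields the strict positivity of $V_L^*$.
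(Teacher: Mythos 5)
Your proof is correct, and it takes a genuinely different route from the paper's on the main point. For $\mc I_L^*\gneqq\mb 0$ the two arguments essentially agree (sign pattern of $Y_{LS}$ plus connectivity of the full graph), though the paper proves a slightly sharper intermediate fact: every connected component of the load subgraph receives a nonzero injected current, i.e.\ $(\mc I_L^*)_{[\alpha_i]}\gneqq\mb 0$ for each component $\alpha_i$. The divergence is in the positivity of $V_L^*$. The paper decomposes $Y_{LL}$ into its irreducible diagonal blocks, notes that each block is an irreducible nonsingular M-matrix with strictly \emph{positive} inverse by \cite[Thm.~5.12]{fiedler1986special}, and multiplies that positive inverse block by the nonnegative nonzero current subvector. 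You instead use only the weaker inverse-\emph{nonnegativity} of the possibly reducible nonsingular M-matrix $Y_{LL}$, and extract strictness from the identity $Y\mb 1=\mb 0$, which makes $M:=-Y_{LL}\inv Y_{LS}$ row-stochastic, so that $V_L^*=MV_S$ is entrywise a convex combination of the source voltages. Your worry that $Y_{LL}\inv$ may have zero entries is exactly the difficulty the paper resolves by passing to irreducible components; your discrete maximum-principle argument sidesteps the decomposition entirely and yields the quantitative bounds $\l(\min_j (V_S)_j\r)\mb 1\le V_L^*\le \l(\max_j (V_S)_j\r)\mb 1$ for free. What the paper's componentwise route buys instead is the observation that $Y_{LL}$ is permutation similar to a block diagonal matrix with positive inverse blocks, which the paper explicitly reuses later to justify Assumption~\ref{assumption:loads connected}; that by-product is not needed for the lemma itself, so both proofs are complete as proofs of the stated claim.
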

\begin{proof}
Let $\alpha_1,\dots,\alpha_k\subset\boldsymbol n$ index the connected components of the graph formed by loads and the lines between them. 
The matrices $(Y_{LL})_{[\alpha_i,\alpha_i]}$ are therefore irreducible, while $(Y_{LL})_{[\alpha_i,\alpha_j]} = 0$ for $i\neq j$ \cite{fiedler1986special}. %
Note that $(Y_{LL})_{[\alpha_i,\alpha_i]}$ is a principal submatrix of a $Y$, and is therefore positive definite, and that its off-diagonal elements are nonpositive. This implies that $(Y_{LL})_{[\alpha_i,\alpha_i]}$ is an irreducible nonsingular M-matrix, and therefore has a positive inverse \cite[Thm. 5.12]{fiedler1986special}.\linebreak
It follows that ${Y_{LL}}\inv$ is (permutation similar to) a block diagonal matrix with positive diagonal blocks.
The matrix $-Y_{LS}$ is a nonnegative matrix and the vector $V_S$ is positive. Hence, $-Y_{LS}V_S = \mc I_L^*$ is nonnegative. Since the graph of $Y$ is connected, there exists a line between the connected component represented by $\alpha_i$ and a source node. This implies that $(\mc I_L^*)_{[\alpha_i]}$ does not equal $\mb 0$, and thus $(\mc I_L^*)_{[\alpha_i]}\gneqq \mb 0$. We conclude that the vector
\begin{align*}
(V_L^*)_{[\alpha_i]} = ({Y_{LL}}\inv)_{[\alpha_i,\alpha_i]}(\mc I_L^*)_{[\alpha_i]} = {(Y_{LL})_{[\alpha_i,\alpha_i]}}\inv(\mc I_L^*)_{[\alpha_i]}
\end{align*}
is a positive vector, since it is the product of a positive matrix and a nonnegative nonzero vector.
This observation holds for all $i$, and thus $V_L^*>\mb 0$ and $\mc I_L^*\gneqq \mb 0$.
\end{proof}

If we substitute \eqref{eqn:open-circuit voltages} into \eqref{eqn:power at the nodes}, we obtain the equation
\begin{align}\label{eqn:power at the nodes 2}
P_L(V_L) = [V_L]Y_{LL} (V_L - V_L^*).
\end{align}
The open-circuit voltages are the unique voltage potentials at the loads which satisfy $V_L>\mb 0$ and $P_L(V_L) = \mb 0$.

In this paper we consider \emph{constant-power loads}, which is to say that each load demands a fixed quantity of power from the power grid. We let $P_c\in\RR^n$ denote the vector of constant power demands.
Note that we do not impose any sign restrictions on $P_c$ and that, in principle, load nodes could also demand negative power, in which case the loads provide constant power to the grid.
By equating the power demand with the power injection at the load nodes, we obtain the power balance
\begin{align}\label{eqn:power balance}
P_L(V_L) + P_c = \mb 0.
\end{align}
Note that power demand and power injection have opposite signs, and that indeed the vectors in \eqref{eqn:power balance} should be summed.
The substitution of \eqref{eqn:power at the nodes 2} in \eqref{eqn:power balance} yields the \emph{DC power flow equation for constant-power loads}:
\begin{align}\label{eqn:dc power flow equation}
[V_L]Y_{LL} (V_L - V_L^*) + P_c = \mb 0.
\end{align}
\begin{definition}\label{definition:feasible}
Given $Y$ and $V_S$, we say that the power flow equations \eqref{eqn:dc power flow equation} are \emph{feasible} for a vector of power demands $P_c$ if there exists a vector of voltage potentials $V_L$ %
which satisfies \eqref{eqn:dc power flow equation}. We say that $V_L$ is as an \emph{operating point} associated to $P_c$ if $V_L$ satisfies \eqref{eqn:dc power flow equation} for $P_c$. 
\end{definition}

Recall that thoughout Definition~\ref{definition:feasible} we require that ${V_S>\mb 0}$ and $V_L>\mb 0$.
\begin{definition}
We say that a vector of power demands $P_c$ is \emph{feasible} if \eqref{eqn:dc power flow equation} is feasible for $P_c$.
The set of \emph{feasible power demands} is given by
\begin{align*}
\mc F :=& \set{P_c}{\text{Eq. \eqref{eqn:dc power flow equation} is feasible for $P_c$}}.
\end{align*}
\end{definition}

Since \eqref{eqn:dc power flow equation} is a quadratic equation in $V_L$, the existence of a solution to \eqref{eqn:dc power flow equation} for a given $P_c$ is not guaranteed. 
Furthermore, \eqref{eqn:dc power flow equation} may have multiple solutions, and multiple operating points for a single $P_c$ may exist.

Our goal is to characterize all constant power demands $P_c$ such that \eqref{eqn:dc power flow equation} is feasible, which is precisely the set $\mc F$. 
This is formalized in the following problem statement.
\begin{problem}\label{problem:dc power flow problem}
Consider a DC power grid with Kirchhoff matrix $Y$ and voltage potentials at the sources $V_S>\mb 0$. 
Let $V_L^*$ be given by \eqref{eqn:open-circuit voltages}.
For which power demands at the loads $P_c\in \RR^n$ does there exist an operating point $V_L>\mb 0$ of voltage potentials at the loads which satisfies \eqref{eqn:dc power flow equation}?
\end{problem}

We make a few observations. %
First, note that Problem~\ref{problem:dc power flow problem} is not affected by the lines between sources, since the matrix $Y_{SS}$ does not appear in \eqref{eqn:dc power flow equation}. More specifically, since \eqref{eqn:dc power flow equation} only depends on $Y_{LL}$, $V_L^*$, $P_L$ and $V_L$, it follows that Problem~\ref{problem:dc power flow problem} only depends on $Y_{LL}$ and $V_L^*$, or on $Y_{LL}$ and $\mc I_L^*$ by \eqref{eqn:open-circuit voltages}.
Second, if the graph formed by loads and the lines between them is not connected, then the matrix $Y_{LL}$ is permutation similar to a block diagonal matrix with multiple blocks, as was observed in the proof of Lemma~\ref{lemma:positive open-circuit voltages}.
It follows that \eqref{eqn:dc power flow equation} can be analyzed for each block separately. Hence, without loss of generality, we make the assumption that the graph formed by the loads and the lines between them is connected.
\begin{assumption}\label{assumption:loads connected}
The load nodes and the lines between loads form a connected graph, or equivalently by \cite[Thm. 3.6.a]{fiedler1986special}, the matrix $Y_{LL}$ is irreducible.
\end{assumption}

\subsection{Desirable operating points}
For a feasible power demand $P_c$ there may be multiple operating points which satisfy \eqref{eqn:dc power flow equation}. We are generically interested in the following two criteria to determine a desirable operating point. %

\paragraph{Long-term voltage stable operating points}
First, we desire that the selected operating point is such that a small increase in a single power demand leads to a small decrease in all voltage potentials \cite{cutsem2008voltage, simpson2016voltage}. %
Note that each vector $V_L$ of voltage potentials at the loads is associated by \eqref{eqn:dc power flow equation} to a vector of constant power demands $P_c$, %
given by
\begin{align}\label{eqn:definition of f}
P_c(V_L) = [V_L]Y_{LL}(V_L^*-V_L).
\end{align}
We remark that \eqref{eqn:definition of f} should be interpreted as the vector of constant power demand which are satisfied by the vector $V_L$ of voltage potentials at steady state, and that constant power demands do not depend on the voltage potentials at the loads.
By virtue of \eqref{eqn:definition of f} the Jacobian of $P_c$ at an operating point $\tilde V_L$ is well-defined.
We use the following definition from \cite{simpson2016voltage}.\footnote{In \cite{simpson2016voltage} this property is referred to as local voltage stability, whereas we prefer the term long-term voltage stability.}
\begin{definition}\label{definition:long-term voltage stable}
An operating point $\tilde V_L$ associated to $\tilde P_c$ is \emph{long-term voltage stable} if the Jacobian of $P_c(V_L)$ at $\tilde V_L$ is nonsingular, %
and its inverse is a matrix with negative elements (\ie,\footnote{The equality $\pdd {P_c}{V_L}(\tilde V_L)\inv = \pdd {V_L}{P_c}(\tilde P_c)$ holds locally and follows from the Inverse Function Theorem, see \eg\ \cite{rudin1964principles}.} $\pdd {P_c}{V_L}(\tilde V_L)\inv = \pdd {V_L}{P_c}(\tilde P_c) < 0$). The set of all long-term voltage stable operating points is defined by
\begin{align*}
\mc D := \set{V_L}{%
\begin{minipage}{162pt}\text{$\exists \tilde P_c$ such that $V_L$ is a long-term voltage} \\\text{stable operating point associated to $\tilde P_c$}\end{minipage}
}.
\end{align*}
\end{definition}\vspace{4pt}

We remark that there are many equivalent definitions and names for long-term voltage stability.
For example, \cite{simpson2016voltage} refers to operating points described by Definition~\ref{definition:long-term voltage stable} as locally voltage stable, whereas \cite{matveev2020tool} uses the term voltage-regularity.
We refer to Remark~\ref{remark:equivalent definitions of long-term voltage stability} for a more detailed discussion.

Similar to Definition~\ref{definition:long-term voltage stable} we define the notion of long-term voltage semi-stability:
\begin{definition}\label{definition:long-term voltage semi-stable}
An operating point $\tilde V_L$ associated to $\tilde P_c$ is \emph{long-term voltage semi-stable} if for every $\varepsilon>0$ there exists a long-term voltage stable operating point $\widehat V_L$ associated to some $\widehat P_c$ such that $\|\tilde V_L- \widehat V_L\|_2<\varepsilon$. Consequently, the set of all long-term voltage semi-stable operating points equals $\cl{\mc D}$, the closure of $\mc D$.
\end{definition}\vspace{2pt} 
We emphasize that it is \textit{a priori} not clear that each feasible $P_c$ has a long-term voltage semi-stable operating point.

\paragraph{Dissipation-minimizing operating points}
Second, it is desirable that an operating point $V_L$ associated to $P_c$ minimizes $R(V_L,V_S)$, the total power dissipated in the lines.
\begin{definition}\label{definition:dissipation-minimizing operating point}
Given $P_c$, an operating point $\tilde V_L$ associated to $\tilde P_c$ is \emph{dissipation-minimizing} if for all operating points $V_L$ associated to $\tilde P_c$ we have $R(\tilde V_L,V_S)\le R(V_L,V_S)$.
\end{definition}
For such operating points the following proposition applies.
\begin{proposition}\label{proposition:minimal total dissipation}
Given $\tilde P_c\in\mc F$, an operating point $\tilde V_L$ associated to $\tilde P_c$ is \emph{dissipation-minimizing} if and only if it maximizes $V_L\T \mc I_L^*$ among all operating points $V_L$ associated to $\tilde P_c$, where $\mc I_L^*$ is the quantity defined in \eqref{eqn:source-injected currents}.
\end{proposition}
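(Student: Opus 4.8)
The plan is to reduce the quadratic dissipation functional $R$ to an affine function of the linear functional $V_L\T \mc I_L^*$, so that comparing operating points for a fixed $\tilde P_c$ becomes a comparison of a single scalar quantity. First I would expand $R(V_L,V_S)=V\T Y V$ using the block partition \eqref{eqn:partition of Y} together with the symmetry $Y_{SL}=Y_{LS}\T$, which gives
\begin{align*}
R(V_L,V_S) = V_L\T Y_{LL} V_L + 2V_L\T Y_{LS}V_S + V_S\T Y_{SS}V_S.
\end{align*}
Since $V_S$ is fixed, the last term is a constant, and the definition \eqref{eqn:source-injected currents} lets me replace $Y_{LS}V_S=-\mc I_L^*$, so that $R(V_L,V_S)=V_L\T Y_{LL} V_L - 2V_L\T \mc I_L^* + V_S\T Y_{SS}V_S$.

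The crux of the argument is to eliminate the remaining quadratic term $V_L\T Y_{LL} V_L$ by exploiting the constraint that $V_L$ is an operating point. Multiplying the power flow equation \eqref{eqn:dc power flow equation} on the left by $\mb 1\T$ and using $\mb 1\T[V_L]=V_L\T$, I obtain
\begin{align*}
V_L\T Y_{LL}(V_L - V_L^*) + \mb 1\T P_c = 0.
\end{align*}
Combining this with $Y_{LL}V_L^* = \mc I_L^*$, which follows from \eqref{eqn:open-circuit voltages}, yields the identity $V_L\T Y_{LL} V_L = V_L\T \mc I_L^* - \mb 1\T P_c$.

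Substituting this back into the expression for $R$ collapses the quadratic form and leaves
\begin{align*}
R(V_L,V_S) = -V_L\T \mc I_L^* - \mb 1\T P_c + V_S\T Y_{SS}V_S.
\end{align*}
Restricting to operating points $V_L$ associated to the fixed $\tilde P_c$, both $\mb 1\T \tilde P_c$ and $V_S\T Y_{SS}V_S$ are constants independent of the choice of operating point. Hence $R(\cdot,V_S)$ is a strictly decreasing affine function of the scalar $V_L\T \mc I_L^*$ along this set, so $\tilde V_L$ minimizes $R(\cdot,V_S)$ if and only if it maximizes $V_L\T \mc I_L^*$, as claimed.

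I expect no serious obstacle: the proof is essentially a computation. The one nonobvious move — and the reason the equivalence holds at all — is multiplying the \emph{quadratic} power flow equation by $\mb 1\T$, which collapses the diagonal matrix $[V_L]$ into the row vector $V_L\T$ and thereby converts the quadratic form $V_L\T Y_{LL} V_L$ into the linear quantity $V_L\T \mc I_L^*$ up to the constant $\mb 1\T P_c$. Once this substitution is in hand, the affine dependence of $R$ on $V_L\T \mc I_L^*$ makes the ``if and only if'' immediate.
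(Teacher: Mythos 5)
Your proof is correct and follows essentially the same route as the paper: both expand $R(V_L,V_S)=V\T Y V$ via the block partition, use $Y_{LS}V_S=-\mc I_L^*$ and $Y_{LL}V_L^*=\mc I_L^*$, and left-multiply \eqref{eqn:dc power flow equation} by $\mb 1\T$ to collapse the quadratic term, arriving at the identical expression $R(V_L,V_S) = -\mb 1\T P_c + V_S\T Y_{SS}V_S - V_L\T \mc I_L^*$ on the set of operating points. The only cosmetic difference is that the paper substitutes the grouped quantity $V_L\T Y_{LL}(V_L-V_L^*) = -\mb 1\T P_c$ as a whole, while you isolate $V_L\T Y_{LL}V_L$ first; the conclusion is the same.
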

\begin{proof}
We define $\mc S$ to be the set of all $V_L>\mb 0$ which satisfies \eqref{eqn:dc power flow equation}. 
We write $R(V_L,V_S) = V\T Y V$ in terms of the partitioning in \eqref{eqn:partition of Y} and substitute \eqref{eqn:open-circuit voltages} and \eqref{eqn:source-injected currents}. This results in %
\begin{align}
&R(V_L,V_S) \nonumber \\ %
&= V_L\T (Y_{LL} V_L + Y_{LS}V_S) + V_S\T (Y_{SS} V_S + Y_{SL}V_L)\nonumber\\
& = V_L\T Y_{LL} (V_L - V_L^*) + V_S\T Y_{SS} V_S - V_L\T \mc I_L^*.\label{eqn:total dissipated power reformulation}
\end{align}
For any $V_L\in \mc S$, multiplying \eqref{eqn:dc power flow equation} by $\mb 1\T$ shows that
\begin{align}\label{eqn:total power demand}
V_L\T Y_{LL} (V_L - V_L^*) = - \mb 1 \T P_c.
\end{align}
We are interested in minimizing $R(V_L,V_S)$ over all $V_L\in \mc S$.
By using \eqref{eqn:total dissipated power reformulation} and substituting \eqref{eqn:total power demand}, for all $V_L\in \mc S$ we have
\begin{align}\label{eqn:total dissipated power for operationg points}
R(V_L,V_S) = \underbrace{-\mb 1\T P_c + V_S\T Y_{SS} V_S}_{\text{fixed}} -V_L\T \mc I_L^*.
\end{align}
Only the last term in \eqref{eqn:total dissipated power for operationg points} depends on $V_L$, whereas the other terms are fixed.
Thus, minimizing $R(V_L,V_S)$ over all $V_L\in \mc S$ is equivalent to maximizing $V_L\T \mc I_L^*$ over all $V_L\in \mc S$.
\end{proof}

It was shown in \cite{matveev2020tool} that for each feasible vector of power demands there
exists an operating point associated to a given $P_c$ which element-wise dominates all other operating points associated to $P_c$. Such an operating point is referred to as the \emph{high-voltage solution} to \eqref{eqn:dc power flow equation} (see also \cite{simpson2016voltage}). We formalize this notion by the following definition.
\begin{definition}\label{definition:high-voltage solution}
An operating point $\tilde V_L$ associated to the power demands $P_c$ is a \emph{high-voltage solution} if $\tilde V_L \ge V_L$ for all $V_L$ associated to $P_c$, and is a \emph{strict high-voltage solution} if $\tilde V_L > V_L$ for all $V_L\neq\tilde V_L$ associated to $P_c$.
\end{definition}

It follows from Proposition~\ref{proposition:minimal total dissipation} that a high-voltage solution is always dissipation-minimizing.
\begin{corollary}\label{corollary:high-voltage}
If the operating point $\tilde V_L$ associated to the power demands $P_c$ is a high-voltage solution, then $\tilde V_L$ is dissipation-minimizing. Moreover, if $\tilde V_L$ is a strict high-voltage solution, then $\tilde V_L$ is the unique dissipation-minimizing operating point.
\end{corollary}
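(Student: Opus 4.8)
The plan is to derive both statements directly from Proposition~\ref{proposition:minimal total dissipation}, which already reduces the property of being dissipation-minimizing to maximizing the linear functional $V_L\mapsto V_L\T\mc I_L^*$ over all operating points associated to $P_c$. The only extra ingredient I need is the sign information on $\mc I_L^*$ provided by Lemma~\ref{lemma:positive open-circuit voltages}, namely that $\mc I_L^*\gneqq\mb 0$ is nonnegative and nonzero. Everything then reduces to a monotonicity observation about this functional under the element-wise order.

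For the first claim I would show that a high-voltage solution maximizes the functional. By Definition~\ref{definition:high-voltage solution} we have $\tilde V_L\ge V_L$ for every operating point $V_L$ associated to $P_c$. Since $\mc I_L^*\ge\mb 0$, taking the inner product with $\mc I_L^*$ preserves the element-wise inequality, so $\tilde V_L\T\mc I_L^*\ge V_L\T\mc I_L^*$ for all such $V_L$. Hence $\tilde V_L$ maximizes $V_L\T\mc I_L^*$, and Proposition~\ref{proposition:minimal total dissipation} immediately yields that $\tilde V_L$ is dissipation-minimizing.

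For the second claim I must upgrade this to a strict inequality. If $\tilde V_L$ is a strict high-voltage solution then $\tilde V_L>V_L$ for every operating point $V_L\neq\tilde V_L$. Expanding $(\tilde V_L-V_L)\T\mc I_L^*=\sum_i\big((\tilde V_L)_i-(V_L)_i\big)(\mc I_L^*)_i$, every summand is nonnegative since $\tilde V_L-V_L>\mb 0$ and $\mc I_L^*\ge\mb 0$, while $\mc I_L^*\gneqq\mb 0$ guarantees at least one index $i$ with $(\mc I_L^*)_i>0$, whose summand is then strictly positive. Therefore $\tilde V_L\T\mc I_L^*>V_L\T\mc I_L^*$ for every $V_L\neq\tilde V_L$, so $\tilde V_L$ is the unique maximizer of the functional. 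Invoking the equivalence in Proposition~\ref{proposition:minimal total dissipation} once more, any dissipation-minimizing operating point must also maximize $V_L\T\mc I_L^*$ and hence equal $\tilde V_L$, giving uniqueness.

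I expect no real obstacle in this argument, as it is essentially a single monotonicity statement. The only point demanding care is the strict case: one cannot assume $\mc I_L^*>\mb 0$, only $\mc I_L^*\gneqq\mb 0$, so the strict inequality in the scalar functional must be obtained by pairing the strict element-wise domination $\tilde V_L>V_L$ with a single strictly positive entry of $\mc I_L^*$. This is exactly where Lemma~\ref{lemma:positive open-circuit voltages} is used.
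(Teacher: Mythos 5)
Your proof is correct and follows essentially the same route as the paper: both reduce the claim to Proposition~\ref{proposition:minimal total dissipation} and then verify that element-wise domination (respectively strict domination) of operating points yields a weak (respectively strict) inequality in the linear functional $V_L \mapsto V_L\T \mc I_L^*$, using $\mc I_L^* \gneqq \mb 0$ from Lemma~\ref{lemma:positive open-circuit voltages}. Your handling of the strict case, pairing $\tilde V_L - V_L > \mb 0$ with a single strictly positive entry of $\mc I_L^*$ rather than assuming $\mc I_L^* > \mb 0$, is exactly the care the paper's argument takes (and you even cite the correct lemma, where the paper's text contains a reference slip).
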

\begin{proof}
Let $V_L$ be an operating point associated to $P_c$.
If $\tilde V_L-V_L\ge \mb 0$ then $(\tilde V_L-V_L)\T \mc I_L^* \ge 0$ since $\mc I_L^*\ge \mb 0$ by Lemma~\ref{proposition:minimal total dissipation}, and hence $\tilde V_L\T \mc I_L^*\ge V_L\T \mc I_L^* $. If $\tilde V_L-V_L> \mb 0$ then $(\tilde V_L-V_L)\T \mc I_L^* > 0$ since $\mc I_L^*\ge \mb 0$ and $\mc I_L^*\neq\mb 0$, and hence $\tilde V_L\T \mc I_L^*> V_L\T \mc I_L^*$. The result follows from Proposition~\ref{proposition:minimal total dissipation}.%
\end{proof}

Note that Definitions~\ref{definition:long-term voltage stable} and \ref{definition:long-term voltage semi-stable} describe local properties of an operating point, while Definitions~\ref{definition:dissipation-minimizing operating point} and \ref{definition:high-voltage solution} are global properties concerning all operating points associated to $P_c$.
It is \textit{a priori} not clear how dissipation-minimizing operating points and long-term voltage semi-stable operating points are related, nor is it clear when a feasible vector of power demands has a (possibly unique) long-term voltage semi-stable operating point or when it has a strict high-voltage solution.
Some partial answers to these questions are known\textemdash see \cite{matveev2020tool,simpson2016voltage}\textemdash but a full characterization is lacking.
This fundamental question is answered in this paper.
Indeed, %
in Part I of this paper we show that for each feasible vector of power demands there exists a unique long-term voltage semi-stable operating point associated to $P_c$ (\mainresult{one-to-one correspondence}). In Part II we show that this operating point is a strict high-voltage solution (\mainresult{desirable operating point}), which proves that the aforementioned notions coincide due to Corollary~\ref{corollary:high-voltage}. %
\subsection{Academic examples of DC power flow with constant-power loads}
In this section we explore the intricacies of Problem~\ref{problem:dc power flow problem} by considering two simple examples. We will focus on building some intuition for the sets $\mc F$ and $\mc D$.
We first consider the simplest case of a DC power grid with constant-power loads.

\begin{example}[Single load case]\label{example:one node example}
Consider a DC power grid with a single load and a single source (\ie, $n=m=1$), as depicted in Figure~\ref{figure:one load node schematic}. The corresponding graph of $P_L(V_L)$ is given in Figure~\ref{figure:one load node parabola}. Figure~\ref{figure:one load node operating points} depicts the relation between $P_c$ and $V_L$. In this example we let $V_S>\mb 0$ be arbitrary and recall that the open-circuit voltages $V_L^*>\mb 0$ are defined by \eqref{eqn:open-circuit voltages}. Since $n=1$, it follows that \eqref{eqn:dc power flow equation} is scalar-valued. By taking \eqref{eqn:dc power flow equation} and completing the squares we find
\begin{align*}
Y_{LL} (V_L - \tfrac 1 2 V_L^*)^2 - \tfrac 1 4 Y_{LL} (V_L^*)^2 + P_c = 0.
\end{align*}
Since $n=1$, $Y_{LL}$ is a positive scalar, and it follows that 
\begin{align}\label{eqn:one node example:operating point}
V_L = \tfrac 1 2 V_L^* \pm \sqrt{{Y_{LL}}\inv( \tfrac 1 4 Y_{LL} (V_L^*)^2 - P_c)}.
\end{align}
We see that \eqref{eqn:dc power flow equation} has a real solution for $n=1$ if and only if 
\begin{align}\label{eqn:one node example:power demand inequality}
{P_c \le \tfrac 1 4 Y_{LL} (V_L^*)^2}.
\end{align}
The set of all feasible power demands is therefore given by
\begin{align}\label{eqn:one node example:F}
\mc F = \set{P_c}{P_c \le \tfrac 1 4 Y_{LL} (V_L^*)^2}.
\end{align}
If equality holds in \eqref{eqn:one node example:power demand inequality}, it follows from \eqref{eqn:one node example:operating point} that there is precisely one operating point, given by $V_L = \tfrac 1 2 V_L^*$. In the case that \eqref{eqn:one node example:power demand inequality} is strict, we see that the positive branch of \eqref{eqn:one node example:operating point} leads to a higher voltage potential at the load, and minimizes $R(V_L,V_S)$. Hence, the positive branch of \eqref{eqn:one node example:operating point} is the high-voltage solution.
In addition, the positive branch decreases when $P_c$ increases, and so it is also the long-term voltage stable solution. We have $\mc D = \set{V_L}{V_L>\tfrac 1 2 V_L^*}$.\hfill\QED
\end{example}

\begin{figure}%
  \centering
  \includegraphics[width=.5\linewidth]{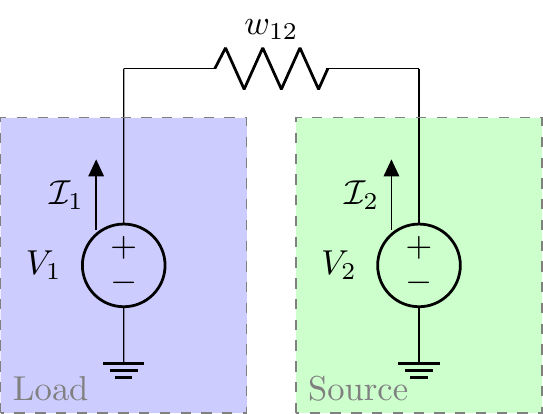}
  \caption{\label{figure:one load node schematic}
A schematic depiction of a power grid with a single load node and a single source node ($n=m=1$), where $w_{12}$ is the conductance of the line between the nodes.}
\end{figure}
\begin{figure}%
\centering
\includegraphics[width=.9\linewidth]{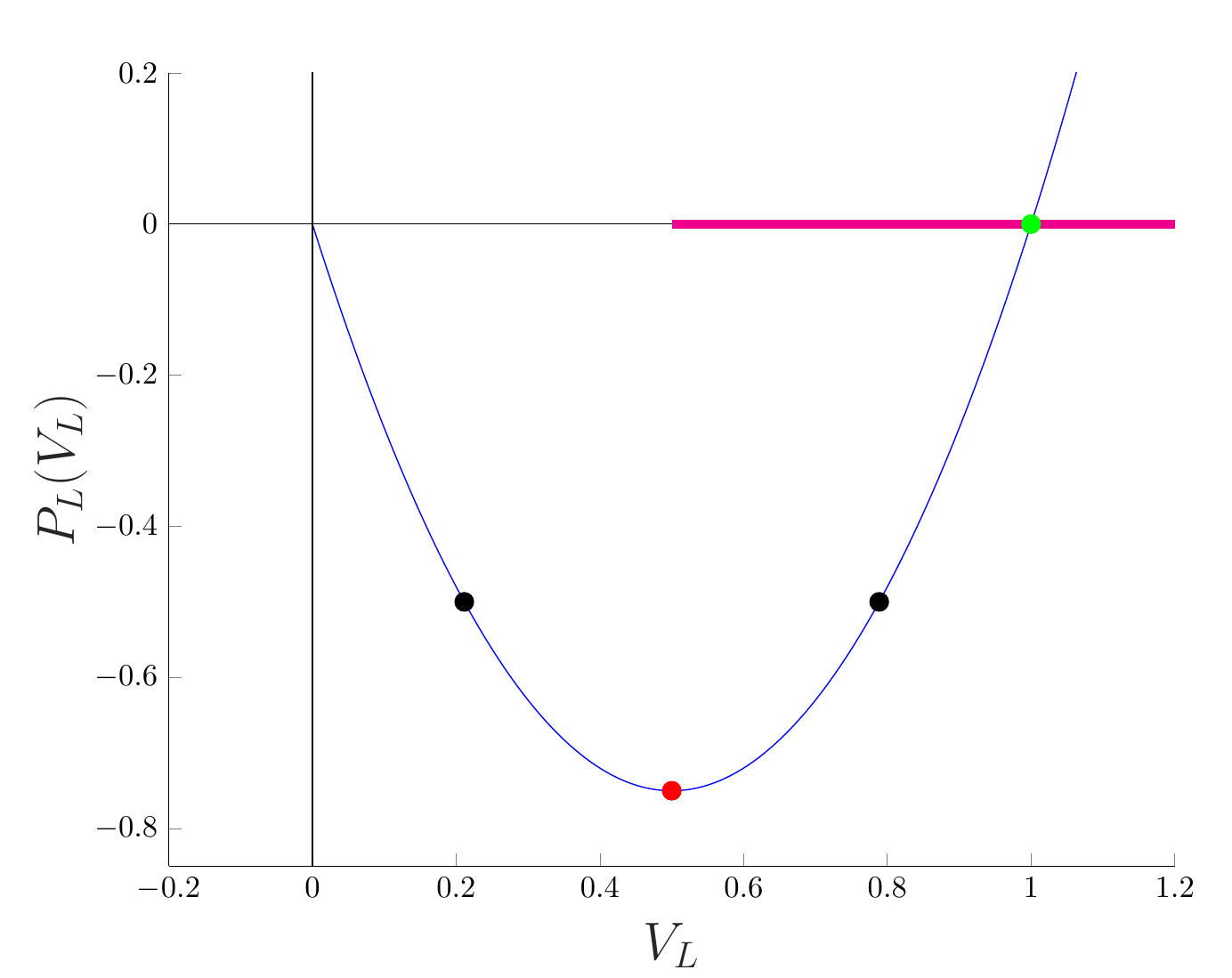}
\caption{\label{figure:one load node parabola}
A plot of $P_L(V_L)$ against $V_L$ for the power grid in Figure~\ref{figure:one load node schematic} with $w_{12} = 3~\Omega\inv$ and $V_S = V_2 = 1~\on{V}$. The red point indicates the voltage such that the power that the grid transports is maximized. The thick purple half-line corresponds to $\mc D$, the set of all long-term voltage stable operating points.\vspace{-10pt}}
\end{figure}
\begin{figure}[bt]
\centering
\includegraphics[width=.9\linewidth]{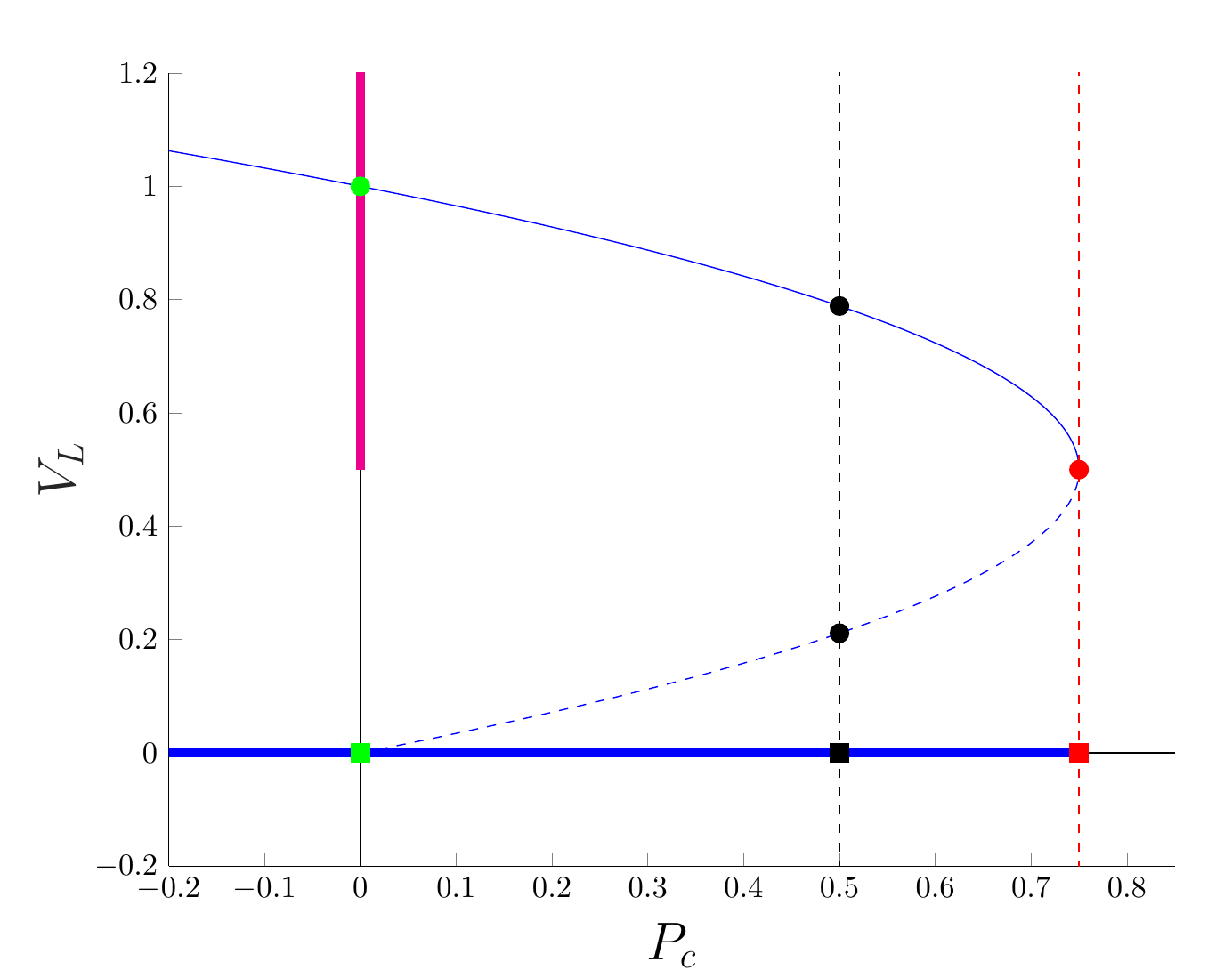}
\caption{\label{figure:one load node operating points}
A plot of $V_L$ against $P_c$ for the power grid in Figure~\ref{figure:one load node schematic} with $w_{12} = 3~\Omega\inv$ and $V_S = V_2 = 1~\on{V}$. The thick blue half-line corresponds to all feasible power demands at the load node. The red square indicates the maximal power that the load can drain from the power grid. There are multiple operating points if $0<P_c<0.75$. The Jacobian $\pdd {V_L}{P_c}$ is not defined in the red point. The corresponding operating point is long-term voltage semi-stable, but not long-term voltage stable. The solid curve depicts a one-to-one correspondence between the feasible power demands and the long-term voltage semi-stable operating points.\vspace{-10pt}}
\end{figure}

Eq. \eqref{eqn:one node example:power demand inequality} of Example~\ref{example:one node example} shows that \eqref{eqn:dc power flow equation} is not always feasible for each $P_c$ for $n=1$. We will show that the same is true for $n>1$ by studying the maximal total amount of power that can be transported to the load nodes.
\begin{definition}
For a feasible power demand $P_c\in \mc F$, the \emph{total feasible power demand} is the sum $\mb 1\T P_c$ of the power demands at the loads.
\end{definition}
\begin{definition}\label{definition:maximiaing feasible power demand}
A \emph{maximizing feasible power demand} is a feasible power demand $P_{\text{max}}\in \mc F$ that maximizes the total feasible power demand. Thus for all $P_c\in \mc F$ it satisfies
\begin{align}\label{eqn:power flow necessary condition}
\mb 1\T P_c \le \mb 1\T P_{\text{max}}.
\end{align}
\end{definition}
\begin{lemma}\label{lemma:maximizing power demand}
There is a unique maximizing feasible power demand $P_{\text{max}}\in \mc F$. It is given by
\begin{align}\label{eqn:max total power demand}
P_{\text{max}} = \tfrac 1 4 [V_L^*]\mc I_L^* \gneqq \mb 0.
\end{align}
The unique operating point corresponding to $P_{\text{max}}$ is $\tfrac 1 2 V_L^*$.
\end{lemma}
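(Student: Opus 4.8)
The plan is to reduce the maximization of the total feasible power demand to the unconstrained maximization of a strictly concave quadratic in the operating point. First I would recall that every feasible $P_c\in\mc F$ arises from some operating point $V_L>\mb 0$ via \eqref{eqn:definition of f}, and that multiplying \eqref{eqn:dc power flow equation} by $\mb 1\T$ (as recorded in \eqref{eqn:total power demand}) gives $\mb 1\T P_c = V_L\T Y_{LL}(V_L^*-V_L)$. Hence maximizing $\mb 1\T P_c$ over $\mc F$ is exactly maximizing the scalar function $g(V_L):=V_L\T Y_{LL}(V_L^*-V_L)$ over all $V_L>\mb 0$.

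Next I would exploit that $g$ is strictly concave, since $g(V_L)=V_L\T Y_{LL}V_L^* - V_L\T Y_{LL}V_L$ and $Y_{LL}$ is positive definite. Its gradient $Y_{LL}V_L^* - 2Y_{LL}V_L$ vanishes precisely at $V_L=\tfrac12 V_L^*$, the unique global maximizer over $\RR^n$. By Lemma~\ref{lemma:positive open-circuit voltages} we have $V_L^*>\mb 0$, so $\tfrac12 V_L^*>\mb 0$ lies in the admissible region $\{V_L>\mb 0\}$; therefore the unconstrained maximizer is also the maximizer over the admissible region, and the supremum is attained there. Substituting $V_L=\tfrac12 V_L^*$ into \eqref{eqn:definition of f} and using $\mc I_L^*=Y_{LL}V_L^*$ from \eqref{eqn:open-circuit voltages} yields $P_{\text{max}}=\tfrac14[V_L^*]\mc I_L^*$, and the positivity $P_{\text{max}}\gneqq\mb 0$ follows from $V_L^*>\mb 0$ together with $\mc I_L^*\gneqq\mb 0$ (again Lemma~\ref{lemma:positive open-circuit voltages}).

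For the two uniqueness claims I would lean on strict concavity a second time. If $P'\in\mc F$ also maximizes $\mb 1\T P_c$, then any operating point $V_L'$ of $P'$ satisfies $g(V_L')=g(\tfrac12 V_L^*)$; strict concavity forces $V_L'=\tfrac12 V_L^*$ and hence $P'=P_c(\tfrac12 V_L^*)=P_{\text{max}}$, proving $P_{\text{max}}$ is the unique maximizing feasible power demand. The identical argument shows that any operating point $V_L>\mb 0$ with $P_c(V_L)=P_{\text{max}}$ must attain the maximal value $g(\tfrac12 V_L^*)$, so again $V_L=\tfrac12 V_L^*$, establishing that $\tfrac12 V_L^*$ is the unique operating point corresponding to $P_{\text{max}}$.

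I expect the only genuinely delicate point to be the interplay between ``maximizing over the power-demand vectors $P_c\in\mc F$'' and ``maximizing over operating points $V_L$'': one must be careful that the same maximal \emph{value} of $\mb 1\T P_c$ could a priori be realized by several distinct $P_c$, and it is precisely the strict concavity of $g$ that collapses these possibilities to the single vector $P_{\text{max}}$. Everything else reduces to a short gradient computation and the appeal to $V_L^*>\mb 0$ that places the interior maximizer inside the admissible region.
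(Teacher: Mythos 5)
Your proposal is correct and follows essentially the same route as the paper: both reduce the problem via \eqref{eqn:total power demand} to maximizing the quadratic $V_L\T Y_{LL}(V_L^*-V_L)$ over $V_L>\mb 0$, identify $\tfrac 1 2 V_L^*$ as the unique maximizer using positive definiteness of $Y_{LL}$ (the paper by completing the square, you by the gradient and strict concavity, which is the same computation), and then read off $P_{\text{max}}=\tfrac14[V_L^*]\mc I_L^*$ and its positivity from Lemma~\ref{lemma:positive open-circuit voltages}. Your explicit handling of the uniqueness of $P_{\text{max}}$ itself (collapsing any other maximizing $P'$ through its operating point) is a slightly more careful articulation of the same equality-case argument the paper uses.
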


\begin{proof}
Let $P_c$ be feasible, and let $V_L$ be an associated operating point. Recall from \eqref{eqn:total power demand} that the total feasible power demand $\mb 1\T P_c$ satisfies
\begin{align*}
\mb 1\T P_c &= - V_L\T Y_{LL}(V_L - V_L^*).
\end{align*}
By completing the squares we find that
\begin{align*}
\mb 1\T P_c &= - (V_L^* - \tfrac 1 2 V_L)\T Y_{LL}(V_L^* - \tfrac 1 2 V_L) + \tfrac 1 4 {V_L^*}\T Y_{LL} V_L^*.
\end{align*}
Since $Y_{LL}$ is positive definite, it follows that 
\begin{align}\label{eqn:power flow necessary condition 1}
\mb 1 \T P_c \le \tfrac 1 4 {V_L^*}\T Y_{LL} V_L^*,%
\end{align} 
with equality if and only if $V_L = \tfrac 1 2 V_L^*$.
This implies that equality in \eqref{eqn:power flow necessary condition 1} holds if and only if
\begin{align*}
P_c = -P_L(\tfrac 1 2 V_L^*) = \tfrac 1 4 [V_L^*]Y_{LL}V_L^* = \tfrac 1 4 [V_L^*]\mc I_L^*,
\end{align*}
where we have substituted \eqref{eqn:open-circuit voltages}.
The above implies that there is a unique $P_{\text{max}}$ given by \eqref{eqn:max total power demand}, and corresponds to the unique operating point $\tfrac 1 2 V_L^*$.
Lemma~\ref{lemma:positive open-circuit voltages} implies that $P_{\text{max}}\gneqq \mb 0$, since $V_L^*>\mb 0$  and $\mc I_L^*\gneqq \mb 0$. %
\end{proof}

We remark that if a load node $i$ does not share a line with a source node, then $(\mc I_L^*)_i = (-Y_{LS}V_S)_i=0$, and $P_{\text{max},i}=0$.

The inequality \eqref{eqn:power flow necessary condition} describes a closed half-space in the space of power demands, and is a necessary condition for the feasibility of \eqref{eqn:dc power flow equation}. This condition coincides with the inclusion
\begin{align}\label{eqn:power flow necessary condition set}
\mc F \subset \set{P_c}{\mb 1 \T P_c \le \mb 1 \T P_{\text{max}} }.
\end{align}
We observe that \eqref{eqn:power flow necessary condition set} generalizes \eqref{eqn:one node example:F} for $n\ge 1$.
Since there is a unique maximizing feasible power demand by Lemma~\ref{lemma:maximizing power demand}, equality in \eqref{eqn:power flow necessary condition} only holds for $P_{\text{max}}$, and the inclusion in \eqref{eqn:power flow necessary condition set} strict for $n>1$.

The converse of \eqref{eqn:power flow necessary condition} states that, if $P_c$ is such that $\mb 1 \T P_c > \mb 1 \T P_{\text{max}}$, then no solution to \eqref{eqn:dc power flow equation} exists. The existence of $P_{\text{max}}$ therefore once more shows that the DC power flow equations with constant-power loads are not always feasible.
The following example illustrates \eqref{eqn:power flow necessary condition set}.

\begin{figure}[b]
\centering
\includegraphics[scale=.85]{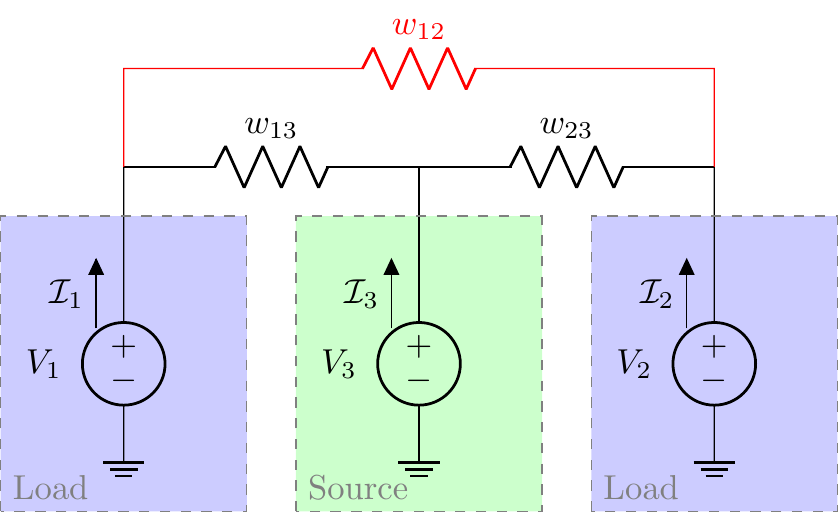}
\caption{\label{figure:two load nodes schematic}
A schematic depiction of a power grid with two load nodes ($n=2$, $m=1$).}
\end{figure}
\begin{figure}[b]
\centering
\includegraphics[scale=.55]{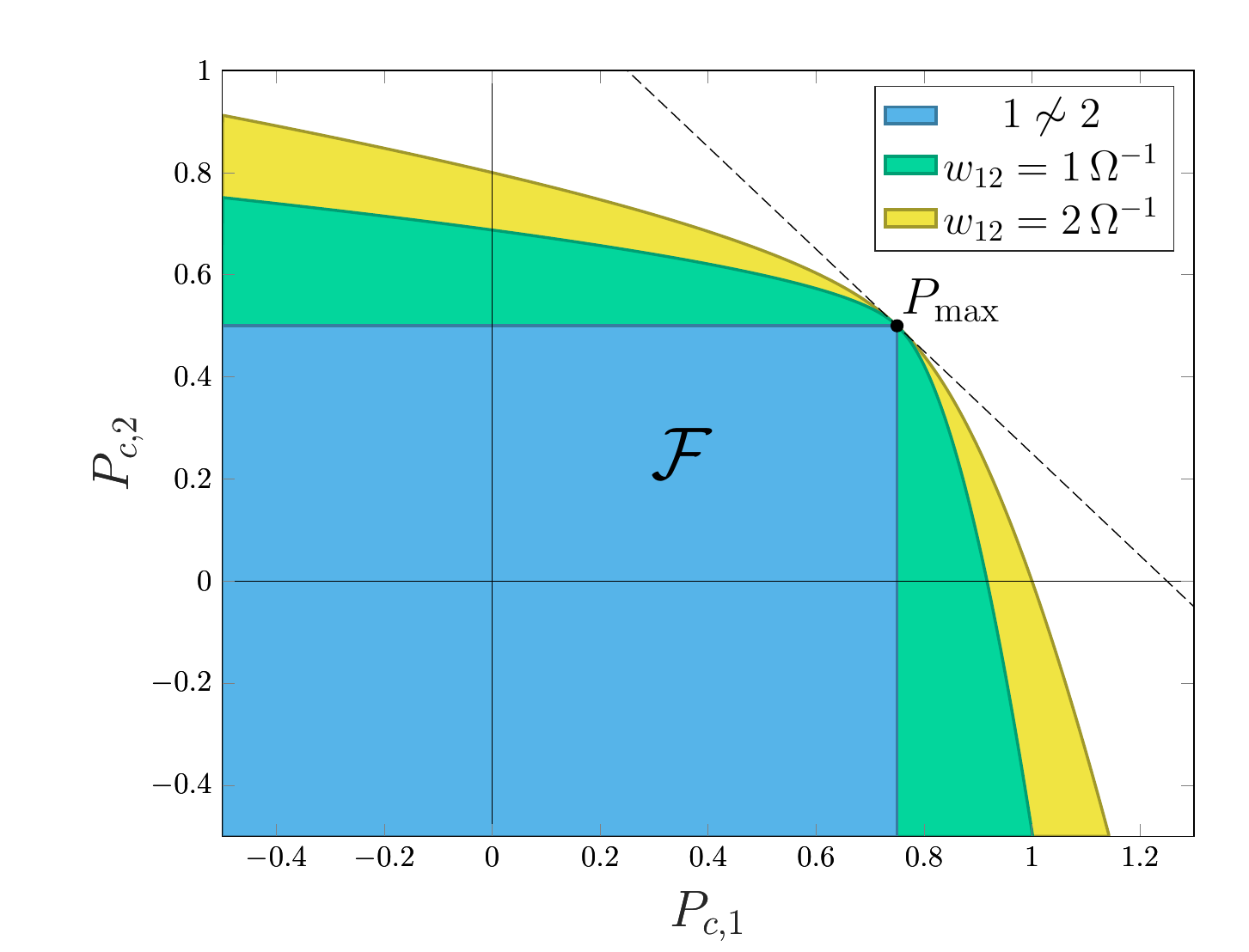}
\caption{\label{figure:two load nodes feasible power injections}
Plots of $\mc F$, the set of feasible power demands, for the power grid in Figure~\ref{figure:two load nodes schematic} with different values of $w_{12}$, where $w_{13} = 3~\Omega\inv$, $w_{23} = 2~\Omega\inv$ and $V_S = 1 \on{V}$. The dashed line is the set of the points for which equality in \eqref{eqn:power flow necessary condition} holds.}
\end{figure}
\begin{example}[two loads, one source case]\label{example:two node example}
Consider the DC power grid with two loads ($n=2$) and one source ($m=1$) depicted in Figure~\ref{figure:two load nodes schematic}. Figure~\ref{figure:two load nodes feasible power injections} gives the feasible power demands when we let $w_{13}=3\on\Omega\inv$, $w_{23}=2\on\Omega\inv$ and $V_S=1 \on V$, and vary the conductance $w_{12}$. It can be shown that $V_L^* = \mb 1 \on V$.\\
First, we disregard the red line between node 1 and node 2 (\ie, $1\not\sim2$), or equivalently take $w_{12} = 0\on\Omega\inv$.
The absence of the red line implies that $Y_{LL}$ is (block) diagonal, and Problem~\ref{problem:dc power flow problem} reduces to two copies of Example~\ref{example:one node example}.
From \eqref{eqn:one node example:power demand inequality} it follows that $P_c$ is feasible if and only if $P_{c,1}\le 0.75$ and $P_{c,2} \le 0.5$, which corresponds to the blue rectangle in Figure~\ref{figure:two load nodes feasible power injections}. 
Next, we consider the red line between loads 1 and 2. We observe from the same figure that increasing $w_{12}$ will result in a larger set of feasible power demands, as indicated by the green and yellow areas. 
The dashed line are the points for which equality in \eqref{eqn:power flow necessary condition} holds.
We note that these sets lie below the dashed line, and intersect the line only at the point $P_{\text{max}}$, which illustrates \eqref{eqn:power flow necessary condition set}. \hfill\QED
\end{example}

Figure~\ref{figure:polyhedral sufficient condition} relates the sufficient conditions of \cite{simpson2016voltage} and \cite{bolognani2015existence} to the feasible power demands of the DC power grid depicted in Figure~\ref{figure:two load nodes schematic}. Figures~\ref{figure:two load nodes feasible power injections} and~\ref{figure:polyhedral sufficient condition} suggest some properties of the set of feasible power demands $\mc F$, which we will prove in this paper:
\begin{itemize}
\item The set $\mc F$ is convex (\mainresult{convexity of F}) (See also \cite{dymarsky2014convexity}).
\item Each hyperplane which is tangent to the boundary of $\mc F$ (such as the dashed line in Figure~\ref{figure:two load nodes feasible power injections}) gives a necessary condition for feasibility (\mainresult{convexity of F}) (See also \cite{barabanov2016}).
\item If $y\in \mc F$ and $\hat y\le y$, then also $\hat y\in\mc F$ (\mainresult{power demand domination}).
\item The convex hull of the points on the boundary of $\mc F$ lead to a sufficient condition for a power demand to be feasible. In particular, the convex hull of $\mb 0$, $P_{\text{max}}$ and the points where the axes intersect the boundary of $\mc F$ forms a polyhedral subset of $\mc F$. The interior of this set describes the sufficient condition from \cite{simpson2016voltage} (\mainresult{sufficient conditions}). See also Figure~\ref{figure:polyhedral sufficient condition}. %
\item The ball with the smallest radius such that it touches the boundary of the condition in \cite{simpson2016voltage} is contained in $\mc F$. This subset describes the sufficient condition of \cite{barabanov2016} (\mainresult{sufficient conditions}); see again Figure~\ref{figure:polyhedral sufficient condition}.
\end{itemize}
\begin{figure}[t]
\centering
\includegraphics[scale=.6]{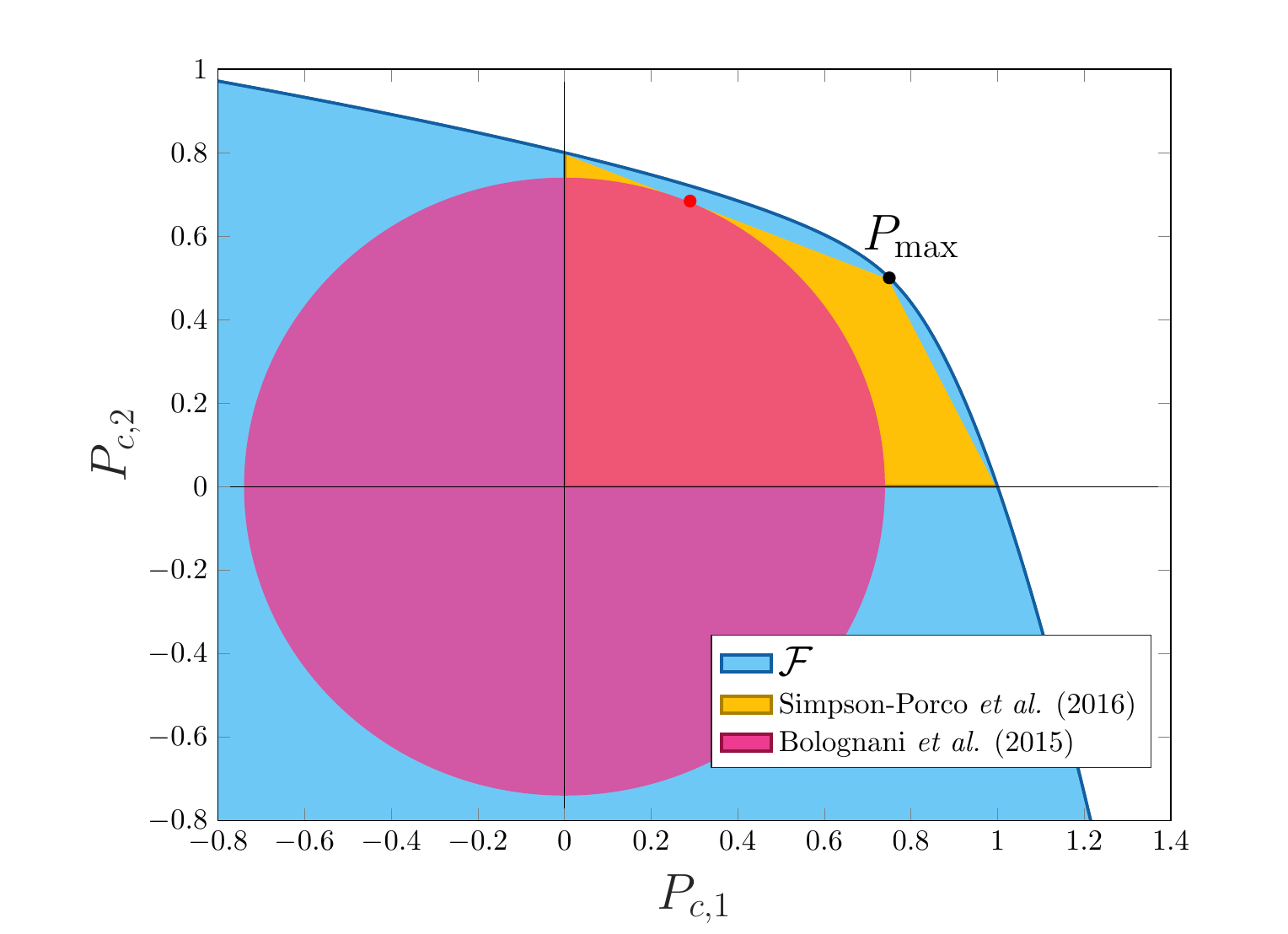}
\caption{\label{figure:polyhedral sufficient condition}
A plot of the set $\mc F$ for the power grid in Figure~\ref{figure:two load nodes schematic}, where $w_{12} = 2~\Omega\inv$, $w_{13} = 3~\Omega\inv$, $w_{23} = 2~\Omega\inv$ and $V_S = 1 \on{V}$. The yellow area is the %
set described by the sufficient condition in \cite{simpson2016voltage}. The red area is the sufficient condition from \cite{bolognani2015existence}. The boundaries of the two conditions intersect in the red point.\vspace{-15pt}}
\end{figure}
We also observe in Figure~\ref{figure:two load nodes feasible power injections} that $P_{\text{max}}$ does not change when $w_{12}$ is changed, and increasing $w_{12}$ leads to nested\footnote{Nested with respect to inclusion.} sets $\mc F$ of feasible power demands. 
We remark that this is %
not true in general. The analysis of this phenomenon is beyond the scope of this paper.
\section{A geometric framework for DC power flow feasibility with constant-power loads}\label{section:problem analysis}
In this section we establish a geometric framework for the feasibility of DC power flow with constant-power loads.
This section is structured as follows.
In Section~\ref{subsection:jacobian of f} we show that every operating point is uniquely associated to a Z-matrix: the Jacobian of $P_c$ at that operating point. Moreover, we show that an operating point is long-term voltage stable if and only if the Jacobian of $P_c$ is a nonsingular M-matrix.
Section~\ref{subsection:parametrization of D} uses this characterization to obtain a parametrization of the set of long-term voltage stable operating points $\mc D$. In particular, we parametrize the boundary of $\mc D$ by a set $\Lambda_1$. %
In Section~\ref{subsection:convex hull of F} we study the convex hull of $\mc F$ and show that $\Lambda_1$ also parametrizes the boundary of $\conv{\mc F}$. This establishes a one-to-one correspondence between the boundary of $\mc D$ and the boundary of $\conv{\mc F}$.
Our main results are stated in Section~\ref{subsection:convexity of F}, in which we prove that for each feasible power demand there exists a unique long-term voltage semi-stable operating point. In addition we present an explicit method for computing this operating point and show that the set of feasible power demands is closed and convex.  %
Finally, in Section~\ref{subsection:necessary and sufficient condition} we prove that the LMI condition in \cite{barabanov2016} is necessary and sufficient for the feasibility of a vector of power demands, and present a similar necessary and sufficient LMI for the feasibility of a vector of power demands under small perturbation.

\subsection{Relating operating points to the Jacobian of $P_c$}\label{subsection:jacobian of f}
Recall that Z-matrices, M-matrices and irreducible matrices were defined in Definitions~\ref{definition:Z-matrix}-\ref{definition:irreducible matrix}.

\begin{proposition}\label{proposition:Z-matrix perron result}
Let $A$ be an irreducible Z-matrix. There is a unique eigenvalue $r$ of $A$ with smallest (\ie, ``most negative'') real part. The eigenvalue $r$, known as the \emph{Perron root}, is real and simple. A corresponding eigenvector $v$, known as a \emph{Perron vector}, is unique up to scaling, and can be chosen such that $v>\mb 0$.
\end{proposition}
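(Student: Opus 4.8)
The plan is to reduce the statement to the classical Perron--Frobenius theorem for irreducible nonnegative matrices by means of a spectral shift. First I would pick a scalar $s > \max_i A_{ii}$ and set $B := sI - A$. Because $A$ is a Z-matrix we have $B_{ij} = -A_{ij} \ge 0$ for all $i \neq j$, while the choice of $s$ makes the diagonal entries $B_{ii} = s - A_{ii} > 0$; hence $B$ is an entrywise nonnegative matrix. Moreover, shifting by a multiple of $I$ leaves every off-diagonal block unchanged up to sign, so $B_{[\alpha,\alpha\comp]} = -A_{[\alpha,\alpha\comp]}$ for every nonempty $\alpha\subsetneqq\boldsymbol n$; by Definition~\ref{definition:irreducible matrix}, $B$ is irreducible precisely because $A$ is.

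Next I would invoke Perron--Frobenius for the irreducible nonnegative matrix $B$: its spectral radius $\rho(B)$ is an eigenvalue of $B$, it is real, strictly positive, algebraically simple, and it admits an eigenvector $v>\mb 0$ that is unique up to scaling. I would then transport this information back to $A$ through the relation $\sigma(A) = \set{s - \mu}{\mu \in \sigma(B)}$, which also preserves eigenvectors: if $Bv = \rho(B)v$, then $Av = (sI-B)v = (s-\rho(B))v$. Thus $r := s - \rho(B)$ is a real, simple eigenvalue of $A$ with positive eigenvector $v$, unique up to scaling. Since the real part of an eigenvalue $s-\mu$ of $A$ is $s - \on{Re}(\mu)$, the eigenvalue of $A$ of smallest real part corresponds to the eigenvalue of $B$ of largest real part.

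The one point requiring care — and the main obstacle — is uniqueness: I must show that $\rho(B)$ is the unique eigenvalue of $B$ with largest real part, and not merely the unique one of largest modulus, since an irreducible nonnegative matrix may carry several eigenvalues of modulus $\rho(B)$ on its spectral circle. This follows from the elementary chain $\on{Re}(\mu) \le |\mu| \le \rho(B)$, valid for every $\mu\in\sigma(B)$: if $\on{Re}(\mu) = \rho(B)$, then both inequalities must be equalities, forcing $|\mu| = \rho(B)$ together with $\on{Re}(\mu) = |\mu|$, whence $\mu = \rho(B)$ is real and equal to the Perron root. Transporting this back through the shift, $r = s - \rho(B)$ is the unique eigenvalue of $A$ of smallest real part, which establishes all the claimed properties.
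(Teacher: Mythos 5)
Your proposal is correct and takes essentially the same route as the paper: write $A = sI - B$ with $B$ nonnegative and irreducible (irreducibility being unaffected by the diagonal shift), apply the Perron--Frobenius theorem to $B$, and transport the simple eigenvalue $\rho(B)$ and its positive eigenvector, unique up to scaling, back to $r = s - \rho(B)$ for $A$. The only difference is that you spell out the step the paper's terse proof leaves implicit\textemdash that $\rho(B)$ is the unique eigenvalue of $B$ of largest real part, via the chain $\on{Re}(\mu) \le |\mu| \le \rho(B)$ with equality forcing $\mu = \rho(B)$\textemdash which is exactly what justifies the ``unique eigenvalue with smallest real part'' claim for $A$.
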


We include a short proof, as we were unable to find a reference to this result in this exact formulation.

\begin{proof}
Let $B$ be a nonnegative matrix and $s$ a scalar such that $A=sI-B$.
Irreducibility is independent of the diagonal elements of a matrix.
Hence, since $A$ is irreducible, so is $B$. 
Let $\rho(B)$ denote the spectral norm of $B$.
By the Perron-Frobenius Theorem \cite[Thm. 4.8]{fiedler1986special}, $\rho(B)$ is a simple eigenvalue of $B$ and there exists a positive eigenvector $v$ so that $Bv = \rho(B)v$.
Hence, $r=s-\rho(B)$ is a simple eigenvalue of $A$. The corresponding eigenvector $v$ is unique up to scaling. 
\end{proof}
Appendix~\ref{subsection:matrix theory} lists a number of useful results concerning Z-matrices, M-matrices and irreducible matrices.

The Jacobian of $P_c$ at $\tilde V_L$ is given by
\begin{align}\label{eqn:jacobian of f}
\pdd {P_c}{V_L}(\tilde V_L) = [Y_{LL}(V_L^* - \tilde V_L)] - [\tilde V_L]Y_{LL}.
\end{align}
Recall that a vector $\tilde V_L$ qualifies as a vector of voltage potentials only if $\tilde V_L>\mb 0$. 
The following lemma shows that the matrix \eqref{eqn:jacobian of f} has a particular structure if (and only if) $\tilde V_L>\mb 0$, and that each such matrix is unique for $\tilde V_L>\mb 0$.
\begin{lemma}\label{lemma:jacobian Z-matrix}
The matrix $-\pdd {P_c}{V_L}(\tilde V_L)$ (\ie, the Jacobian of $-P_c$ at $\tilde V_L$) is an irreducible Z-matrix if and only if $\tilde V_L>\mb 0$. %
The map $\tilde V_L \mapsto \pdd {P_c} {V_L} (\tilde V_L)$ is injective for $\tilde V_L > \mb 0$.
\end{lemma}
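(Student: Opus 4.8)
The plan is to read off the off-diagonal structure of the Jacobian directly from \eqref{eqn:jacobian of f} and reduce everything to the corresponding structure of $Y_{LL}$. Writing $J := \pdd{P_c}{V_L}(\tilde V_L)$, the term $[Y_{LL}(V_L^* - \tilde V_L)]$ is diagonal, so for $i \neq j$ only the term $-[\tilde V_L]Y_{LL}$ contributes, giving $(-J)_{ij} = (\tilde V_L)_i (Y_{LL})_{ij}$ and $J_{ij} = -(\tilde V_L)_i (Y_{LL})_{ij}$. Since neither the Z-matrix property nor irreducibility depends on diagonal entries (the latter is noted in the proof of Proposition~\ref{proposition:Z-matrix perron result}), the diagonal of $J$ plays no role, and the whole lemma is governed by these off-diagonal expressions together with the facts that $Y_{LL}$ is a symmetric Z-matrix (so $(Y_{LL})_{ij} \leq 0$ for $i \neq j$) and is irreducible by Assumption~\ref{assumption:loads connected}.

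For the ``if'' direction I would assume $\tilde V_L > \mb 0$. Then $(-J)_{ij} = (\tilde V_L)_i (Y_{LL})_{ij} \leq 0$ for $i \neq j$, because $(\tilde V_L)_i > 0$ and $(Y_{LL})_{ij} \leq 0$, so $-J$ is a Z-matrix. Multiplying row $i$ by the positive scalar $(\tilde V_L)_i$ leaves the zero/nonzero pattern of the off-diagonal entries unchanged, so $-J$ and $Y_{LL}$ share the same off-diagonal pattern; since $Y_{LL}$ is irreducible, so is $-J$ by Definition~\ref{definition:irreducible matrix}, which depends only on which off-diagonal blocks vanish.

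For the ``only if'' direction I would argue in two steps, and this is where the care is needed. First, from $-J$ being a Z-matrix we have $(\tilde V_L)_i (Y_{LL})_{ij} \leq 0$ for all $i \neq j$. Irreducibility of $Y_{LL}$ means that for each $i$ there is some $j \neq i$ with $(Y_{LL})_{ij} < 0$ (take $\alpha = \{i\}$ in Definition~\ref{definition:irreducible matrix}, and recall $Y_{LL}$ is a Z-matrix); the Z-matrix inequality then forces $(\tilde V_L)_i \geq 0$, so $\tilde V_L \geq \mb 0$. Second, I would use irreducibility of $-J$ itself to exclude $(\tilde V_L)_i = 0$: if $(\tilde V_L)_i = 0$ for some $i$, then every off-diagonal entry of row $i$ of $-J$ vanishes, i.e.\ $(-J)_{[\{i\},\{i\}\comp]} = 0$, contradicting irreducibility. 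Hence $\tilde V_L > \mb 0$. The main (mild) obstacle is precisely this interplay: the Z-matrix property alone yields only $\tilde V_L \geq \mb 0$, and one must invoke irreducibility of the Jacobian to upgrade this to strict positivity.

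Finally, for injectivity I would compare two positive vectors $\tilde V_L, \hat V_L$ with $J(\tilde V_L) = J(\hat V_L)$. Equating off-diagonal entries gives $\big((\tilde V_L)_i - (\hat V_L)_i\big)(Y_{LL})_{ij} = 0$ for all $i \neq j$; choosing for each $i$ an index $j$ with $(Y_{LL})_{ij} \neq 0$ (again by irreducibility of $Y_{LL}$) yields $(\tilde V_L)_i = (\hat V_L)_i$ for every $i$, hence $\tilde V_L = \hat V_L$. I note that this part does not even require positivity, only irreducibility of $Y_{LL}$.
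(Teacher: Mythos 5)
Your proposal is correct and takes essentially the same approach as the paper: both extract the off-diagonal identity $\bigl(-\pdd{P_c}{V_L}(\tilde V_L)\bigr)_{ij} = (\tilde V_L)_i (Y_{LL})_{ij}$ from \eqref{eqn:jacobian of f} and combine the Z-matrix sign constraint with irreducibility (of $Y_{LL}$ and of the Jacobian itself) to force $\tilde V_L>\mb 0$, with an identical injectivity argument. The only cosmetic differences are that you swap the order of the two steps in the ``only if'' direction (first $\tilde V_L\ge\mb 0$ from the Z-property, then strictness from irreducibility, whereas the paper first gets $(\tilde V_L)_i\neq 0$ from irreducibility and then strict positivity from the sign constraint) and that in the ``if'' direction you verify preservation of the off-diagonal zero pattern under positive row scaling directly rather than citing Propositions~\ref{proposition:diagonal properties} and~\ref{proposition:sum of irreducible Z-matrices}.
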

\begin{proof}
($\Rightarrow$): 
Let $i\in\boldsymbol n$ and note that we have 
\begin{multline}\label{eqn:jacobian Z-matrix:submatrix of jacobian}
\l(-\tpdd {P_c} {V_L}(x)\r)_{[i,i\comp]} = ([x]Y_{LL} - [Y_{LL}(V_L^* - x)])_{[i,i\comp]} \\ =([x]Y_{LL})_{[i,i\comp]}  = x_i (Y_{LL})_{[i,i\comp]}. 
\end{multline}
If $x_i= 0$, then $\l(-\tpdd {P_c} {V_L}(x)\r)_{[i,i\comp]}=0$ by \eqref{eqn:jacobian Z-matrix:submatrix of jacobian}, which violates the irreducibility of the $-\pdd {P_c} {V_L}(x)$. Hence $x_i\neq 0$. 
Since $Y_{LL}$ is a Z-matrix, we have $(Y_{LL})_{[i,i\comp]}\le 0$. Since $Y_{LL}$ is irreducible we know that $(Y_{LL})_{[i,i\comp]}\neq 0$, by definition. It follows that there exists at least one negative element in $(Y_{LL})_{[i,i\comp]}$. Hence, there exists a $j\neq i$ so that $(Y_{LL})_{ij}<0$. It follows from \eqref{eqn:jacobian Z-matrix:submatrix of jacobian} that 
\begin{align}\label{eqn:jacobian Z-matrix:nonnegative element}
\l(-\tpdd {P_c} {V_L}(x)\r)_{ij} = x_i (Y_{LL})_{ij}.
\end{align}
The right-hand side of \eqref{eqn:jacobian Z-matrix:nonnegative element} is nonzero, $(Y_{LL})_{ij}$ is negative, and the left-hand side of \eqref{eqn:jacobian Z-matrix:nonnegative element} is nonpositive since $-\tpdd {P_c} {V_L}(x)$ is a Z-matrix. This implies that $x_i$ is positive. Hence $x>\mb 0$.

($\Leftarrow$):
The matrix $Y_{LL}$ is an irreducible Z-matrix. 
Since $x>\mb 0$, also $[x]Y_{LL}$ is an irreducible Z-matrix, by \ref{proposition:diagonal properties:irreducible multiplication} and \ref{proposition:diagonal properties:Z-matrix multiplication} of Proposition~\ref{proposition:diagonal properties}.
Consequently, $[x]Y_{LL} - [Y_{LL}(V_L^* - x)]$ is an irreducible Z-matrix, by \ref{proposition:diagonal properties:irreducible addition} and \ref{proposition:diagonal properties:Z-matrix addition} of Proposition~\ref{proposition:diagonal properties}. 

Let $x,z>\mb 0$ satisfy $\pdd {P_c} {V_L}(x) = \pdd {P_c} {V_L}(z)$. By \eqref{eqn:jacobian Z-matrix:nonnegative element} this implies that for all $i$ there exists a $j$ such that $x_i (Y_{LL})_{ij} = z_i (Y_{LL})_{ij}$. Since $(Y_{LL})_{ij}\neq 0$, it follows that $x_i=z_i$, and hence $x=z$.
\end{proof}

Lemma~\ref{lemma:jacobian Z-matrix} states that the Jacobian of $P_c$ at an operating point $\tilde V_L > \mb 0$ is unique to $\tilde V_L$. 
This implies that operating points are uniquely identified by properties of the associated Jacobian.
The following result identifies all long-term voltage stable operating points (see Definition~\ref{definition:long-term voltage stable}) by means of properties of the associated Jacobian.
\begin{proposition}\label{proposition:jacobian M-matrix}
The set $\mc D$ of long-term voltage stable operating points equals
\begin{align*}
\mc D = \set{\tilde V_L >\mb 0}{-\pdd {P_c} {V_L} (\tilde V_L) \text{ is a nonsingular M-matrix}}.
\end{align*}
\end{proposition}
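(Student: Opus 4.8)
The plan is to collapse the statement to a single matrix-theoretic equivalence using Lemma~\ref{lemma:jacobian Z-matrix}. First I would note that the existential quantifier in the definition of $\mc D$ is essentially vacuous: by \eqref{eqn:definition of f}, every $\tilde V_L > \mb 0$ is an operating point associated to $\tilde P_c := P_c(\tilde V_L)$, so $\tilde V_L \in \mc D$ is equivalent to the two conditions that $\tilde V_L > \mb 0$ and that $\pdd{P_c}{V_L}(\tilde V_L)$ is nonsingular with $\pdd{P_c}{V_L}(\tilde V_L)\inv < 0$ (Definition~\ref{definition:long-term voltage stable}). Writing $J := \pdd{P_c}{V_L}(\tilde V_L)$, this is the same as requiring $-J$ to be nonsingular with $(-J)\inv > \mb 0$ (entrywise strictly positive), since $(-J)\inv = -J\inv$. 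Thus the entire proposition reduces to showing, for $\tilde V_L > \mb 0$, the equivalence of ``$-J$ is a nonsingular M-matrix'' and ``$-J$ is nonsingular with $(-J)\inv > \mb 0$''.

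Next I would invoke Lemma~\ref{lemma:jacobian Z-matrix}, which guarantees that for $\tilde V_L > \mb 0$ the matrix $-J$ is an \emph{irreducible} Z-matrix. This places us exactly in the setting of the classical equivalence between irreducible nonsingular M-matrices and matrices with a positive inverse. For the inclusion $\mc D \subseteq \{\,\cdot\,\}$, starting from $\tilde V_L \in \mc D$ I have $(-J)\inv > \mb 0$, and I would cite the appendix results on M-matrices (equivalently, the converse direction of the standard characterization of nonsingular M-matrices via nonnegativity of the inverse) to conclude that the irreducible Z-matrix $-J$ is a nonsingular M-matrix. For the reverse inclusion, starting from $\tilde V_L > \mb 0$ with $-J$ a nonsingular M-matrix, I would apply \cite[Thm.~5.12]{fiedler1986special} — the same result already used in the proof of Lemma~\ref{lemma:positive open-circuit voltages} — to an irreducible nonsingular M-matrix to obtain $(-J)\inv > \mb 0$, hence $J\inv = -(-J)\inv < 0$, so $\tilde V_L$ is long-term voltage stable and lies in $\mc D$.

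The step I expect to be most delicate is the interplay between irreducibility and the \emph{strictness} of the inequalities. Definition~\ref{definition:long-term voltage stable} demands $J\inv < 0$ with all entries strictly negative, not merely nonpositive, and the general Z-matrix characterization only yields a \emph{nonnegative} inverse. It is precisely the irreducibility supplied by Lemma~\ref{lemma:jacobian Z-matrix} (which traces back to Assumption~\ref{assumption:loads connected}) that upgrades nonnegativity to strict positivity via \cite[Thm.~5.12]{fiedler1986special}, making the two conditions match exactly. I would therefore be careful to keep track of irreducibility throughout and to state explicitly that the forward direction uses a converse characterization of nonsingular M-matrices (from the appendix), while the backward direction uses the positivity-of-inverse theorem; everything else is the routine translation between $-J$ and $J$.
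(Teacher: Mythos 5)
Your proposal is correct and follows essentially the same route as the paper's proof: reduce membership in $\mc D$ to the condition $\bigl(-\pdd{P_c}{V_L}(\tilde V_L)\bigr)\inv > 0$ via Definition~\ref{definition:long-term voltage stable}, invoke Lemma~\ref{lemma:jacobian Z-matrix} to place $-\pdd{P_c}{V_L}(\tilde V_L)$ in the class of (irreducible) Z-matrices, and conclude by the characterization in \cite[Thm.~5.12]{fiedler1986special} that a positive inverse is equivalent to being a nonsingular M-matrix. You are somewhat more explicit than the paper about the vacuity of the existential quantifier and about irreducibility upgrading a nonnegative inverse to a strictly positive one, but these points are exactly what the paper's single citation of \cite[Thm.~5.12]{fiedler1986special} is carrying.
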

\begin{proof}
Let $\tilde V_L>\mb 0$ be an operating point associated to some vector of power demands $\tilde P_c$.
Recall from Definition~\ref{definition:long-term voltage stable} that $\tilde V_L$ is long-term voltage stable if $\pdd {P_c} {V_L}(\tilde V_L)$ is nonsingular and $\pdd {P_c} {V_L}(\tilde V_L)\inv = \pdd {V_L}{P_c} (\tilde P_c)$ is a matrix with negative elements.
Since $\tilde V_L>\mb 0$, $-\pdd {P_c} {V_L}(\tilde V_L)$ is a Z-matrix by Lemma~\ref{lemma:jacobian Z-matrix}. If follows by \cite[Thm. 5.12]{fiedler1986special} that  $-\pdd {P_c} {V_L}(\tilde V_L)\inv > 0$ if and only if $-\pdd {P_c} {V_L}(\tilde V_L)$ is a nonsingular M-matrix. 
\end{proof}

Recall from Definition~\ref{definition:long-term voltage semi-stable} that an operating point is long-term voltage semi-stable if it lies in the the closure of $\mc D$.
Proposition~\ref{proposition:jacobian M-matrix} implies the following characterization of such operating points.
\begin{corollary}\label{corollary:boundary of D}
The closure and boundary of $\mc D$ satisfies
\begin{align*}
\cl{\mc D} &= \set{\tilde V_L>\mb 0}{-\pdd {P_c} {V_L} (\tilde V_L) \text{ is an M-matrix}};\\
\partial \mc D &= \set{\tilde V_L>\mb 0}{-\pdd {P_c} {V_L} (\tilde V_L) \text{ is a singular M-matrix}}.%
\end{align*}
\end{corollary}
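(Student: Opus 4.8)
The plan is to reduce everything to the Perron root and exploit the explicit dependence of the Jacobian on $\tilde V_L$. For $\tilde V_L>\mb 0$ the matrix $A(\tilde V_L):=-\pdd{P_c}{V_L}(\tilde V_L)$ is an irreducible Z-matrix by Lemma~\ref{lemma:jacobian Z-matrix}, so by Proposition~\ref{proposition:Z-matrix perron result} its Perron root $g(\tilde V_L)$ is a well-defined, real, simple eigenvalue, and it is continuous in $\tilde V_L$ on the open orthant $U:=\{\tilde V_L>\mb 0\}$ because simple eigenvalues depend continuously on the (affinely varying) matrix entries. Since for an irreducible Z-matrix the Perron root is the eigenvalue of smallest real part, Definitions~\ref{definition:Z-matrix}--\ref{definition:M-matrix} translate into: $A(\tilde V_L)$ is a nonsingular M-matrix iff $g(\tilde V_L)>0$, an M-matrix iff $g(\tilde V_L)\ge 0$, and a singular M-matrix iff $g(\tilde V_L)=0$. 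Proposition~\ref{proposition:jacobian M-matrix} thus reads $\mc D=\{\tilde V_L\in U: g(\tilde V_L)>0\}$, a relatively open subset of $U$, and by continuity of $g$ the easy inclusions $\cl{\mc D}\cap U\subseteq\{g\ge 0\}$ and $(\partial\mc D)\cap U\subseteq\{g=0\}$ are immediate.

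Rewriting \eqref{eqn:jacobian of f} via \eqref{eqn:open-circuit voltages} and \eqref{eqn:source-injected currents} gives $A(\tilde V_L)=[\tilde V_L]Y_{LL}+[Y_{LL}\tilde V_L]-[\mc I_L^*]$, whence scaling produces the identity $A(t\tilde V_L)=tA(\tilde V_L)+(t-1)[\mc I_L^*]$. The crucial step is to show $\{g=0\}\subseteq\cl{\mc D}$. Given $\tilde V_L$ with $g(\tilde V_L)=0$, let $v>\mb 0$ be the associated Perron vector, so $A(\tilde V_L)v=\mb 0$. For every $t>1$ the identity yields $A(t\tilde V_L)v=(t-1)[\mc I_L^*]v$, which is nonnegative and nonzero because $\mc I_L^*\gneqq\mb 0$ (Lemma~\ref{lemma:positive open-circuit voltages}) and $v>\mb 0$. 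Since $t\tilde V_L>\mb 0$ makes $A(t\tilde V_L)$ an irreducible Z-matrix admitting a positive vector $v$ with $A(t\tilde V_L)v\gneqq\mb 0$, the standard criterion for nonsingular M-matrices among irreducible Z-matrices (Appendix~\ref{subsection:matrix theory}) shows $A(t\tilde V_L)$ is a nonsingular M-matrix, i.e.\ $t\tilde V_L\in\mc D$. Letting $t\downarrow 1$ places $\tilde V_L$ in $\cl{\mc D}$.

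To finish the closure characterization I would verify that no point of $\partial U$ lies in $\cl{\mc D}$, so that the closure stays inside the orthant (as the statement requires $\tilde V_L>\mb 0$). If $\tilde V_L^{(k)}\in\mc D$ converges to some $\bar V_L$ with $\bar V_{L,i}=0$, then each $A(\tilde V_L^{(k)})$ is a nonsingular M-matrix and hence has nonnegative diagonal, so the $i$-th diagonal entry of the limit, namely $\sum_{j\neq i}(Y_{LL})_{ij}\bar V_{L,j}-(\mc I_L^*)_i$, is simultaneously $\ge 0$ and $\le 0$ and must vanish. As every summand is nonpositive, this forces $(\mc I_L^*)_i=0$ and $\bar V_{L,j}=0$ for every load-neighbour $j$ of $i$. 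Irreducibility of $Y_{LL}$ (Assumption~\ref{assumption:loads connected}) then propagates the zero set to all of $\boldsymbol n$, giving $\bar V_L=\mb 0$ and $\mc I_L^*=\mb 0$, contradicting $\mc I_L^*\gneqq\mb 0$. Hence $\cl{\mc D}\subseteq U$, and combining with the first paragraph gives $\cl{\mc D}=\{\tilde V_L>\mb 0: g(\tilde V_L)\ge 0\}$; since $\partial\mc D=\cl{\mc D}\bs\mc D=\{g\ge 0\}\bs\{g>0\}=\{g=0\}$, these are exactly the asserted M-matrix and singular-M-matrix descriptions.

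I expect the main obstacle to be the third-paragraph argument that the closure does not escape the positive orthant (equivalently, that a long-term voltage semi-stable operating point remains a genuine operating point with $\tilde V_L>\mb 0$). The scaling identity of the second paragraph makes the inner core of the proof --- that singular-M-matrix points are approximated by nonsingular ones --- essentially mechanical once the M-matrix criterion is invoked, so the only substantive care is needed at $\partial U$, where irreducibility and $\mc I_L^*\gneqq\mb 0$ do the work.
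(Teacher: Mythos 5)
Your proof is correct, and it is genuinely more detailed than the paper's, which disposes of the corollary in a single line (``follows directly from Proposition~\ref{proposition:perron root M-matrix}''). That one-liner hides exactly the two points you isolate: (a) continuity of the Perron root $g$ only yields $\cl{\mc D}\cap U\subset\{g\ge 0\}$ and $\partial\mc D\cap U\subset\{g=0\}$, so one still needs every singular-M-matrix point to be a limit of points of $\mc D$; and (b) $\cl{\mc D}$ must not meet the boundary of the positive orthant, or the displayed set equalities would fail. In the paper both points are implicitly absorbed into the later machinery of Lemma~\ref{lemma:characterization pddf M-matrix} and Theorem~\ref{theorem:parametrization of D}: the parametrization $\tilde V_L=\tfrac 1 2 h(\lambda)\inv[\lambda](\mc I_L^*+r\mb 1)$ with $r\downarrow 0$ delivers the density in (a), and the clause ``in which case we have $\tilde V_L>\mb 0$'' of Lemma~\ref{lemma:characterization pddf M-matrix} delivers (b). You replace this with two direct arguments that work without the parametrization: the scaling identity $A(t\tilde V_L)=tA(\tilde V_L)+(t-1)[\mc I_L^*]$, which approximates a singular point along the ray $t\downarrow 1$ (a nice touch, since the corollary is stated \emph{before} the parametrization is available), and the diagonal-entry/connectivity argument excluding the orthant boundary, which correctly exploits $\mc I_L^*\gneqq\mb 0$ and the irreducibility of $Y_{LL}$. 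What the paper's route buys is economy (the work is done once, inside Lemma~\ref{lemma:characterization pddf M-matrix}); what yours buys is a self-contained corollary with no forward references.

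One citation slip to fix: the semipositivity criterion you invoke\textemdash an irreducible Z-matrix $B$ admitting $v>\mb 0$ with $Bv\gneqq\mb 0$ is a nonsingular M-matrix\textemdash is classical but is \emph{not} among the results listed in Appendix~\ref{subsection:matrix theory}. Since it is the hinge of your second paragraph, supply its one-line proof from Proposition~\ref{proposition:Z-matrix perron result}: the transpose $B\T$ is again an irreducible Z-matrix with the same Perron root $r$, so it has a Perron vector $u>\mb 0$ with $B\T u=ru$; then
\begin{align*}
r\, u\T v \;=\; u\T B v \;>\; 0,
\end{align*}
because $u>\mb 0$ and $Bv\gneqq\mb 0$, whence $r>0$ and $B$ is a nonsingular M-matrix by Proposition~\ref{proposition:perron root M-matrix}. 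With that patch, every step of your argument is justified from results actually stated in the paper.
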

The proof follows directly from Proposition~\ref{proposition:perron root M-matrix}.

\begin{remark}\label{remark:equivalent definitions of long-term voltage stability}
Many equivalent characterizations of long-term voltage stable operating points may be derived from Proposition~\ref{proposition:jacobian M-matrix}.
Indeed, the paper \cite{PLEMMONS1977175} lists numerous equivalent conditions for when a Z-matrix is a nonsingular M-matrix.
In particular, it follows from property $\mathrm J_{29}$ of \cite{PLEMMONS1977175} together with Proposition~\ref{proposition:diagonal properties}.\ref{proposition:diagonal properties:M-matrix multiplication} that if $\tilde V_L>\mb 0$, and hence $-\pdd {P_c} {V_L} (\tilde V_L)$ is a Z-matrix by Lemma~\ref{lemma:jacobian Z-matrix}, then $-\pdd {P_c} {V_L} (\tilde V_L)$ is a nonsingular M-matrix if and only if $[\tilde V_L]\inv \pdd {P_c} {V_L} (\tilde V_L)$ is Hurwitz stable. This shows that $\tilde V_L>\mb 0$ is long-term voltage stable if and only if $[\tilde V_L]\inv \pdd {P_c} {V_L} (\tilde V_L)$ is Hurwitz stable. The latter property coincides with the definition of voltage-regularity found in \cite{matveev2020tool}. Alternatively, without invoking Proposition~\ref{proposition:diagonal properties}.\ref{proposition:diagonal properties:M-matrix multiplication} it follows that $\tilde V_L>\mb 0$ is long-term voltage stable if and only if $\pdd {P_c} {V_L} (\tilde V_L)$ is Hurwitz stable.
Similarly it can be shown that $\tilde V_L>\mb 0$ is long-term voltage semi-stable if and only if $\pdd {P_c} {V_L} (\tilde V_L)$ is Hurwitz semi-stable\footnote{By a Hurwitz semi-stable matrix we mean a matrix for which all its eigenvalues have negative real part, with the possible exception of a semi-simple eigenvalue 0. 
The eigenvalue 0 of a singular symmetric M-matrices is semisimple.
}.
\end{remark}

\subsection{A parametrization of $\mc D$}\label{subsection:parametrization of D}
Proposition~\ref{proposition:jacobian M-matrix} and Corollary~\ref{corollary:boundary of D} allow us to deduce a parametrization for the set $\mc D$ of all long-term voltage stable operating points. 
Such a parametrization gives a constructive method to determine where such operating points lie in the voltage domain, as opposed to testing at which operating points of interest the Jacobian of $P_c$ is Hurwitz stable (see Remark~\ref{remark:equivalent definitions of long-term voltage stability}).

We introduce the following definitions.
For a vector $\lambda\in\RR^n$ we introduce the $n\times n$ matrix 
\begin{align*}
h(\lambda) := \tfrac 1 2 ([\lambda]Y_{LL} + Y_{LL}[\lambda]).
\end{align*}
Note that $h(\mb 1) = Y_{LL}$, and that 
\begin{align}\label{eqn:rewriting quadratic}
x\T h(\lambda) x = x\T [\lambda] Y_{LL} x = \lambda\T [x]Y_{LL}x.
\end{align}
In addition we define the set
\begin{align}\label{eqn:Lambda}
\Lambda := \set{\lambda}{%
h(\lambda) \text{ is positive definite}}.
\end{align}
The set $\Lambda$ is studied in Appendix~\ref{subsection:closure of Lambda}. In particular, Lemma~\ref{lemma:Lambda convex} shows that $\Lambda$ is convex, and Lemma~\ref{lemma:Lambda positive} shows that $\Lambda$ lies in the positive orthant. 

The following theorem extends Lemma~\ref{lemma:jacobian Z-matrix}, and allows us to parametrize the sets $\mc D$, $\cl{\mc D}$, and $\partial\mc D$.
\begin{lemma}\label{lemma:characterization pddf M-matrix}
Let $r\in\RR$ and $\lambda\in\RR^n$ such that $r\ge 0$ and $\lambda>\mb 0$.
The Jacobian $-\pdd {P_c} {V_L}(\tilde V_L)\T$ is an irreducible M-matrix with Perron root $r$ and Perron vector $\lambda$ if and only if $h(\lambda)$ is positive definite (\ie, $\lambda\in\Lambda$) and $\tilde V_L$ satisfies 
\begin{align}\label{eqn:characterization pddf M-matrix:definition of x}
\tilde V_L = \tfrac 1 2 h(\lambda)\inv[\lambda](\mc I_L^* + r \mb 1),
\end{align}
in which case we have $\tilde V_L>\mb 0$.
\end{lemma}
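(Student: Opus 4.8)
The plan is to collapse the entire statement onto the single eigenvector identity $A\lambda = r\lambda$, where $A := -\pdd{P_c}{V_L}(\tilde V_L)\T$. First I would transpose \eqref{eqn:jacobian of f} to record $A = Y_{LL}[\tilde V_L] - [Y_{LL}(V_L^* - \tilde V_L)]$, and then compute $A\lambda$ directly. Using $[\tilde V_L]\lambda = [\lambda]\tilde V_L$, the elementary commutation $[d]\lambda = [\lambda]d$ (valid for any vectors $d,\lambda$) applied to the diagonal term with $d = Y_{LL}(V_L^*-\tilde V_L)$, and $Y_{LL}V_L^* = \mc I_L^*$ from \eqref{eqn:open-circuit voltages}, the expression telescopes to $A\lambda = (Y_{LL}[\lambda] + [\lambda]Y_{LL})\tilde V_L - [\lambda]\mc I_L^* = 2h(\lambda)\tilde V_L - [\lambda]\mc I_L^*$. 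Equating this to $r\lambda = [\lambda]\,r\mb 1$ gives the identity $2h(\lambda)\tilde V_L = [\lambda](\mc I_L^* + r\mb 1)$, which is exactly \eqref{eqn:characterization pddf M-matrix:definition of x} as soon as $h(\lambda)$ is invertible. So everything reduces to matching the Perron/M-matrix hypotheses with positive definiteness of $h(\lambda)$.

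For the forward implication I would assume $A$ is an irreducible M-matrix with Perron root $r \ge 0$ and Perron vector $\lambda > \mb 0$. An irreducible M-matrix is in particular an irreducible Z-matrix, hence so is its transpose $-\pdd{P_c}{V_L}(\tilde V_L)$, and Lemma~\ref{lemma:jacobian Z-matrix} forces $\tilde V_L > \mb 0$. The identity above reads $2h(\lambda)\tilde V_L = [\lambda](\mc I_L^* + r\mb 1)$. To deduce $\lambda \in \Lambda$ I note that $h(\lambda)$ is a symmetric Z-matrix with off-diagonal entries $h(\lambda)_{ij} = \tfrac12(\lambda_i + \lambda_j)(Y_{LL})_{ij}$; since $\lambda > \mb 0$ these share the sign pattern of the irreducible Z-matrix $Y_{LL}$, so $h(\lambda)$ is irreducible. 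Moreover $h(\lambda)$ maps the positive vector $\tilde V_L$ to $\tfrac12[\lambda](\mc I_L^* + r\mb 1)$, which is nonnegative and nonzero because $\lambda > \mb 0$ and $\mc I_L^* \gneqq \mb 0$ by Lemma~\ref{lemma:positive open-circuit voltages}. The standard characterization of nonsingular M-matrices among irreducible Z-matrices (a positive vector sent to a semipositive vector forces nonsingularity) then makes $h(\lambda)$ a nonsingular M-matrix, hence positive definite because it is symmetric; thus $\lambda \in \Lambda$ and \eqref{eqn:characterization pddf M-matrix:definition of x} follows by inverting $h(\lambda)$.

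For the converse I would assume $\lambda \in \Lambda$, $r \ge 0$, $\lambda > \mb 0$ and that $\tilde V_L$ is given by \eqref{eqn:characterization pddf M-matrix:definition of x}. The same observation shows $h(\lambda)$ is an irreducible, symmetric, positive definite Z-matrix, hence an irreducible nonsingular M-matrix, so $h(\lambda)\inv > 0$ by \cite[Thm. 5.12]{fiedler1986special}. Applying $h(\lambda)\inv$ to the nonnegative nonzero vector $[\lambda](\mc I_L^* + r\mb 1)$ yields $\tilde V_L > \mb 0$, which is the closing assertion of the lemma. With $\tilde V_L > \mb 0$ in hand, Lemma~\ref{lemma:jacobian Z-matrix} makes $A$ an irreducible Z-matrix, and reading the first paragraph's computation backwards turns \eqref{eqn:characterization pddf M-matrix:definition of x} into $A\lambda = r\lambda$. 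Thus $A$ is an irreducible Z-matrix with the positive eigenvector $\lambda$; by Proposition~\ref{proposition:Z-matrix perron result} such an eigenvector is unique up to scaling and corresponds to the Perron root, so $r$ is the Perron root and $\lambda$ the Perron vector, and $r \ge 0$ identifies $A$ as an M-matrix.

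I expect the one genuinely delicate point to be the positive definiteness of $h(\lambda)$ in the forward direction, specifically the borderline case $r = 0$, where $h(\lambda)\tilde V_L$ is only semipositive (nonnegative and nonzero) rather than strictly positive. This is precisely where the irreducibility of $h(\lambda)$ must be used, via the sharp Perron--Frobenius statement that an irreducible nonnegative $B$ with $Bu \lneqq su$ for some $u > \mb 0$ has spectral radius strictly below $s$; the strictly-positive version of the criterion would only cover $r > 0$. The remaining steps are bookkeeping around the identity $A\lambda = r\lambda$ and the uniqueness of the Perron vector.
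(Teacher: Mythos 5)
Your proposal is correct and takes essentially the same route as the paper: the same eigenvector identity $2h(\lambda)\tilde V_L = [\lambda](\mc I_L^* + r\mb 1)$ extracted from $-\pdd{P_c}{V_L}(\tilde V_L)\T\lambda = r\lambda$, Lemma~\ref{lemma:jacobian Z-matrix} to force $\tilde V_L>\mb 0$, irreducibility of $h(\lambda)$ inherited from $Y_{LL}$, and positive definiteness of $h(\lambda)$ via a Perron--Frobenius pairing argument\textemdash the paper's multiplication of the identity by the Perron vector $v$ of $h(\lambda)$ is exactly the proof of the sharp semipositivity criterion you invoke, and you correctly identify the borderline case $r=0$ as the place where irreducibility is essential. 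Your converse is if anything more explicit than the paper's ``reverse the steps,'' filling in $h(\lambda)\inv>0$ to get $\tilde V_L>\mb 0$ and the standard left-Perron-vector argument showing a positive eigenvector of an irreducible Z-matrix must correspond to the Perron root (a one-line fact slightly beyond the literal statement of Proposition~\ref{proposition:Z-matrix perron result}, but standard).
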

\begin{proof}
($\Rightarrow$):
The matrix $Y_{LL}$ is an irreducible Z-matrix and $\lambda>\mb 0$, and so $h(\lambda)$ is an irreducible Z-matrix by Propositions~\ref{proposition:diagonal properties} and \ref{proposition:sum of irreducible Z-matrices}.
We let $s$ and $v>\mb 0$ denote respectively the Perron root and a Perron vector of $h(\lambda)$.
The matrix $-\pdd {P_c} {V_L} (\tilde V_L)\T$ is an M-matrix, and hence a Z-matrix. By Lemma~\ref{lemma:jacobian Z-matrix} we have $\tilde V_L>\mb 0$.
Using the fact that $(r,\lambda)$ is an eigenpair to $-\pdd {P_c} {V_L}(\tilde V_L)\T$, we observe that %
\begin{align*}
r\lambda &= -\pdd {P_c} {V_L} (\tilde V_L)\T \lambda = -([Y_{LL}(V_L^*-\tilde V_L)] - Y_{LL}[\tilde V_L])\lambda\\
&=-[\lambda]Y_{LL}V_L^* + [\lambda]Y_{LL} \tilde V_L + Y_{LL}[\lambda]\tilde V_L \\&= -[\lambda]\mc I_L^* + 2 h(\lambda)\tilde V_L.
\end{align*}
By rearranging the terms, it follows that
\begin{align}\label{eqn:characterization pddf M-matrix:x relation}
[\lambda]\mc I_L^* + r\lambda = 2h(\lambda)\tilde V_L.
\end{align}
Multiplying \eqref{eqn:characterization pddf M-matrix:x relation} by $v\T$ results in 
\begin{align}\label{eqn:characterization pddf M-matrix:main argument}
v\T ([\lambda]\mc I_L^* + r\lambda) = 2v\T h(\lambda)\tilde V_L = 2s v\T \tilde V_L.
\end{align}
Since $\tilde V_L>\mb 0$, $v>\mb 0$, $\lambda>\mb 0$, $r\ge 0$ and $\mc I_L^*\gneqq\mb 0$, it follows\linebreak that the left hand side of \eqref{eqn:characterization pddf M-matrix:main argument} is positive. Since  $v\T \tilde V_L$ is also positive, it follows from \eqref{eqn:characterization pddf M-matrix:main argument} that the Perron root $s$ is positive. Hence, $h(\lambda)$ is a nonsingular M-matrix by Proposition~\ref{proposition:perron root M-matrix}, and \eqref{eqn:characterization pddf M-matrix:definition of x} follows from \eqref{eqn:characterization pddf M-matrix:x relation}. Since $h(\lambda)$ is a symmetric nonsingular M-matrix, it is positive definite and $\lambda\in\Lambda$. 

($\Leftarrow$): If $\lambda\in\Lambda$, then $\lambda>\mb 0$ by Lemma~\ref{lemma:Lambda positive}. The rest of the proof follows by reversing the steps of the ``$\Rightarrow$''-part.
\end{proof}

Lemma~\ref{lemma:characterization pddf M-matrix} allows for an explicit parametrization of the set $\mc D$ by $\Lambda$, and without relying on properties of the Jacobian of $P_c$.
Note that \eqref{eqn:characterization pddf M-matrix:definition of x} is invariant under scaling of $\lambda$, and hence the vectors $\lambda$ may be normalized. 
For this purpose we define
\begin{align}\label{eqn:Lambda_1}
\Lambda_1:=\Lambda\cap\set{\lambda>\mb 0}{\|\lambda\|_1=\mb 1\T \lambda = 1},
\end{align}
which is a convex set, as it is the intersection of convex sets.
Appendix~\ref{subsection:closure of Lambda} lists several properties of the closure of $\Lambda_1$.

\begin{theorem}[\mainresult{parametrization of D}]\label{theorem:parametrization of D}
The set $\mc D$ of all long-term voltage stable operating point, its closure $\cl{\mc D}$ and its boundary $\partial\mc D$ are parametrized by
\begin{align*}
\mc D &= \set{\tfrac 1 2 h(\lambda)\inv[\lambda](\mc I_L^* + r \mb 1)}{\lambda\in\Lambda_1, r>0}\\
\cl{\mc D} &= \set{\tfrac 1 2 h(\lambda)\inv[\lambda](\mc I_L^* + r \mb 1)}{\lambda\in\Lambda_1, r\ge 0}\\
\partial\mc D &= \set{\tfrac 1 2 h(\lambda)\inv[\lambda]\mc I_L^*}{\lambda\in\Lambda_1}.
\end{align*}
Furthermore, the map
\begin{align}\label{theorem:parametrization of D:map}
(\lambda,r) \mapsto \tfrac 1 2 h(\lambda)\inv[\lambda](\mc I_L^* + r \mb 1)
\end{align}
from $\Lambda_1\times\RR_{\ge 0}$ to $\cl{\mc D}$ is a bicontinuous map, and the sets $\mc D$, $\cl{\mc D}$ and $\partial\mc D$ are simply connected.
\end{theorem}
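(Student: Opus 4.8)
The plan is to translate the matrix-theoretic descriptions of $\mc D$, $\cl{\mc D}$ and $\partial\mc D$ from Proposition~\ref{proposition:jacobian M-matrix} and Corollary~\ref{corollary:boundary of D} into the voltage domain by means of Lemma~\ref{lemma:characterization pddf M-matrix}, using the sign of the Perron root to separate the three cases. The bridge is that, for $\tilde V_L>\mb 0$, the matrix $-\pdd {P_c} {V_L}(\tilde V_L)$ is an \emph{irreducible} Z-matrix by Lemma~\ref{lemma:jacobian Z-matrix}, so it carries a well-defined Perron root $r$ and a positive Perron vector $\lambda$, unique up to scaling, by Proposition~\ref{proposition:Z-matrix perron result}; the same holds for its transpose, whose spectrum and Z-matrix structure coincide with those of $-\pdd {P_c} {V_L}(\tilde V_L)$, so being a (nonsingular) M-matrix is equivalent for the two. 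By the characterization of M-matrices through the Perron root (Proposition~\ref{proposition:perron root M-matrix}), $-\pdd {P_c} {V_L}(\tilde V_L)$ is a nonsingular M-matrix exactly when $r>0$ and a possibly singular M-matrix exactly when $r\ge 0$.

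First I would prove the three set equalities. For ``$\subseteq$'' in the formula for $\mc D$, take $\tilde V_L\in\mc D$; then $\tilde V_L>\mb 0$ and $-\pdd {P_c} {V_L}(\tilde V_L)$ is a nonsingular M-matrix by Proposition~\ref{proposition:jacobian M-matrix}, so its Perron root satisfies $r>0$. Normalizing the positive Perron vector of the transpose so that $\|\lambda\|_1=1$ and applying Lemma~\ref{lemma:characterization pddf M-matrix} yields $\lambda\in\Lambda_1$ and $\tilde V_L=\tfrac 1 2 h(\lambda)\inv[\lambda](\mc I_L^*+r\mb 1)$, as required. For ``$\supseteq$'', take $\lambda\in\Lambda_1$ and $r>0$ and define $\tilde V_L$ by \eqref{eqn:characterization pddf M-matrix:definition of x}; Lemma~\ref{lemma:characterization pddf M-matrix} makes $-\pdd {P_c} {V_L}(\tilde V_L)\T$ an irreducible M-matrix with Perron root $r>0$, hence nonsingular, so $\tilde V_L\in\mc D$ by Proposition~\ref{proposition:jacobian M-matrix}. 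The identities for $\cl{\mc D}$ and $\partial\mc D$ follow by the same argument with $r\ge 0$ and $r=0$, respectively, invoking Corollary~\ref{corollary:boundary of D} in place of Proposition~\ref{proposition:jacobian M-matrix}; for $r=0$ the formula collapses to $\tfrac 1 2 h(\lambda)\inv[\lambda]\mc I_L^*$. Scale-invariance of \eqref{eqn:characterization pddf M-matrix:definition of x} in $\lambda$ is what permits the normalization to $\Lambda_1$ throughout.

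Next I would show that the map \eqref{theorem:parametrization of D:map} is a homeomorphism. Continuity is immediate: on $\Lambda_1$ the matrix $h(\lambda)$ is positive definite, hence invertible, and inversion together with the remaining operations are continuous. Bijectivity onto $\cl{\mc D}$ is the set equality just proved together with injectivity, and injectivity holds because the pair $(\lambda,r)$ is recovered from $\tilde V_L$ as the normalized Perron vector and the Perron root of $-\pdd {P_c} {V_L}(\tilde V_L)\T$, which are uniquely determined by Proposition~\ref{proposition:Z-matrix perron result}. For continuity of the inverse I would argue that $\tilde V_L\mapsto\pdd {P_c} {V_L}(\tilde V_L)$ is polynomial, hence continuous, and that the Perron root, being a \emph{simple} eigenvalue of an irreducible Z-matrix, depends continuously on the matrix entries; because it is simple, a corresponding eigenvector can be chosen to vary continuously, and after $\|\cdot\|_1$-normalization this produces $\lambda$ continuously while $r$ is read off as the eigenvalue. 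I expect this continuity of the Perron data — the simple eigenvalue and its eigenvector as functions of $\tilde V_L$ — to be the main obstacle, being the only step that goes beyond algebraic bookkeeping and that relies on perturbation theory for simple eigenvalues.

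Finally, simple connectedness follows formally from the homeomorphism. The domain $\Lambda_1\times\RR_{\ge 0}$ is a product of convex sets ($\Lambda_1$ is convex by \eqref{eqn:Lambda_1}, and $\RR_{\ge 0}$ is convex), hence convex and in particular contractible and simply connected; as its homeomorphic image, $\cl{\mc D}$ is simply connected. Restricting to $r>0$ gives $\mc D\cong\Lambda_1\times\RR_{>0}$, again convex, and restricting to $r=0$ gives $\partial\mc D\cong\Lambda_1\times\{0\}\cong\Lambda_1$, convex; both are therefore simply connected as well.
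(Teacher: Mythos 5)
Your proposal is correct and follows essentially the same route as the paper: translating Proposition~\ref{proposition:jacobian M-matrix} and Corollary~\ref{corollary:boundary of D} into the voltage domain via Lemma~\ref{lemma:characterization pddf M-matrix} with the Perron root sign separating the cases $r>0$, $r\ge 0$, $r=0$, recovering the inverse map from the Perron data of $-\pdd{P_c}{V_L}(\tilde V_L)$ whose continuity the paper likewise secures through Proposition~\ref{proposition:continuity of perron root and vector} (your perturbation-theory concern is exactly what that proposition settles), and transporting simple connectedness from the convex set $\Lambda_1\times\RR_{\ge 0}$ through the homeomorphism. Your explicit remark that the transpose shares the spectrum and Z-matrix structure of the Jacobian is a small point the paper leaves implicit, but it changes nothing in substance.
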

\begin{proof}
Proposition~\ref{proposition:perron root M-matrix} states that a Z-matrix is a (nonsingular/singular) M-matrix if and only if its Perron root $r$ is nonnegative (positive/zero).
Proposition~\ref{proposition:jacobian M-matrix} and Corollary~\ref{corollary:boundary of D} together with Lemma~\ref{lemma:characterization pddf M-matrix} imply that the vector \eqref{eqn:characterization pddf M-matrix:definition of x} with $\lambda\in\Lambda$ lies in $\mc D$, $\cl{\mc D}$ or $\partial\mc D$ if and only if $r$ in \eqref{eqn:characterization pddf M-matrix:definition of x} satisfies respectively $r>0$, $r\ge 0$ or $r=0$.

The map \eqref{theorem:parametrization of D:map} is a continuous bijection from $\Lambda_1\times \RR_{\ge 0}$ to $\cl{\mc D}$, which follows from Lemma~\ref{lemma:characterization pddf M-matrix} and Corollary~\ref{corollary:boundary of D}.
The inverse of the map \eqref{theorem:parametrization of D:map} is described taking $x\in\cl{\mc D}$ and computing the Perron vector $\lambda > \mb 0$ and the Perron root $r$ of $-\pdd {P_c} {V_L}(x)$. By Proposition~\ref{proposition:continuity of perron root and vector}, the Perron root and Perron vector of $-\pdd {P_c} {V_L}(x)$ are continuous in $x$. Hence the inverse of the map \eqref{theorem:parametrization of D:map} is also continuous. %

The set $\Lambda_1\times\RR_{\ge 0}$ is convex, and is therefore simply connected, which is a topological property. Topological properties are preserved by bicontinuous maps, and thus $\cl{\mc D}$ is also simply connected.
The same holds for $\Lambda_1\times\RR_{> 0}$ and $\Lambda_1\times \{0\}$, and hence $\mc D$ and $\partial \mc D$ are simply connected.
\end{proof}

\subsection{The convex hull of $\mc F$ and its boundary}\label{subsection:convex hull of F}
In order to study the set $\mc F$ of feasible power demands, we will be studying its convex hull and its boundary. 
In particular, we show that there is a one-to-one correspondence between points in the boundary of $\mc D$ and points in $\partial\conv{\mc F}$, the boundary of the convex hull of $\mc F$.

To simplify notation, we define for $\lambda\in\Lambda$ the map %
\begin{align}\label{eqn:definition of m}
\varphi (\lambda):=\tfrac 1 2 h(\lambda)\inv [\lambda]\mc I_L^*.
\end{align}
Note that $\varphi (\lambda)$ is invariant under scaling of $\lambda$.

The boundary of $\mc D$ is by definition the set of the operating points which are long-term voltage semi-stable, but not long-term voltage stable.
It follows from Theorem~\ref{theorem:parametrization of D} that $\partial\mc D$ satisfies %
\begin{align}\label{eqn:boundary of D}
\partial \mc D =\set{\varphi (\lambda)}{\lambda\in\Lambda_1} = \varphi (\Lambda_1).
\end{align}

To study the convex hull of $\mc F$ and its boundary, we make use of the following two identities involving $P_c(V_L)$.
\begin{lemma}\label{lemma:sum formula f}
For $x,z\in \RR^n$ we have
\begin{align}\label{eqn:sum formula f}
P_c(x+z) = P_c(x) + \pdd {P_c} {V_L} (x)z - [z]Y_{LL} z.
\end{align}
\end{lemma}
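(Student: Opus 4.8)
The plan is to verify this identity by direct expansion of $P_c(x+z)$ using the definition \eqref{eqn:definition of f}, and then to match the resulting terms against the Jacobian formula \eqref{eqn:jacobian of f}. The single algebraic fact that makes everything fit is the elementary commutation identity $[a]b = [b]a$ for vectors $a,b\in\RR^n$, valid because both sides equal the entrywise (Hadamard) product of $a$ and $b$. Since $P_c$ is quadratic in $V_L$, I expect the identity to be an exact second-order expansion with no approximation, so the task is purely one of bookkeeping.

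First I would exploit the linearity of the diagonal operator, $[x+z] = [x]+[z]$, to write
\begin{align*}
P_c(x+z) &= [x+z]Y_{LL}(V_L^* - x - z) = ([x]+[z])Y_{LL}(V_L^* - x - z).
\end{align*}
Distributing and isolating the purely-$x$ contribution $[x]Y_{LL}(V_L^* - x) = P_c(x)$ then yields
\begin{align*}
P_c(x+z) = P_c(x) - [x]Y_{LL}z + [z]Y_{LL}(V_L^* - x) - [z]Y_{LL}z,
\end{align*}
which cleanly separates into the zeroth-order term $P_c(x)$, two terms linear in $z$, and one term quadratic in $z$.

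Second, I would identify the two linear terms with $\pdd{P_c}{V_L}(x)z$. Applying the commutation identity with $a=z$ and $b=Y_{LL}(V_L^*-x)$ recasts $[z]Y_{LL}(V_L^*-x)$ as $[Y_{LL}(V_L^*-x)]z$, so that
\begin{align*}
-[x]Y_{LL}z + [z]Y_{LL}(V_L^*-x) = \bigl([Y_{LL}(V_L^*-x)] - [x]Y_{LL}\bigr)z = \pdd{P_c}{V_L}(x)z,
\end{align*}
where the last equality is exactly \eqref{eqn:jacobian of f}. Substituting this back leaves the quadratic remainder $-[z]Y_{LL}z$ untouched and produces the claimed formula \eqref{eqn:sum formula f}.

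The computation is entirely routine, and there is no genuine obstacle. The only point demanding care is the step that distinguishes the terms linear in $z$ (which must assemble precisely into the Jacobian) from the lone quadratic term, together with the correct use of $[a]b=[b]a$ to bring $[z]Y_{LL}(V_L^*-x)$ into the diagonal form $[Y_{LL}(V_L^*-x)]z$ appearing in \eqref{eqn:jacobian of f}.
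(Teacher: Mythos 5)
Your proof is correct and takes essentially the same route as the paper: both expand $P_c(x+z)$ via \eqref{eqn:definition of f}, use $[x+z]=[x]+[z]$ and the commutation $[z]Y_{LL}(V_L^*-x)=[Y_{LL}(V_L^*-x)]z$, and recognize the linear-in-$z$ terms as $\pdd{P_c}{V_L}(x)z$ via \eqref{eqn:jacobian of f}. The only difference is presentational\textemdash you make the commutation identity $[a]b=[b]a$ explicit where the paper applies it silently.
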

\begin{proof*}
We write out the formula using \eqref{eqn:definition of f} and use \eqref{eqn:jacobian of f}:
\begin{align*}
\hspace{.5em}&P_c(x+z) = [x+z]Y_{LL}(V_L^* - x - z) = [x]Y_{LL}(V_L^* - x)\\
&\quad\quad + [z]Y_{LL}(V_L^* - x) - [x]Y_{LL}z - [z]Y_{LL}z\\
&\quad= P_c(x) + [Y_{LL}(V_L^* - x)]z - [x]Y_{LL}z - [z]Y_{LL}z \\
&\quad= P_c(x) + \pdd {P_c} {V_L}(x)z - [z]Y_{LL}z. \hspace{8.8em}\QED
\end{align*}
\end{proof*}%

The matrix $h(\lambda)$ for $\lambda\in\Lambda$ is positive definite by definition, and therefore induces the vector norm
\begin{align}\label{eqn:definition norm}
\|x\|_{h(\lambda)} := \sqrt{x\T h(\lambda) x}.
\end{align}
This vector norm is related to $\lambda\T P_c(x)$ for $\lambda\in\Lambda$
by the following lemma.
\begin{lemma}\label{lemma:lambda f}
Let $\lambda\in\Lambda$. For each $x\in\RR^n$ we have
\begin{align}
\lambda\T P_c (x) = \|\varphi (\lambda)\|_{h(\lambda)}^2 - \|\varphi (\lambda) - x\|_{h(\lambda)}^2.\label{eqn:lambda f}
\end{align}
Moreover, we have 
\begin{align}\label{eqn:lambda f inequality}
\lambda\T P_c(x) \le \|\varphi (\lambda)\|_{h(\lambda)}^2 = \lambda\T P_c(\varphi (\lambda)) ,
\end{align}
with equality if and only if $x=\varphi (\lambda)$. Consequently, we have $P_c(x) = P_c(\varphi (\lambda))$ if and only if $x=\varphi (\lambda)$.
\end{lemma}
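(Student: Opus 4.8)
The plan is to establish the identity \eqref{eqn:lambda f} by a direct computation in which both sides are reduced to the same expression, and then to read off \eqref{eqn:lambda f inequality} and its consequence from the positive-definiteness of $h(\lambda)$. Throughout I would abbreviate $H:=h(\lambda)$ and $\varphi:=\varphi(\lambda)$.

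First I would expand the left-hand side. Starting from the definition \eqref{eqn:definition of f}, $\lambda\T P_c(x) = \lambda\T[x]Y_{LL}(V_L^* - x)$, and splitting this into two terms. For the first term I use the rearrangement $Y_{LL}V_L^* = \mc I_L^*$ of \eqref{eqn:open-circuit voltages} together with the elementary commutation identity $\lambda\T[x] = x\T[\lambda]$ for diagonal matrices, obtaining $x\T[\lambda]\mc I_L^*$. For the second term I apply \eqref{eqn:rewriting quadratic} to write $\lambda\T[x]Y_{LL}x = x\T H x$. This gives $\lambda\T P_c(x) = x\T[\lambda]\mc I_L^* - x\T H x$.

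Next I would expand the right-hand side. Bilinearity and symmetry of $H$ give $\|\varphi\|_H^2 - \|\varphi - x\|_H^2 = 2x\T H\varphi - x\T H x$. The key observation is that the definition \eqref{eqn:definition of m} yields $H\varphi = \tfrac12[\lambda]\mc I_L^*$, so $2x\T H\varphi = x\T[\lambda]\mc I_L^*$. This reproduces exactly the expression obtained for the left-hand side, proving \eqref{eqn:lambda f}. For \eqref{eqn:lambda f inequality}, since $\lambda\in\Lambda$ the matrix $h(\lambda)$ is positive definite by \eqref{eqn:Lambda}, whence $\|\varphi(\lambda) - x\|_{h(\lambda)}^2 \ge 0$ with equality precisely when $x = \varphi(\lambda)$; substituting into \eqref{eqn:lambda f} yields both the inequality and its sharpness, while setting $x = \varphi(\lambda)$ in \eqref{eqn:lambda f} gives the middle equality $\|\varphi(\lambda)\|_{h(\lambda)}^2 = \lambda\T P_c(\varphi(\lambda))$. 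The final consequence then follows by applying $\lambda\T$ to $P_c(x) = P_c(\varphi(\lambda))$: the right side equals $\|\varphi(\lambda)\|_{h(\lambda)}^2$, so the equality case of \eqref{eqn:lambda f inequality} forces $x = \varphi(\lambda)$, and the reverse implication is immediate.

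The argument is essentially bookkeeping, so I do not anticipate a genuine obstacle. The only steps requiring care are the commutation identity $\lambda\T[x] = x\T[\lambda]$, the correct invocation of \eqref{eqn:rewriting quadratic}, and the simplification $h(\lambda)\varphi(\lambda) = \tfrac12[\lambda]\mc I_L^*$, which is precisely what makes the two cross-term expressions coincide and thus closes the identity.
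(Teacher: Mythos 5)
Your proposal is correct and matches the paper's proof in all essentials: both rest on the commutation $\lambda\T[x]=x\T[\lambda]$, the identity \eqref{eqn:rewriting quadratic}, the simplification $h(\lambda)\varphi(\lambda)=\tfrac12[\lambda]\mc I_L^*$, and positive definiteness of $h(\lambda)$ for the equality case. The only (cosmetic) difference is that you reduce both sides of \eqref{eqn:lambda f} to the common expression $x\T[\lambda]\mc I_L^* - x\T h(\lambda)x$, whereas the paper completes the square on the left-hand side directly.
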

\begin{proof}
Using \eqref{eqn:definition of f}, \eqref{eqn:open-circuit voltages}, \eqref{eqn:definition of m} and \eqref{eqn:rewriting quadratic}, we verify that
\begin{align*}
\lambda\T P_c (x) &= \lambda\T [x]Y_{LL}V_L^* - \lambda\T [x] Y_{LL} x\nonumber\\
&= x\T [\lambda] \mc I_L^* - x\T [\lambda]Y_{LL} x\nonumber\\
&= 2 x\T h(\lambda)\varphi (\lambda) - x\T h(\lambda) x\nonumber\\
&= \varphi (\lambda)\T h(\lambda) \varphi (\lambda) - \varphi (\lambda)\T h(\lambda) \varphi (\lambda) \\&\quad+ 2 x\T h(\lambda)\varphi (\lambda) - x\T h(\lambda) x\nonumber\\
&= \|\varphi (\lambda)\|_{h(\lambda)}^2 - \|\varphi (\lambda) - x\|_{h(\lambda)}^2.%
\end{align*}
Eq. \eqref{eqn:lambda f inequality} follows from \eqref{eqn:lambda f} since $\|\varphi (\lambda) - x\|_{h(\lambda)}\ge 0$, with equality if and only if $x=\varphi (\lambda)$. 
Thus, equality in \eqref{eqn:lambda f inequality} holds if and only if $x=\varphi (\lambda)$.
Finally, if $P_c(x)=P_c(\varphi (\lambda))$, then $\lambda\T P_c(x)=\lambda\T P_c(\varphi (\lambda))$ therefore $x=\varphi (\lambda)$.
\end{proof}%

For a vector $\nu$ such that $\|\nu\|_1=1$ and a scalar $s$ we define the closed half-space
\begin{align*}
H(\nu,s):=\set{y}{\nu\T y \le s},
\end{align*}
which has as boundary the hyperplane 
\begin{align*}
\partial H(\nu,s) =\set{y}{\nu\T y = s}.
\end{align*}
The vector $\nu$ is normal to the boundary of the half-space and points outwards.
\begin{definition}%
A half-space $H(\nu,s)$ is said to \emph{support} a set $S$ if $S \subset H(\nu,s)$ and $\cl S\cap \partial H(\nu,s)$ is nonempty. \Ie, for a given $\nu$, $s$ is the smallest number so that $S \subset H(\nu,s)$. A point in $\cl S\cap \partial H(\nu,s)$ is a \emph{point of support}.
\end{definition}

If the half-space $H(\nu,s)$ supports $\mc F$ and $\tilde P_c\in \mc F$ is a point of support, then $\nu\T \tilde P_c$ maximizes $\nu\T P_c$ for all $P_c\in \mc F$.
For example, let $\nu=\frac 1 n \mb 1$, then the corresponding supporting half-space is given by \eqref{eqn:power flow necessary condition set}. The vector $P_{\text{max}}$ is the unique point of support, as was shown in Lemma~\ref{lemma:maximizing power demand}. See also Figure~\ref{figure:two load nodes feasible power injections}, in which the dashed line corresponds to the boundary of this half-space.

In order to obtain a geometric description of the convex hull of $\mc F$ we aim to apply the following proposition.
\begin{proposition}[Cor. 11.5.1 of \cite{rockafellar1970convex}]\label{proposition:Rockafellar convex hull} %
Let the set $S$ be a subset of $\RR^n$, then
\begin{align*}
\cl{\conv{S}} = \bigcap_{\text{support of }S} H(\nu,s)
\end{align*}
where the intersection is taken over all half-spaces $H(\nu,s)$ which support $S$.
\end{proposition}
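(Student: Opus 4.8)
The plan is to prove the two inclusions separately, using the strict separation theorem for a point and a closed convex set as the main engine. Throughout I write $C := \cl{\conv{S}}$, which is closed and convex by construction, so that the goal becomes showing $C = \bigcap H(\nu,s)$, where the intersection runs over all half-spaces supporting $S$.

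First I would dispatch the easy inclusion $C \subseteq \bigcap H(\nu,s)$. Any half-space that supports $S$ contains $S$ by definition, and being closed and convex it must then contain $\conv{S}$ and hence its closure $C$; intersecting over all supporting half-spaces preserves the inclusion.

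For the reverse inclusion I would argue by contraposition: given $y_0 \notin C$, I must exhibit a single supporting half-space of $S$ that excludes $y_0$. Since $C$ is closed and convex and $y_0\notin C$, strict separation provides a nonzero $\nu$, which I normalize so that $\|\nu\|_1 = 1$, together with a scalar $\beta$ satisfying $\nu\T y < \beta < \nu\T y_0$ for all $y\in C$. As $\nu\T(\cdot)$ is linear and continuous, the quantity $s := \sup_{y\in S}\nu\T y$ equals $\sup_{y\in C}\nu\T y$, is finite, and satisfies $s \le \beta < \nu\T y_0$. Hence $S \subseteq H(\nu,s)$ with $s$ the least such bound, while $y_0 \notin H(\nu,s)$.

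The hard part will be verifying that this tightest containing half-space genuinely \emph{supports} $S$, i.e.\ that $\partial H(\nu,s)$ meets $\cl{S}$, since the proposition only admits supporting half-spaces into the intersection. The idea is that the supremum defining $s$ is attained on the nonempty face $F := C \cap \partial H(\nu,s)$, and that an extreme point of $C$ contained in $F$ necessarily lies in $\cl{S}$, because extreme points of a closed convex hull are limits of points of $S$; this gives $\cl{S}\cap\partial H(\nu,s)\neq\emptyset$. The delicate case is a direction $\nu$ in which the supremum over $S$ is only approached at infinity; there one must either appeal to line-freeness of $C$ in that direction to secure an extreme point, or perturb $\nu$ slightly so that the support becomes attained while still separating $y_0$. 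Once a supporting half-space excluding $y_0$ is produced the contraposition closes, and combined with the easy inclusion this yields the claimed equality.
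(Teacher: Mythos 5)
Your overall strategy (strict separation plus contraposition) is the right engine, and your first two steps are correct: they prove, essentially verbatim, the statement actually found in Rockafellar's Cor.~11.5.1, namely that $\cl{\conv{S}}$ is the intersection of all closed half-spaces \emph{containing} $S$ \textemdash\ which is all the paper itself invokes, since it cites the result rather than proving it. The genuine gap is exactly where you flag it: upgrading the tightest containing half-space $H(\nu,s)$, $s=\sup_{y\in S}\nu\T y$, to a \emph{supporting} one in the paper's sense, i.e.\ one with $\cl{S}\cap\partial H(\nu,s)\neq\emptyset$. Your main route does not close this. The face $F=C\cap\partial H(\nu,s)$ can be empty: for $C=S=\{(x_1,x_2): x_1>0,\ x_2\ge 1/x_1\}$ and $\nu=(-1,0)$ the supremum $s=0$ is approached only at infinity, so $\partial H(\nu,s)$ misses $\cl{S}$ entirely. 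Even when $F\neq\emptyset$, it may be an unbounded face of $C$ containing no extreme point whatsoever, so ``take an extreme point of $C$ in $F$'' is unavailable; and the Milman-type claim that extreme points of $\cl{\conv{S}}$ are limits of points of $S$ is guaranteed only for bounded $S$. So none of the three pillars of your ``hard part'' holds for arbitrary $S\subset\RR^n$.

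The fallback you mention \textemdash\ ``perturb $\nu$ slightly so that the support becomes attained while still separating $y_0$'' \textemdash\ is not a detail but the entire remaining content, and as written it is an assertion, not a proof. It can be made rigorous, but it requires real machinery: the directions in which $h_C(\nu):=\sup_{y\in C}\nu\T y$ is attained include the relative interior of $\operatorname{dom} h_C$ (nonempty subdifferential of the support function there), and closedness of $h_C$ along line segments gives $h_C(\nu_t)\to h_C(\nu)$ as $\nu_t\to\nu$ from that relative interior, so the strict violation $\nu\T y_0>h_C(\nu)$ survives a small enough perturbation; one must then additionally argue that attainment can be arranged on $\cl{S}$ itself rather than merely on $C=\cl{\conv{S}}$, since that is how the paper defines support. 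None of this appears in your write-up, so as it stands you have only established the reverse inclusion for the family of all containing half-spaces, not for the supporting subfamily the proposition intersects over. (For what it is worth, in the paper's application the refinement is harmless: Theorem~\ref{theorem:supporting half-spaces of F} exhibits an explicit point of support $P_c(\varphi(\lambda))$ for every tight half-space, so attainment never fails in the setting where the proposition is used.)
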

To apply Proposition~\ref{proposition:Rockafellar convex hull} to $\mc F$, we identify all supporting half-spaces of the set $\mc F$. We will simultaneously identify the supporting half-spaces of the image of $P_c$, which is %
given by
\begin{align}\label{eqn:definition of image of f}
\im P_c := \set{y}{P_c(x)=y,x\in\RR^n}
\end{align}
and satisfies the inclusion $\mc F\subset \im P_c$.

\begin{theorem}\label{theorem:supporting half-spaces of F} %
Let $\lambda$ be a vector such that $\|\lambda\|_1=1$ and $s$ be a scalar. 
A half-space $H(\lambda,s)$ %
supports $\mc F$ if and only if $\lambda\in \Lambda_1$ and $s=\|\varphi (\lambda)\|_{h(\lambda)}^2$. %
The point $P_c(\varphi (\lambda))\in\mc F$ is the unique point of support.
Moreover, the supporting half-spaces of the set $\mc F$ and the image of $P_c$ coincide.
\end{theorem}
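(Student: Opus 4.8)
The plan is to characterize exactly which half-spaces $H(\lambda,s)$ support $\mc F$ by leveraging the key inequality \eqref{eqn:lambda f inequality} from Lemma~\ref{lemma:lambda f}, which states that for $\lambda\in\Lambda$ we have $\lambda\T P_c(x)\le\|\varphi(\lambda)\|_{h(\lambda)}^2$ with equality if and only if $x=\varphi(\lambda)$. Since every feasible $P_c\in\mc F$ is of the form $P_c(x)$ for some $x>\mb 0$ satisfying the power flow equation, this inequality immediately says that $\mc F\subset H(\lambda,\|\varphi(\lambda)\|_{h(\lambda)}^2)$ whenever $\lambda\in\Lambda_1$. To promote this inclusion to a genuine \emph{support}, I must verify that the bounding hyperplane is actually touched by $\mc F$; the natural candidate point of support is $P_c(\varphi(\lambda))$, so the crucial preliminary is to confirm that $\varphi(\lambda)$ is a valid operating point, i.e.\ that $\varphi(\lambda)>\mb 0$. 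This follows because $\varphi(\lambda)$ equals the vector \eqref{eqn:characterization pddf M-matrix:definition of x} with $r=0$, which by Theorem~\ref{theorem:parametrization of D} lies in $\partial\mc D\subset\cl{\mc D}$ and hence is strictly positive.

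For the forward direction (sufficiency), I would fix $\lambda\in\Lambda_1$, set $s=\|\varphi(\lambda)\|_{h(\lambda)}^2$, and argue in two steps. First, the inclusion $\mc F\subset H(\lambda,s)$ is exactly \eqref{eqn:lambda f inequality} restricted to feasible points. Second, the point $P_c(\varphi(\lambda))$ lies in $\mc F$ (because $\varphi(\lambda)>\mb 0$ is an operating point for the demand $P_c(\varphi(\lambda))$) and satisfies $\lambda\T P_c(\varphi(\lambda))=s$, so it lies on $\partial H(\lambda,s)$; this shows $H(\lambda,s)$ supports $\mc F$. Uniqueness of the point of support then comes directly from the ``equality iff $x=\varphi(\lambda)$'' clause of Lemma~\ref{lemma:lambda f}: any feasible point on the boundary hyperplane must be $P_c(\varphi(\lambda))$, and the final clause of that lemma ($P_c(x)=P_c(\varphi(\lambda))$ iff $x=\varphi(\lambda)$) guarantees this determines a single point in $\mc F$.

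For the converse (necessity), I would suppose $H(\lambda,s)$ supports $\mc F$ with $\|\lambda\|_1=1$ and derive that $\lambda$ must lie in $\Lambda_1$ and $s$ must take the stated value. The main obstacle lies here: one must rule out directions $\lambda\notin\Lambda$, where $h(\lambda)$ fails to be positive definite. In that case the quadratic $\lambda\T P_c(x)=\|\varphi(\lambda)\|_{h(\lambda)}^2-\|\varphi(\lambda)-x\|^2_{h(\lambda)}$-type decomposition degenerates: if $h(\lambda)$ has a nonpositive eigenvalue, then $\lambda\T P_c(x)$ is unbounded above along a suitable direction of $x$ (or at least fails to attain a finite supporting value), so no finite $s$ can make $H(\lambda,s)$ support $\mc F$. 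I would make this precise by exhibiting, for $\lambda\notin\Lambda$, a ray $x=\varphi\text{-like}$ displacement along which $\lambda\T P_c(x)\to+\infty$ while $x$ stays positive, contradicting boundedness; this uses that $\Lambda$ lies in the positive orthant (Lemma~\ref{lemma:Lambda positive}) together with the sign structure of $\mc I_L^*$ and the convexity/openness properties of $\Lambda$ catalogued in the appendix. Once $\lambda\in\Lambda$ is established, the normalization $\|\lambda\|_1=1$ forces $\lambda\in\Lambda_1$, and the optimal value $s=\sup_{x}\lambda\T P_c(x)=\|\varphi(\lambda)\|_{h(\lambda)}^2$ is pinned down by \eqref{eqn:lambda f inequality}.

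Finally, for the claim that the supporting half-spaces of $\mc F$ and of $\im P_c$ coincide, I would observe that the entire argument above used only the identity \eqref{eqn:lambda f} for $\lambda\T P_c(x)$, which holds for \emph{all} $x\in\RR^n$, not merely for feasible operating points. Thus the supremum $\sup_{x\in\RR^n}\lambda\T P_c(x)$ defining the supporting value over $\im P_c$ equals the supremum over the feasible set, since in both cases it is attained at $x=\varphi(\lambda)$, and $\varphi(\lambda)$ is itself a feasible operating point. Because $\mc F\subset\im P_c$ and both sets share the same point of support $P_c(\varphi(\lambda))$ for each $\lambda\in\Lambda_1$ (and neither admits any supporting half-space for $\lambda\notin\Lambda_1$, by the same unboundedness argument), their families of supporting half-spaces are identical. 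I expect the unboundedness argument in the converse direction to be the technically delicate step, as it requires carefully constructing an admissible direction of divergence when $h(\lambda)$ loses definiteness.
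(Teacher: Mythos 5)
Your architecture is the same as the paper's (sufficiency via Lemma~\ref{lemma:lambda f} together with $\varphi(\lambda)>\mb 0$, which is the $r=0$ case of Lemma~\ref{lemma:characterization pddf M-matrix}; necessity via unboundedness of $\lambda\T P_c$ along a ray in the positive orthant), but the necessity direction---which you yourself flag as the delicate step---has a genuine gap exactly at $\lambda\in\partial\Lambda_1$. There $h(\lambda)$ is positive \emph{semi}-definite, so no quadratic degeneracy produces blow-up: by Lemma~\ref{lemma:sum formula f}, $\lambda\T P_c(x+tv)=\lambda\T P_c(x)+t\,\lambda\T\pdd{P_c}{V_L}(x)\,v-t^2\,v\T h(\lambda)v$, and along the kernel direction $v$ of $h(\lambda)$ the quadratic term vanishes identically. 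Unboundedness then hinges entirely on strict positivity of the \emph{linear} coefficient, which is the content of the paper's Lemma~\ref{lemma:positive semi-definite h}: for $v>\mb 0$ a Perron vector spanning the kernel one computes $\lambda\T\pdd{P_c}{V_L}(x)\,v=\lambda\T[\mc I_L^*]v-2x\T h(\lambda)v=\lambda\T[\mc I_L^*]v>0$, using $\mc I_L^*\gneqq\mb 0$ and the fact (Lemma~\ref{lemma:Lambda properties}) that $\partial\Lambda_1$ lies in the positive orthant. Your sketch (``the decomposition degenerates \dots unbounded above, or at least fails to attain a finite supporting value'') does not supply this mechanism, and the hedged alternative is insufficient: a half-space can support a set on which the functional is bounded but never attained, so you must actually prove unboundedness. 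Note also that $\varphi(\lambda)$ is undefined when $h(\lambda)$ is singular, so the norm decomposition \eqref{eqn:lambda f} you invoke is unavailable in precisely the problematic case. Similarly, for $\lambda\ge\mb 0$ with $\lambda\notin\cl{\Lambda_1}$ the divergence direction cannot be an arbitrary eigenvector of a negative eigenvalue, since the ray must satisfy $x+tv>\mb 0$ to yield points of $\mc F$; the paper takes $v\gneqq\mb 0$ to be the Perron vector of the irreducible diagonal block of $h(\lambda)$ carrying the negative Perron root, padded with zeros, and takes $v=e_i$ when $\lambda_i<0$.

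A second, smaller gap concerns uniqueness of the point of support. By the paper's definition a point of support lies in $\cl{\mc F}\cap\partial H(\lambda,s)$, and at this stage $\mc F$ is not yet known to be closed (that is only established in Theorem~\ref{theorem:convexity of F}, which itself relies on the present theorem). Your argument treats only feasible points on the hyperplane via the equality clause of Lemma~\ref{lemma:lambda f}; the equality-iff statement alone does not pass to limits. To cover closure points you need the quantitative identity \eqref{eqn:lambda f}, as in the paper: if $P_c(x_k)\to y$ with $\lambda\T y=s$, then $\|\varphi(\lambda)-x_k\|_{h(\lambda)}^2=\|\varphi(\lambda)\|_{h(\lambda)}^2-\lambda\T P_c(x_k)\to 0$, which forces $x_k\to\varphi(\lambda)$ and hence $y=P_c(\varphi(\lambda))$ by continuity. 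With these two repairs your plan coincides with the paper's proof, including the final observation that the same estimates hold for all $x\in\RR^n$, so the supporting half-spaces of $\mc F$ and of $\im P_c$ coincide.
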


The proof of Theorem~\ref{theorem:supporting half-spaces of F} can be found in Appendix~\ref{appendix:proof of convex hull of F}.
To simplify notation, we define for $\lambda\in\Lambda$ the half-spaces
\begin{align}\label{eqn:definition of H lambda}
H_\lambda &:= H(\lambda,\|\varphi (\lambda)\|_{h(\lambda)}^2).%
\end{align}
Theorem~\ref{theorem:supporting half-spaces of F} states that $H_\lambda$ for $\lambda\in\Lambda$ are all supporting half-spaces of $\mc F$.
Proposition~\ref{proposition:Rockafellar convex hull} therefore allows us to give a direct formula for the closure of the convex hull of $\mc F$.

\begin{corollary}\label{corollary:intersection formula for F}
The closure of the convex hull of $\mc F$ is the intersection of all half-spaces $H_\lambda$ where $\lambda\in\Lambda$, and is equal to the closure of the convex hull of $\im P_c$.
\Ie,
\begin{align}\label{eqn:intersection formula for F}
\cl{\conv{\mc F}} = \cl{\conv{\im P_c}} %
= \bigcap_{\lambda\in\Lambda_1}\nolimits H_\lambda.
\end{align}
\end{corollary}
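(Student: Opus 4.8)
The plan is to derive Corollary~\ref{corollary:intersection formula for F} as a direct consequence of Proposition~\ref{proposition:Rockafellar convex hull} (Rockafellar) and Theorem~\ref{theorem:supporting half-spaces of F}, together with the already-noted inclusion $\mc F\subset\im P_c$. The heart of the matter is that $\mc F$ and $\im P_c$ have \emph{identical} supporting half-spaces, which Theorem~\ref{theorem:supporting half-spaces of F} hands us for free; once that is in place, applying Rockafellar's formula to each of the two sets gives the same intersection, and hence the same closed convex hull.

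First I would recall that Rockafellar's Proposition~\ref{proposition:Rockafellar convex hull} expresses $\cl{\conv S}$ as the intersection of all half-spaces supporting $S$, taken over every admissible $(\nu,s)$ with $\|\nu\|_1=1$. Applying this to $S=\mc F$, Theorem~\ref{theorem:supporting half-spaces of F} tells us precisely which half-spaces $H(\nu,s)$ support $\mc F$: exactly those with $\nu=\lambda\in\Lambda_1$ and $s=\|\varphi(\lambda)\|_{h(\lambda)}^2$, i.e.\ exactly the half-spaces $H_\lambda$ defined in \eqref{eqn:definition of H lambda}. Therefore the intersection in Rockafellar's formula ranges exactly over $\lambda\in\Lambda_1$, giving
\begin{align*}
\cl{\conv{\mc F}} = \bigcap_{\lambda\in\Lambda_1} H_\lambda.
\end{align*}

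Next I would repeat the argument verbatim for $S=\im P_c$. The final sentence of Theorem~\ref{theorem:supporting half-spaces of F} states that the supporting half-spaces of $\im P_c$ coincide with those of $\mc F$, so Rockafellar's formula applied to $\im P_c$ yields the \emph{same} intersection $\bigcap_{\lambda\in\Lambda_1} H_\lambda$, whence $\cl{\conv{\im P_c}}=\bigcap_{\lambda\in\Lambda_1} H_\lambda$ as well. Chaining the two identities establishes \eqref{eqn:intersection formula for F}. One small bookkeeping point to address is the indexing set: the corollary's displayed formula and the prose both mix $\Lambda$ and $\Lambda_1$, but since $\varphi(\lambda)$ and hence $H_\lambda$ are scale-invariant in $\lambda$, each half-space $H_\lambda$ with $\lambda\in\Lambda$ equals $H_{\lambda'}$ for its $\ell_1$-normalization $\lambda'\in\Lambda_1$; so the two intersections agree and the normalization in Theorem~\ref{theorem:supporting half-spaces of F} (which requires $\|\lambda\|_1=1$) is harmless.

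\textbf{The main obstacle} is essentially that there is no obstacle left in the \emph{statement} as written: the corollary is a packaging result, and all the genuine work has been pushed into Theorem~\ref{theorem:supporting half-spaces of F} (proved in Appendix~\ref{appendix:proof of convex hull of F}) and into the abstract convexity fact in Proposition~\ref{proposition:Rockafellar convex hull}. The only thing requiring a moment's care is the coincidence of supporting half-spaces of $\mc F$ and of $\im P_c$ despite the strict inclusion $\mc F\subset\im P_c$; this is exactly where Theorem~\ref{theorem:supporting half-spaces of F} does the heavy lifting, by showing the unique point of support $P_c(\varphi(\lambda))$ already lies in $\mc F$, so that no larger set's supporting half-space can be strictly tighter. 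With that invoked, the proof is a two-line application of Rockafellar to each set.
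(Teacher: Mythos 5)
Your proposal is correct and matches the paper's own (implicit) argument exactly: the paper derives Corollary~\ref{corollary:intersection formula for F} immediately from Proposition~\ref{proposition:Rockafellar convex hull} combined with Theorem~\ref{theorem:supporting half-spaces of F}, which is precisely your two-line application of Rockafellar to $\mc F$ and to $\im P_c$ using their coinciding supporting half-spaces. Your remark on the harmless $\Lambda$ versus $\Lambda_1$ normalization (scale-invariance of $H_\lambda$) is a correct piece of bookkeeping the paper leaves tacit.
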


Now that we have identified the closure of the convex hull of $\mc F$, we may also identify the boundary of this set.
\begin{theorem}\label{theorem:boundary of F}
The map $\lambda\mapsto P_c(\varphi (\lambda))$ for $\lambda\in\Lambda_1$ is one-to-one and parametrizes the boundary of $\conv{\mc F}$. Moreover, the set $\conv{\mc F}$ is closed. %
\end{theorem}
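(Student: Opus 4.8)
The plan is to prove Theorem~\ref{theorem:boundary of F} in three movements: first establish that the map $\lambda\mapsto P_c(\varphi(\lambda))$ is injective on $\Lambda_1$, then show its image is precisely $\partial\conv{\mc F}$, and finally deduce closedness of $\conv{\mc F}$ as a consequence. The key tool assembled earlier is Corollary~\ref{corollary:intersection formula for F}, which writes $\cl{\conv{\mc F}}=\bigcap_{\lambda\in\Lambda_1}H_\lambda$ with $H_\lambda=H(\lambda,\|\varphi(\lambda)\|_{h(\lambda)}^2)$, together with Theorem~\ref{theorem:supporting half-spaces of F}, which identifies $P_c(\varphi(\lambda))$ as the \emph{unique} point of support of $H_\lambda$ on $\mc F$. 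I would also lean on the observation \eqref{eqn:boundary of D} that $\partial\mc D=\varphi(\Lambda_1)$, so that the map in question is the composition $P_c\circ\varphi$ restricted to $\Lambda_1$.

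First I would prove injectivity. Suppose $P_c(\varphi(\lambda_1))=P_c(\varphi(\lambda_2))$ for $\lambda_1,\lambda_2\in\Lambda_1$. Each $P_c(\varphi(\lambda_i))$ lies on the supporting hyperplane $\partial H_{\lambda_i}$ and is the unique point of support of $\mc F$ for the normal $\lambda_i$ (Theorem~\ref{theorem:supporting half-spaces of F}). The idea is that a point of $\partial\conv{\mc F}$ determines the outward normal of a supporting hyperplane through it; since $P_c(\varphi(\lambda))$ maximizes $\lambda\T P_c$ over $\mc F$ with value exactly $\|\varphi(\lambda)\|_{h(\lambda)}^2$ (by \eqref{eqn:lambda f inequality}), the equality of the two image points forces the two supporting functionals to agree on $\mc F$. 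Normalizing via $\|\lambda\|_1=1$ and using that $\Lambda_1$ already enforces a unique representative in each scaling class, I expect to conclude $\lambda_1=\lambda_2$. An alternative, cleaner route is to recall that $\varphi$ is itself injective on $\Lambda_1$ because it parametrizes $\partial\mc D$ bijectively (Theorem~\ref{theorem:parametrization of D}), and then use the final clause of Lemma~\ref{lemma:lambda f}, namely $P_c(x)=P_c(\varphi(\lambda))\iff x=\varphi(\lambda)$, to transfer injectivity of $\varphi$ through $P_c$.

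Next I would show the image equals the boundary. The inclusion $P_c(\varphi(\Lambda_1))\subset\partial\conv{\mc F}$ follows because each $P_c(\varphi(\lambda))$ lies in $\mc F\subset\conv{\mc F}$ yet also lies on the supporting hyperplane $\partial H_\lambda$, and a point of a convex set lying on a supporting hyperplane is a boundary point. For the reverse inclusion, take any $y\in\partial\cl{\conv{\mc F}}$; by the supporting hyperplane theorem for convex sets there is a supporting half-space of $\cl{\conv{\mc F}}$ through $y$, and by Corollary~\ref{corollary:intersection formula for F} every supporting half-space is one of the $H_\lambda$, so $y\in\partial H_\lambda$ for some $\lambda\in\Lambda_1$. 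The uniqueness of the point of support in Theorem~\ref{theorem:supporting half-spaces of F} then pins down $y=P_c(\varphi(\lambda))$, giving surjectivity onto the boundary.

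The final step, closedness of $\conv{\mc F}$, is where I expect the real work to be, and it is what makes the theorem stronger than the intersection formula (which only yields $\cl{\conv{\mc F}}$). The plan is to show $\cl{\conv{\mc F}}\subset\conv{\mc F}$, equivalently that every boundary point $P_c(\varphi(\lambda))$ actually lies in $\conv{\mc F}$ and not merely in its closure; since boundary points together with interior points exhaust the closed convex set, and the interior is clearly contained in $\conv{\mc F}$, it suffices to place the boundary inside $\conv{\mc F}$. But each $P_c(\varphi(\lambda))$ is genuinely a value of $P_c$ at the honest voltage vector $\varphi(\lambda)\in\partial\mc D$ with $\varphi(\lambda)>\mb 0$ (from Theorem~\ref{theorem:parametrization of D}), hence $P_c(\varphi(\lambda))\in\im P_c$; combined with $\mc F\subset\im P_c$ and the fact that these boundary points are feasible demands, the boundary sits inside $\mc F\subset\conv{\mc F}$. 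The main obstacle will be verifying that the parametrized boundary is attained rather than approached---that is, ruling out an ``escape to infinity'' as $\lambda$ approaches $\partial\Lambda_1$---which requires the bicontinuity and the controlled behavior of $h(\lambda)\inv[\lambda]$ near the boundary of $\Lambda_1$ established in the appendix. I would handle this by invoking the compactness or closedness properties of $\cl{\Lambda_1}$ and the continuity of $\varphi$ and $P_c$ to argue that the boundary map extends continuously and has closed image, thereby upgrading $\cl{\conv{\mc F}}$ to $\conv{\mc F}$.
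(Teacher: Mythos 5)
Your overall architecture matches the paper's: injectivity via the final clause of Lemma~\ref{lemma:lambda f} combined with injectivity of $\varphi$ on $\Lambda_1$ (Theorem~\ref{theorem:parametrization of D}), the easy inclusion $P_c(\varphi(\Lambda_1))\subset\partial\conv{\mc F}$ via the supporting hyperplanes, and closedness from $\partial\conv{\mc F}\subset\mc F\subset\conv{\mc F}$. However, there is a genuine gap at precisely the point where the paper does most of its work. In your reverse inclusion you take $y\in\partial\cl{\conv{\mc F}}$, place it on $\partial H_\lambda$ for some $\lambda\in\Lambda_1$ (incidentally, the classification of supporting half-spaces is Theorem~\ref{theorem:supporting half-spaces of F}, not Corollary~\ref{corollary:intersection formula for F}), and then conclude $y=P_c(\varphi(\lambda))$ from ``uniqueness of the point of support.'' But that uniqueness is established only for $\mc F$ (in fact for $\cl{\im P_c}$): it says $\cl{\mc F}\cap\partial H_\lambda=\{P_c(\varphi(\lambda))\}$. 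Your $y$ is only known to lie in $\cl{\conv{\mc F}}\cap\partial H_\lambda$, i.e.\ it is a limit of convex combinations of image points, and for \emph{unbounded} sets the face of the closed convex hull can strictly exceed the convex hull of the face of the original set, so the deduction does not follow formally. The paper closes exactly this gap with a sequence argument: write $y=\lim_{k}\bigl(\theta_k P_c(x_k)+(1-\theta_k)P_c(z_k)\bigr)$, use the quantitative identity $\lambda\T P_c(x)=\|\varphi(\lambda)\|_{h(\lambda)}^2-\|\varphi(\lambda)-x\|_{h(\lambda)}^2$ of Lemma~\ref{lemma:lambda f}, and observe that $\lambda\T y_k\to s$ forces $x_k\to\varphi(\lambda)$ and/or $z_k\to\varphi(\lambda)$ with the complementary weight vanishing, whence $y=P_c(\varphi(\lambda))$. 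Some such coercivity argument is indispensable; it is the mathematical content of the theorem beyond Theorem~\ref{theorem:supporting half-spaces of F}.

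Relatedly, your diagnosis of where the real work lies is inverted, and your fallback plan for closedness would fail. Once surjectivity onto the boundary is secured, closedness is immediate: your own first observation, $\partial\conv{\mc F}=P_c(\varphi(\Lambda_1))\subset\mc F\subset\conv{\mc F}$ together with the fact that a convex set containing its boundary is closed, is exactly the paper's one-line conclusion, and no ``attainment'' issue remains. By contrast, the proposed repair via a continuous extension of $\lambda\mapsto P_c(\varphi(\lambda))$ to the compact set $\cl{\Lambda_1}$ cannot work: as $\lambda\to\partial\Lambda_1$ the matrix $h(\lambda)$ becomes a singular irreducible M-matrix, and since $v\T[\lambda]\mc I_L^*>0$ for its Perron vector $v$ (Lemma~\ref{lemma:positive semi-definite h}), the vector $\varphi(\lambda)=\tfrac 1 2 h(\lambda)\inv[\lambda]\mc I_L^*$ diverges along the Perron direction, so no continuous extension exists. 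The image $\partial\conv{\mc F}$ is closed simply because boundaries are closed, not by any compactness of the parameter set.
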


The proof of Theorem~\ref{theorem:boundary of F} can be found in Appendix~\ref{appendix:proof of convex hull of F}.

\begin{corollary}\label{corollary:boundary D F one-to-one}
The sets $\Lambda_1$, $\partial\mc D$ and $\partial\conv{\mc F}$ are in one-to-one correspondence.
In particular, $P_c$ is a one-to-one map from $\partial\mc D$ to $\partial\conv{\mc F}$.
\end{corollary}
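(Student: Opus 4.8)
The plan is to assemble Corollary~\ref{corollary:boundary D F one-to-one} directly from the two preceding results, Theorem~\ref{theorem:parametrization of D} and Theorem~\ref{theorem:boundary of F}, by exhibiting compatible bijections through the common index set $\Lambda_1$. Concretely, by Eq.~\eqref{eqn:boundary of D} the map $\lambda\mapsto\varphi(\lambda)$ is a parametrization of $\partial\mc D$ via $\partial\mc D=\varphi(\Lambda_1)$, and by Theorem~\ref{theorem:boundary of F} the map $\lambda\mapsto P_c(\varphi(\lambda))$ is a one-to-one parametrization of $\partial\conv{\mc F}$. So I would treat $\Lambda_1$ as the pivot and show that all three sets are pairwise in bijection.

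First I would establish that $\varphi\colon\Lambda_1\to\partial\mc D$ is a bijection. Surjectivity is immediate from \eqref{eqn:boundary of D}. For injectivity, note that $\varphi(\lambda)$ is exactly the parametrization of $\partial\mc D$ appearing in Theorem~\ref{theorem:parametrization of D} at $r=0$, namely $\tfrac12 h(\lambda)\inv[\lambda]\mc I_L^*$; since that theorem asserts the map $(\lambda,r)\mapsto\tfrac12 h(\lambda)\inv[\lambda](\mc I_L^*+r\mb 1)$ is a bijection (indeed bicontinuous) from $\Lambda_1\times\RR_{\ge 0}$ onto $\cl{\mc D}$, its restriction to $\Lambda_1\times\{0\}$ is a bijection onto $\partial\mc D$, which is precisely $\lambda\mapsto\varphi(\lambda)$. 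Hence $\varphi$ is injective on $\Lambda_1$, so $\Lambda_1$ and $\partial\mc D$ are in one-to-one correspondence.

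Next I would handle the correspondence between $\partial\mc D$ and $\partial\conv{\mc F}$, which is the substance of the ``in particular'' clause. The composite $\lambda\mapsto P_c(\varphi(\lambda))$ is one-to-one onto $\partial\conv{\mc F}$ by Theorem~\ref{theorem:boundary of F}, and $\lambda\mapsto\varphi(\lambda)$ is a bijection $\Lambda_1\to\partial\mc D$ from the previous step. Therefore $P_c$, restricted to $\partial\mc D$, can be written as the composition of the inverse bijection $\varphi\inv\colon\partial\mc D\to\Lambda_1$ followed by the bijection $\lambda\mapsto P_c(\varphi(\lambda))\colon\Lambda_1\to\partial\conv{\mc F}$. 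A composition of bijections is a bijection, so $P_c$ maps $\partial\mc D$ one-to-one onto $\partial\conv{\mc F}$. Combining the two correspondences shows $\Lambda_1$, $\partial\mc D$ and $\partial\conv{\mc F}$ are pairwise in one-to-one correspondence.

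The only genuine subtlety\,---\,and the step I would be most careful about\,---\,is the injectivity of $\varphi$ on $\Lambda_1$, equivalently the well-definedness of $\varphi\inv$. It is tempting to take it for granted, but two distinct normalized vectors $\lambda,\lambda'\in\Lambda_1$ could a priori yield the same point $\varphi(\lambda)=\varphi(\lambda')$; this is exactly what the bicontinuity (in particular bijectivity) assertion of Theorem~\ref{theorem:parametrization of D} rules out, since the inverse is obtained by reading off the Perron vector $\lambda$ and Perron root $r$ of $-\pdd{P_c}{V_L}$ at the point, and these are unique by Proposition~\ref{proposition:Z-matrix perron result}. I would therefore cite Theorem~\ref{theorem:parametrization of D} explicitly for this injectivity rather than re-deriving it, so that the whole corollary reduces to a clean diagram chase through $\Lambda_1$.
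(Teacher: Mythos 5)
Your proposal is correct and follows essentially the same route as the paper's own proof: both cite Theorem~\ref{theorem:parametrization of D} for $\varphi$ being one-to-one from $\Lambda_1$ onto $\partial\mc D$, cite Theorem~\ref{theorem:boundary of F} for $\lambda\mapsto P_c(\varphi(\lambda))$ being one-to-one onto $\partial\conv{\mc F}$, and conclude by composing the bijections. Your explicit attention to the injectivity of $\varphi$ (via the bicontinuity in Theorem~\ref{theorem:parametrization of D}) is a careful spelling-out of a step the paper leaves implicit, but it is not a different argument.
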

\begin{proof}
It follows from Theorem~\ref{theorem:parametrization of D} that $\varphi (\lambda)$ is a one-to-one map from $\Lambda_1$ to $\partial\mc D$, and Theorem~\ref{theorem:boundary of F} states that $\lambda \to P_c(\varphi (\lambda))$ is a one-to-one map from $\Lambda_1$ to $\partial\conv{\mc F}$. Hence, $P_c$ is a one-to-one map from $\partial\mc D$ to $\partial\conv{\mc F}$.
\end{proof}

Corollary~\ref{corollary:boundary D F one-to-one} states that a vector of power demands $\tilde P_c$ that lies on the boundary the convex hull of $\mc F$ corresponds uniquely to an operating point $\tilde V_L$ which is long-term voltage semi-stable but not long-term voltage stable.
The pair $(\tilde P_c, \tilde V_L)$ corresponds to a unique $\lambda\in\Lambda_1$, and the corresponding hyperplane $\partial H_{\lambda}$ intersects $\conv{\mc F}$ only in the unique point of support $\tilde P_c$. Hence, $\partial H_{\lambda}$ is the tangent plane at $\tilde P_c$ of the boundary of $\conv{\mc F}$.
This is observed in Figure~\ref{figure:two load nodes feasible power injections} for $\tilde P_c = P_{\text{max}}$, $\tilde V_L = \frac 1 2 V_L^*$ and $\lambda = \tfrac 1 n \mb 1$, and the same holds for all points on the boundary of $\conv{\mc F}$.

\subsection{One-to-one correspondence between $\mc F$ and $\cl{\mc D}$}%
\label{subsection:convexity of F}

In this section we use Corollary~\ref{corollary:boundary D F one-to-one} to prove that $P_c$ is a one-to-one mapping from $\cl{\mc D}$ to $\conv{\mc F}$, and that therefore $\mc F$ is convex.
This means each feasible power demand is uniquely associated to a long-term voltage semi-stable operating point.
This operating point can be found by solving an initial value problem. 
The following lemma is intrumental in proving these results.

\begin{lemma}\label{lemma:convex combination}
Let $\hat V_L\in\mc D$, define $\hat P_c:=P_c(\hat V_L)\in P_c(\mc D)$ and let $\tilde P_c\in \inter{\conv{\mc F}}$. %
There exists a unique path $\gamma: [0,1]\to \mc D$ so that the convex combination of $\hat P_c$ and $\tilde P_c$ is described by
\begin{align}\label{eqn:convex combination:convex combination}
P_c(\gamma(\theta)) = \theta \tilde P_c + (1-\theta) \hat P_c.
\end{align}
for $0\le \theta\le 1$. The path $\gamma$ solves the initial value problem 
\begin{align}\label{eqn:convex combination:differential equation}
\dot \gamma(\theta) = \l(\pdd {P_c} {V_L} (\gamma(\theta))\r)\inv (\tilde P_c-\hat P_c)%
\end{align}
with initial value $\gamma(0)=\hat V_L$. We have $\tilde P_c = P_c(\gamma(1))$.
\end{lemma}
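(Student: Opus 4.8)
The plan is to realize the path $\gamma$ as the solution of the initial value problem \eqref{eqn:convex combination:differential equation} and to verify that this solution has all the claimed properties. First I would observe that along any curve $\gamma$ satisfying \eqref{eqn:convex combination:differential equation}, the chain rule gives $\frac{\d}{\d\theta}P_c(\gamma(\theta)) = \pdd{P_c}{V_L}(\gamma(\theta))\dot\gamma(\theta) = \tilde P_c - \hat P_c$, which is constant. Integrating from $0$ with $\gamma(0)=\hat V_L$ and using $P_c(\hat V_L)=\hat P_c$ yields exactly \eqref{eqn:convex combination:convex combination}, and evaluating at $\theta=1$ gives $P_c(\gamma(1)) = \tilde P_c$. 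So the differential equation automatically encodes the desired linear interpolation of power demands; the entire content of the lemma is that this ODE has a \emph{well-defined} solution that \emph{remains inside $\mc D$} for all $\theta\in[0,1]$.

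The crux is therefore to establish existence of the solution on the full interval $[0,1]$ together with invariance of $\mc D$. The vector field in \eqref{eqn:convex combination:differential equation} is well-defined and smooth precisely where $\pdd{P_c}{V_L}(\gamma(\theta))$ is nonsingular, and by Proposition~\ref{proposition:jacobian M-matrix} this holds throughout $\mc D$ since there $-\pdd{P_c}{V_L}$ is a nonsingular M-matrix. Local existence and uniqueness near $\hat V_L\in\mc D$ then follow from the Picard–Lindel\"of theorem. To extend the solution to all of $[0,1]$, I would argue that the trajectory cannot escape $\mc D$: if it did, by continuity of the Perron root (Proposition~\ref{proposition:continuity of perron root and vector}) it would first have to reach $\partial\mc D$, where $-\pdd{P_c}{V_L}$ becomes a \emph{singular} M-matrix (Corollary~\ref{corollary:boundary of D}). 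But at such an exit point the corresponding power demand $P_c(\gamma(\theta))$ would lie on $\partial\conv{\mc F}$ by Corollary~\ref{corollary:boundary D F one-to-one}, whereas the convex combination $\theta\tilde P_c + (1-\theta)\hat P_c$ stays in $\inter\conv{\mc F}$ for $\theta<1$ because $\tilde P_c\in\inter\conv{\mc F}$ and $\hat P_c\in P_c(\mc D)$ lies in the convex hull — a contradiction. This confinement argument is the main obstacle and must be made rigorous.

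I expect the delicate point to be ruling out that the trajectory runs off to the boundary of the admissible domain $\{V_L>\mb 0\}$ or blows up in finite $\theta$ without ever touching $\partial\mc D$. To control this I would use the one-to-one correspondence between $\partial\mc D$ and $\partial\conv{\mc F}$ (Corollary~\ref{corollary:boundary D F one-to-one}) to transfer the problem into the power-demand domain: since $P_c(\gamma(\theta))$ is forced by the ODE to trace the straight segment from $\hat P_c$ to $\tilde P_c$, and this segment lies in the compact set $\conv\{\hat P_c,\tilde P_c\}\subset\inter\conv{\mc F}$ for $\theta\in[0,1)$, the image of the trajectory is trapped in a region whose preimage under the (locally invertible) map $P_c|_{\mc D}$ is bounded away from $\partial\mc D$. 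Combining the a priori boundedness of $P_c(\gamma(\theta))$ with the properness of $P_c$ restricted to $\mc D$ (which follows from the bicontinuity in Theorem~\ref{theorem:parametrization of D} and Theorem~\ref{theorem:boundary of F}) prevents finite-time escape, so the solution extends to $[0,1]$ and stays in $\mc D$.

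Finally, uniqueness of the path follows from uniqueness of solutions to the initial value problem (the vector field is $C^1$ on $\mc D$), and the fact that any path satisfying \eqref{eqn:convex combination:convex combination} must satisfy \eqref{eqn:convex combination:differential equation}, obtained by differentiating \eqref{eqn:convex combination:convex combination} and inverting the nonsingular Jacobian. The endpoint identity $\tilde P_c = P_c(\gamma(1))$ is then immediate by setting $\theta=1$ in \eqref{eqn:convex combination:convex combination}. I would close by noting that continuity up to $\theta=1$, where $\tilde P_c$ may or may not lie on $\partial\conv{\mc F}$, is handled by the same confinement estimate together with the bicontinuity of the parametrization in Theorem~\ref{theorem:parametrization of D}.
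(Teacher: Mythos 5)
Your proposal follows essentially the same route as the paper's proof: the fundamental theorem of calculus (or chain rule) turns \eqref{eqn:convex combination:differential equation} into \eqref{eqn:convex combination:convex combination}; local existence and uniqueness hold because the vector field is $C^1$ on $\mc D$; the maximal solution either approaches $\partial\mc D$ or blows up; and the boundary case is excluded because the exit point would map into $\partial\conv{\mc F}$ by Corollary~\ref{corollary:boundary D F one-to-one} while the segment $\theta\tilde P_c+(1-\theta)\hat P_c$ stays in $\inter{\conv{\mc F}}$ for $\theta\in(0,1]$. Your use of the line-segment principle (interior point plus hull point gives interior segment), needing only $\hat P_c\in\conv{\mc F}$, is marginally cleaner than the paper's explicit supporting-hyperplane computation, which first establishes $\hat P_c\in\inter{\conv{\mc F}}$ via Theorem~\ref{theorem:boundary of F}.

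The one genuine weak point is your justification of the blow-up step. You assert that properness of $P_c$ restricted to $\mc D$ ``follows from the bicontinuity in Theorem~\ref{theorem:parametrization of D} and Theorem~\ref{theorem:boundary of F}.'' It does not: the homeomorphism $\Lambda_1\times\RR_{\ge 0}\to\cl{\mc D}$ concerns the voltage-space parametrization only and says nothing about preimages of compact sets under $P_c$; moreover $\Lambda_1$ is not compact, so no compactness can be pushed through the parametrization. The fact you need is true, but its correct one-line source is the identity \eqref{eqn:lambda f} of Lemma~\ref{lemma:lambda f}: fixing any $\lambda\in\Lambda_1$, one has $\lambda\T P_c(x)=\|\varphi(\lambda)\|_{h(\lambda)}^2-\|\varphi(\lambda)-x\|_{h(\lambda)}^2\to-\infty$ as $\|x\|_2\to\infty$, so a bounded image segment forces a bounded trajectory and finite-time blow-up is impossible. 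This coercivity estimate is exactly how the paper argues its case (ii), where it additionally observes that since $P_c(\gamma(\theta))$ is affine in $\theta$, a component can only tend to $-\infty$ as $\theta\to+\infty$, whence the blow-up time is $\omega=+\infty$. Two smaller points: your uniqueness argument differentiates \eqref{eqn:convex combination:convex combination} along a competing path, which presupposes that path is differentiable; this is repairable (a continuous path in $\mc D$ satisfying \eqref{eqn:convex combination:convex combination} is automatically $C^1$ by the inverse function theorem, since $\pdd{P_c}{V_L}$ is nonsingular on $\mc D$), and once repaired it is in fact slightly stronger than the paper's uniqueness claim, which is only among solutions of the initial value problem. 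Finally, your closing remark that $\tilde P_c$ ``may or may not lie on $\partial\conv{\mc F}$'' conflicts with the standing hypothesis $\tilde P_c\in\inter{\conv{\mc F}}$; boundary demands are handled only later, in Theorem~\ref{theorem:one-to-one correspondence}, by a separate limiting argument.
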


The proof of Lemma~\ref{lemma:convex combination} can be found in Appendix~\ref{appendix:proofs for convexity of F}.

\begin{theorem}[\mainresult{one-to-one correspondence}]\label{theorem:one-to-one correspondence}
There is a one-to-one correspondence between the long-term voltage semi-stable operating points $\cl{\mc D}$ and the feasible power demands. \Ie, for each $\tilde P_c\in\mc F$ there exists a unique $\tilde V_L\in\cl{\mc D}$ which satisfies $\tilde P_c = P_c(\tilde V_L)$, implying that $\mc F = P_c(\cl{\mc D})$. More explicitly, $\tilde V_L$ is obtained by solving the initial value problem
\begin{align}\label{eqn:one-to-one correspondence:initial value problem}
\dot \gamma(\theta) = \l(\pdd {P_c} {V_L} (\gamma(\theta))\r)\inv \tilde P_c %
\end{align}
for $\gamma: [0,1]\to \RR^n$ with initial value $\gamma(0)=V_L^*$, where the solution $\gamma$ exists, is unique and satisfies $\gamma(1)=\tilde V_L$.
\end{theorem}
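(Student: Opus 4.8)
The plan is to show that $P_c$ restricts to a bijection from $\cl{\mc D}$ onto $\conv{\mc F}$, and then to sandwich $\mc F$ between $P_c(\cl{\mc D})$ and $\conv{\mc F}$ so that all three sets coincide. First I would fix the base point of the flow. Evaluating \eqref{eqn:jacobian of f} at $V_L^*$ gives $-\pdd{P_c}{V_L}(V_L^*) = [V_L^*]Y_{LL}$, which is a nonsingular M-matrix, so $V_L^*\in\mc D$ by Proposition~\ref{proposition:jacobian M-matrix}. Moreover $P_c(V_L^*)=\mb0$, and since $V_L>\mb0$ together with $P_c(V_L)=\mb0$ forces $Y_{LL}(V_L^*-V_L)=\mb0$ and hence $V_L=V_L^*$, the open-circuit voltage is the \emph{unique} operating point mapping to $\mb0$. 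Because $\pdd{P_c}{V_L}(V_L^*)$ is nonsingular and $\mc D$ is open, the inverse function theorem makes $P_c$ a local diffeomorphism near $V_L^*$, so $\mb0\in\inter{P_c(\mc D)}\subset\inter{\conv{\mc F}}$; thus $\conv{\mc F}$ is full-dimensional and (being closed by Theorem~\ref{theorem:boundary of F}) decomposes as $\conv{\mc F}=\inter{\conv{\mc F}}\,\sqcup\,\partial\conv{\mc F}$.

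For surjectivity I would treat the two pieces separately. For $\tilde P_c\in\inter{\conv{\mc F}}$, apply Lemma~\ref{lemma:convex combination} with $\hat V_L=V_L^*$ and $\hat P_c=\mb0$: the resulting path satisfies $P_c(\gamma(\theta))=\theta\tilde P_c$, solves exactly the initial value problem \eqref{eqn:one-to-one correspondence:initial value problem} with $\gamma(0)=V_L^*$, and terminates at $\gamma(1)=\tilde V_L\in\mc D$ with $P_c(\tilde V_L)=\tilde P_c$. For $\tilde P_c\in\partial\conv{\mc F}$, Corollary~\ref{corollary:boundary D F one-to-one} already supplies a preimage in $\partial\mc D$. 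Hence $P_c(\cl{\mc D})\supset\conv{\mc F}$.

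Injectivity is where the argument has to work. On the boundary, if $P_c(V)=\tilde P_c=P_c(\varphi(\lambda))$ for the unique $\lambda\in\Lambda_1$ given by Theorem~\ref{theorem:boundary of F}, then $\lambda\T P_c(V)$ attains the maximum $\|\varphi(\lambda)\|^2_{h(\lambda)}$, so the equality case of \eqref{eqn:lambda f inequality} forces $V=\varphi(\lambda)$; this pins down the preimage of every boundary demand and, with Corollary~\ref{corollary:boundary D F one-to-one}, shows boundary and interior demands have disjoint preimages in $\cl{\mc D}=\mc D\,\sqcup\,\partial\mc D$. For an interior target $\tilde P_c$, suppose $V'\in\mc D$ also satisfies $P_c(V')=\tilde P_c$. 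I would run Lemma~\ref{lemma:convex combination} ``backwards'', taking $\hat V_L=V'$ and the new target $\mb0\in\inter{\conv{\mc F}}$: this yields a path in $\mc D$ from $V'$ with $P_c(\gamma(\theta))=(1-\theta)\tilde P_c$ whose endpoint maps to $\mb0$ and therefore equals $V_L^*$. Reversing this path gives a second path from $V_L^*$ realizing the convex combination $\theta\tilde P_c$, so by the \emph{uniqueness} of such a path in Lemma~\ref{lemma:convex combination} it coincides with the forward path, whence $V'=\tilde V_L$.

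Combining the two directions, $P_c\colon\cl{\mc D}\to\conv{\mc F}$ is a bijection, so $\conv{\mc F}=P_c(\cl{\mc D})$. Since $\cl{\mc D}\subset\set{V_L}{V_L>\mb0}$ by Corollary~\ref{corollary:boundary of D}, every element of $\cl{\mc D}$ is a genuine operating point and thus $P_c(\cl{\mc D})\subset\mc F$; together with $\mc F\subset\conv{\mc F}$ this gives $\mc F=\conv{\mc F}=P_c(\cl{\mc D})$, proving the one-to-one correspondence (and, as a byproduct, that $\mc F$ is closed and convex). The step I expect to be the main obstacle is the initial-value-problem formulation for boundary demands: the parametrization of $\partial\mc D$ in Theorem~\ref{theorem:parametrization of D} shows $\pdd{P_c}{V_L}$ degenerates to a singular matrix exactly on $\partial\mc D$, so the vector field in \eqref{eqn:one-to-one correspondence:initial value problem} need not stay bounded as $\theta\to1$. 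I would handle this by writing $\gamma(\theta)=(P_c|_{\cl{\mc D}})\inv(\theta\tilde P_c)$ and arguing that this inverse is continuous (the parametrization map of Theorem~\ref{theorem:parametrization of D} is bicontinuous), so that $\gamma$ extends continuously to $\theta=1$ with limit the unique boundary preimage $\varphi(\lambda)$, even though the differential equation itself only governs $\gamma$ on $[0,1)$.
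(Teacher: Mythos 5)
Your proposal is correct and follows essentially the same route as the paper's proof: existence via Lemma~\ref{lemma:convex combination} with base point $V_L^*$ and $\hat P_c=\mb 0$, interior uniqueness via running the lemma backwards to $\mb 0$ and time-reversing the path to invoke uniqueness of the forward IVP solution, and the boundary case via Corollary~\ref{corollary:boundary D F one-to-one} together with the equality case of Lemma~\ref{lemma:lambda f}. Your two refinements\textemdash explicitly checking $\mb 0\in\inter{\conv{\mc F}}$ by the inverse function theorem, and replacing the paper's brief ``take the limit $\tilde P_c\to\partial$'' for the degenerate boundary IVP with continuity of $(P_c|_{\cl{\mc D}})\inv$\textemdash are welcome tightenings of steps the paper leaves implicit, not a different approach.
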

\begin{proof}
Note that $\tilde P_c\in\mc F\subset \conv{\mc F}$ and that $V_L^*\in \mc D$.
Suppose $\tilde P_c \in \inter{\conv{\mc F}}$. By taking $\hat V_L = V_L^*$ and $\hat P_c = P_c(V_L^*) = \mb 0$ in Lemma~\ref{lemma:convex combination}, there is a unique $\gamma:[0,1]\to \mc D$ which solves \eqref{eqn:one-to-one correspondence:initial value problem} with $\gamma(0)=V_L^*$ and which satisfies $\tilde P_c = P_c(\gamma(1))$.
Hence we take $\tilde V_L := \gamma(1)\in\mc D$. To show that there is a unique $\tilde V_L\in\mc D$ such that $\tilde P_c = P_c(\tilde V_L)$, suppose that we have $\tilde V_L^\prime\in \mc D$ such that $\tilde P_c = P_c(\tilde V_L^\prime)$.
Then by Lemma~\ref{lemma:convex combination} there is a unique $\gamma^\prime:[0,1]\to \mc D$ which solves 
\begin{align}\label{eqn:one-to-one correspondence:initial value problem 2}
\dot \gamma^\prime(\theta) = \l(\pdd {P_c} {V_L} (\gamma^\prime(\theta))\r)\inv (P_c(V_L^*) - \tilde P_c) %
\end{align} with $\gamma^\prime(0)=\tilde V_L^\prime$,
and satisfies $P_c(\gamma^\prime(1)) = P_c(V_L^*) = \mb 0$. Recall from Proposition~\ref{proposition:jacobian M-matrix} that $\gamma^\prime(1)>\mb 0$ since $\gamma^\prime(1)\in \mc D$. Since $\mb 0 = P_c(\gamma^\prime(1)) = [\gamma^\prime(1)]Y_{LL}(V_L^* - \gamma^\prime(1))$ and $Y_{LL}$ is nonsingular, this implies that $\gamma^\prime(1) = V_L^*$.
But now note that $\hat \gamma(\theta) := \gamma^\prime(1-\theta) + V_L^* -\tilde V_L^\prime$ is a solution to \eqref{eqn:one-to-one correspondence:initial value problem}, since $P_c(V_L^*)=\mb 0$ in \eqref{eqn:one-to-one correspondence:initial value problem 2} and since $\hat \gamma(0) = V_L^*$. 
Since \eqref{eqn:one-to-one correspondence:initial value problem} has a unique solution, it follows that $\hat \gamma = \gamma$, and in particular $\tilde V_L = \gamma(1) = \hat \gamma(1) = \tilde V_L^\prime$, which proves that $\tilde V_L$ is unique. 
Alternatively, if $\tilde P_c \in \partial\conv{\mc F}$ then by Corollary~\ref{corollary:boundary D F one-to-one} there exists a unique $\tilde V_L\in\partial \mc D$ such that $\tilde P_c = P_c(\tilde V_L) \in P_c(\cl{\mc D})$.
It follows from Lemma~\ref{lemma:lambda f} that there is no other operating point $\tilde V_L^\prime$ such that $\tilde P_c = P_c(\tilde V_L^\prime)$. 
The operating point $\tilde V_L$ is also obtained by the initial value problem \eqref{eqn:one-to-one correspondence:initial value problem}, which follows from taking the limit $\tilde P_c\to \partial \mc D$ for $\tilde P_c \in \inter{\conv{\mc F}}$.
\end{proof}

The next theorem proves that the set $\mc F$ of feasible power demands is closed and convex, and gives a geometric characterization of $\mc F$ in terms of the closed half-spaces $H_\lambda$.
\begin{theorem}[\mainresult{convexity of F}]\label{theorem:convexity of F}
The set $\mc F$ of feasible power demands is closed and convex.
Moreover, the set $\mc F$ is the intersection of all half-spaces $H_\lambda$ with $\lambda\in\Lambda_1$ (see \eqref{eqn:definition of H lambda}), and $\mc F$ coincides with the image of $P_c$. \Ie,
\begin{align}\label{eqn:convexity of F:equalities}
\mc F = P_c(\cl{\mc D}) = \cl{\conv{\mc F}} = \bigcap_{\lambda\in\Lambda_1} H_\lambda = \im P_c.
\end{align}
\end{theorem}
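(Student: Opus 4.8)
The plan is to reduce the entire chain of equalities to establishing the single inclusion $\cl{\conv{\mc F}} \subseteq \mc F$; once this is in hand, all five sets are squeezed together by containments that are already available. Indeed, Theorem~\ref{theorem:one-to-one correspondence} gives $\mc F = P_c(\cl{\mc D})$, Corollary~\ref{corollary:intersection formula for F} gives $\cl{\conv{\mc F}} = \cl{\conv{\im P_c}} = \bigcap_{\lambda\in\Lambda_1} H_\lambda$, and by the definition of $\im P_c$ we trivially have $\mc F \subseteq \im P_c$. Chaining these, $\mc F \subseteq \im P_c \subseteq \cl{\conv{\im P_c}} = \cl{\conv{\mc F}}$, so the reverse inclusion $\cl{\conv{\mc F}} \subseteq \mc F$ collapses every containment in this chain to an equality.

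To prove $\cl{\conv{\mc F}} \subseteq \mc F$, I would first invoke Theorem~\ref{theorem:boundary of F} to note that $\conv{\mc F}$ is closed, so $\cl{\conv{\mc F}} = \conv{\mc F} = \inter{\conv{\mc F}} \cup \partial\conv{\mc F}$, and it then suffices to place each interior point and each boundary point into $\mc F = P_c(\cl{\mc D})$. For a boundary point $\tilde P_c \in \partial\conv{\mc F}$, Corollary~\ref{corollary:boundary D F one-to-one} supplies a unique $\tilde V_L \in \partial\mc D \subseteq \cl{\mc D}$ with $P_c(\tilde V_L) = \tilde P_c$, so $\tilde P_c \in \mc F$. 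For an interior point $\tilde P_c \in \inter{\conv{\mc F}}$, I would apply Lemma~\ref{lemma:convex combination} with base point $\hat V_L = V_L^*$ (which lies in $\mc D$ and satisfies $\hat P_c = P_c(V_L^*) = \mb 0$): the resulting path $\gamma$ ends at $\gamma(1) \in \mc D$ with $P_c(\gamma(1)) = \tilde P_c$, whence $\tilde P_c \in \mc F$. Together these two cases yield $\conv{\mc F} \subseteq \mc F$.

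Assembling the pieces, $\mc F \subseteq \conv{\mc F} \subseteq \mc F$ forces $\mc F = \conv{\mc F}$, so $\mc F$ is convex, and since $\conv{\mc F}$ is closed it is also closed; in particular $\mc F = \cl{\conv{\mc F}}$. Substituting this into the inclusion chain from the first paragraph turns it into $\mc F = P_c(\cl{\mc D}) = \im P_c = \cl{\conv{\im P_c}} = \cl{\conv{\mc F}} = \bigcap_{\lambda\in\Lambda_1} H_\lambda$, which is precisely \eqref{eqn:convexity of F:equalities}.

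The conceptual heart, and the only place where real work happens, is the inclusion $\conv{\mc F} \subseteq \mc F$, i.e. the statement that every convex combination of feasible demands is itself realized by a genuine positive operating point. This rests entirely on the constructive initial-value-problem argument of Lemma~\ref{lemma:convex combination} together with the boundary correspondence of Corollary~\ref{corollary:boundary D F one-to-one}. I expect the only subtlety to be confirming that the closedness of $\conv{\mc F}$ from Theorem~\ref{theorem:boundary of F} is exactly what licenses covering $\conv{\mc F}$ by its interior and boundary, so that no point of $\cl{\conv{\mc F}}$ escapes one of the two cases; everything afterward is set-theoretic bookkeeping.
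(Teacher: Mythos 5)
Your proposal is correct and follows essentially the same route as the paper: the core inclusion $\conv{\mc F}\subseteq \mc F$ is established by the identical two-case split, with boundary points handled by Corollary~\ref{corollary:boundary D F one-to-one} and interior points by the initial value problem of Lemma~\ref{lemma:convex combination} started at $V_L^*\in\mc D$, after which Corollary~\ref{corollary:intersection formula for F} and Theorem~\ref{theorem:boundary of F} supply the remaining equalities. The only (harmless) difference is bookkeeping: you cite Theorem~\ref{theorem:one-to-one correspondence} for $\mc F = P_c(\cl{\mc D})$, whereas the paper re-derives this inclusion inline from the same two-case argument.
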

\begin{proof}
We will first prove convexity.
By definition we have $P_c(\cl{\mc D}) \subset \mc F \subset \conv{\mc F}$. Hence it suffices to show that $\conv{\mc F} \subset P_c(\cl{\mc D})$.
Let $\tilde P_c\in\conv{\mc F}$. If $\tilde P_c\in\partial \conv{\mc F}$, then Corollary~\ref{corollary:boundary D F one-to-one} implies that there exists $\tilde V_L\in\partial \mc D$ such that $\tilde P_c = P_c(\tilde V_L) \in P_c(\cl{\mc D})$.
Alternatively, if $\tilde P_c\in\inter{\conv{\mc F}}$, then by Lemma~\ref{lemma:convex combination} there exists a path $\gamma: [0,1]\to\mc D$ such that $\gamma(0)=V_L^*\in\mc D$ and \eqref{eqn:convex combination:convex combination} holds. In particular, \eqref{eqn:convex combination:convex combination} implies that $\tilde P_c = P_c(\gamma(1))\in P_c(\mc D)$. 
Thus $\conv{\mc F} \subset P_c(\cl{\mc D})$, and thus $\mc F$ is convex.
Corollary~\ref{corollary:intersection formula for F}, Theorem~\ref{theorem:boundary of F} and the convexity of $\mc F$ further imply that
\begin{align}\label{eqn:corollary:implications of convexity}
\bigcap_{\lambda\in\Lambda_1} H_\lambda = \cl{\conv{\mc F}} = \conv{\mc F} = \mc F.
\end{align} 
Finally we show that $\mc F = \im P_c$. Note that $\mc F\subset \im P_c$ by definition. Corollary~\ref{corollary:intersection formula for F} proves that $\cl{\conv{\im P_c}} = \cl{\conv{\mc F}}$. 
We therefore have
\begin{align*}
\cl{\conv{\im P_c}} = \cl{\conv{\mc F}} = \mc F \subset \im P_c.
\end{align*}
Since $\im P_c \subset \cl{\conv{\im P_c}}$, we have $\mc F = \im P_c$.
\end{proof}
\begin{remark}
Theorem~\ref{theorem:convexity of F} shows that the image of $P_c$ coincides with $\mc F$. This means that, if nonpositive voltage potentials would be permitted, then any feasible power demand that is satisfied by nonpositive voltage potentials can also be satisfied by positive voltage potentials. Hence, from a theoretical standpoint, the restriction to positive voltage potentials does not make the set of feasible power demands more conservative.
\end{remark}

Due to the convexity of $\mc F$, Corollary~\ref{corollary:boundary D F one-to-one} implies that $\partial \mc D$ and $\partial \mc F$ in are one-to-one correspondence. Moreover, Lemma~\ref{lemma:lambda f} implies that there are no vectors $x\not\in\partial \mc D$ such that $P_c(x)\in\partial\mc F$. This implies the following corollary.

\begin{corollary}[\mainresult{one-to-one boundary}]\label{corollary:one-to-one boundary}
For each $\tilde P_c$ on the boundary of $\mc F$ there exist a unique $\tilde V_L\in\RR^n$ %
that satisfies \eqref{eqn:dc power flow equation}. All such $\tilde V_L$ satisfy $\tilde V_L>\mb 0$ and form the boundary of $\mc D$. 
Hence, there is a one-to-one correspondence between $\partial \mc D$ and $\partial \mc F$.
\end{corollary}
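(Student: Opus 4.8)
The plan is to assemble the correspondence out of the convexity of $\mc F$ together with the boundary correspondence already established for $\conv{\mc F}$. First I would invoke Theorem~\ref{theorem:convexity of F}, which asserts that $\mc F$ is closed and convex, so that $\mc F = \conv{\mc F}$ and in particular $\partial\mc F = \partial\conv{\mc F}$. This is the step that lets me transfer the boundary results of Section~\ref{subsection:convex hull of F}, which were phrased for $\conv{\mc F}$, directly to $\mc F$ itself.

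Next, fix $\tilde P_c\in\partial\mc F = \partial\conv{\mc F}$. By Theorem~\ref{theorem:boundary of F} the map $\lambda\mapsto P_c(\varphi(\lambda))$ is a bijection from $\Lambda_1$ onto $\partial\conv{\mc F}$, so there is a unique $\lambda\in\Lambda_1$ with $\tilde P_c = P_c(\varphi(\lambda))$. Setting $\tilde V_L := \varphi(\lambda)$ then produces a solution of \eqref{eqn:dc power flow equation} for $\tilde P_c$, since \eqref{eqn:dc power flow equation} is exactly the statement $P_c(V_L) = \tilde P_c$. By the parametrization of $\partial\mc D$ recorded in \eqref{eqn:boundary of D}, this $\tilde V_L$ lies in $\partial\mc D$, and the parametrization of Theorem~\ref{theorem:parametrization of D} guarantees $\tilde V_L>\mb 0$. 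This settles existence and positivity.

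The crux is uniqueness, and here I would lean on the equality clause of Lemma~\ref{lemma:lambda f}: since $\lambda\in\Lambda$, any $x\in\RR^n$ with $P_c(x) = P_c(\varphi(\lambda))$ must in fact satisfy $x=\varphi(\lambda)$. This rules out not only competing positive operating points but any solution in $\RR^n$ whatsoever, so $\tilde V_L$ is the unique $\tilde V_L\in\RR^n$ solving \eqref{eqn:dc power flow equation}. I expect this to be the main obstacle, or at least the one delicate point: the claimed uniqueness is over all of $\RR^n$, not merely over positive vectors, and it is precisely the \emph{iff} in Lemma~\ref{lemma:lambda f} (rather than the bare inequality) that delivers it. Without the equality case one could only conclude that $\tilde V_L$ maximizes $\lambda\T P_c$, which would leave open the possibility of other, sign-indefinite solutions.

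Finally, letting $\tilde P_c$ range over all of $\partial\mc F$ shows that the associated unique solutions $\tilde V_L = \varphi(\lambda)$ trace out exactly $\varphi(\Lambda_1) = \partial\mc D$, so every element of $\partial\mc D$ arises and the map $P_c$ restricts to a bijection $\partial\mc D\to\partial\mc F$. This last conclusion is exactly Corollary~\ref{corollary:boundary D F one-to-one} read through the identity $\partial\mc F = \partial\conv{\mc F}$, giving the claimed one-to-one correspondence between $\partial\mc D$ and $\partial\mc F$.
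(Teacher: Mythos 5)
Your proof is correct and follows essentially the same route as the paper: the paper likewise uses the convexity of $\mc F$ (Theorem~\ref{theorem:convexity of F}) to identify $\partial\mc F$ with $\partial\conv{\mc F}$, invokes Corollary~\ref{corollary:boundary D F one-to-one} (equivalently, Theorem~\ref{theorem:boundary of F} together with $\partial\mc D=\varphi(\Lambda_1)$) for the boundary correspondence, and appeals to Lemma~\ref{lemma:lambda f} to rule out any $x\in\RR^n$ outside $\partial\mc D$ with $P_c(x)\in\partial\mc F$. Your observation that it is precisely the equality clause of Lemma~\ref{lemma:lambda f} that yields uniqueness over all of $\RR^n$, not merely over positive vectors, matches the paper's argument exactly.
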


Theorem~\ref{theorem:one-to-one correspondence} and Corollary~\ref{corollary:one-to-one boundary} immediately imply that there is a one-to-one correspondence between the set $\mc D$ of long-term voltage stable operating points and the power demands which are feasible under small perturbation, by which we mean that such a power demand $\tilde P_c$ is feasible and does not lie on the boundary of $\mc F$ (\ie, $\tilde P_c \in \inter{\mc F}$).
Consequently, if a power demand is feasible under small perturbation, then there exists a unique long-term voltage stable operating point which satisfies the power flow equation.
\begin{corollary}
There is a one-to-one correspondence between the long-term voltage stable operating points $\mc D$ and the feasible power demands under small perturbations $\inter{\mc F}$.
\end{corollary}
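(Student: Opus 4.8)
The plan is to deduce this corollary purely set-theoretically from the two bijections already in hand: the global correspondence $P_c:\cl{\mc D}\to\mc F$ of Theorem~\ref{theorem:one-to-one correspondence}, and the boundary correspondence $P_c:\partial\mc D\to\partial\mc F$ of Corollary~\ref{corollary:one-to-one boundary}. The key structural observation is that both domain and codomain split as disjoint unions of an ``interior'' and a ``boundary'' piece: $\cl{\mc D}=\mc D\cup\partial\mc D$ with $\mc D\cap\partial\mc D=\emptyset$ (from the parametrization of Theorem~\ref{theorem:parametrization of D}, where $\mc D$ and $\partial\mc D$ correspond to $r>0$ and $r=0$ respectively), and $\mc F=\inter{\mc F}\cup\partial\mc F$ with $\inter{\mc F}\cap\partial\mc F=\emptyset$ (since $\mc F$ is closed by Theorem~\ref{theorem:convexity of F}).

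First I would record that $P_c$ is a bijection from $\cl{\mc D}$ onto $\mc F$ which restricts to a bijection from $\partial\mc D$ onto $\partial\mc F$. Then I would invoke the elementary fact that a bijection $f:A\to B$ carrying a subset $A'\subseteq A$ bijectively onto $B'\subseteq B$ automatically carries $A\setminus A'$ bijectively onto $B\setminus B'$: injectivity forces $f(A\setminus A')\cap B'=\emptyset$, while surjectivity together with $B'=f(A')$ forces every point of $B\setminus B'$ to have its unique preimage in $A\setminus A'$. Applying this with $A=\cl{\mc D}$, $A'=\partial\mc D$, $B=\mc F$, $B'=\partial\mc F$ yields exactly that $P_c$ is a bijection from $\mc D=\cl{\mc D}\setminus\partial\mc D$ onto $\inter{\mc F}=\mc F\setminus\partial\mc F$, which is the asserted one-to-one correspondence.

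Since this is essentially a bookkeeping argument, there is no serious computational obstacle; the only point requiring genuine care is to confirm that the two decompositions really are \emph{disjoint} unions, i.e.\ that $\mc D$ meets none of its boundary and that $\mc F$, being closed, is the disjoint union of its interior and its boundary. The former I would justify from Theorem~\ref{theorem:parametrization of D} (the parameter $r$ cleanly separates $\mc D$ from $\partial\mc D$ under the bicontinuous parametrization), and the latter is immediate from the closedness and convexity of $\mc F$ established in Theorem~\ref{theorem:convexity of F}. With these two disjointness facts in place the complement argument closes the proof, and the intended reading---that feasibility under small perturbation, $\tilde P_c\in\inter{\mc F}$, corresponds precisely to the existence of a genuinely long-term voltage \emph{stable} operating point in $\mc D$ rather than merely a semi-stable one on $\partial\mc D$---follows at once.
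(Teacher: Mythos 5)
Your proposal is correct and follows essentially the same route as the paper, which states the corollary as an immediate consequence of Theorem~\ref{theorem:one-to-one correspondence} and Corollary~\ref{corollary:one-to-one boundary}; you have simply made explicit the set-theoretic complement argument (a bijection restricting to a bijection on the boundary pieces induces one on the complements) that the paper leaves implicit. Your care about the disjointness of the decompositions is sound and well-justified by the parametrization of Theorem~\ref{theorem:parametrization of D} and the closedness of $\mc F$ from Theorem~\ref{theorem:convexity of F}.
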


\subsection{A necessary and sufficient LMI condition for feasibility}\label{subsection:necessary and sufficient condition}
We conclude Part~I of this paper by restating the geometric characterization of $\mc F$ in Theorem~\ref{theorem:convexity of F} in terms of an LMI condition.
In the context of Problem~\ref{problem:dc power flow problem}, \cite{barabanov2016} presents a necessary LMI condition for the feasibility of power demands, and states that the LMI condition is also necessary when the set of feasible power demands is closed and convex, as is the case here.
The next theorem recovers this result and extends the result for power demands which are feasible under small perturbation. %
\begin{theorem}[\mainresult{necessary and sufficient condition}]\label{theorem:necessary and sufficient condition}
A vector $\tilde P_c$ of power demands is feasible (\ie, $\tilde P_c\in\mc F$) if and only if there does not exists a positive vector $\nu\in\RR^n$ such that the $(n+1) \times (n+1)$ matrix
\begin{align}\label{eqn:necessary and sufficient condition}
\begin{pmatrix}
[\nu] Y_{LL} + Y_{LL}[\nu] & [\nu] \mc I_L^* \\ ([\nu] \mc I_L^*)\T & 2 \nu\T \tilde P_c
\end{pmatrix} = 2 \begin{pmatrix}
h(\nu) & \tfrac 1 2 [\nu] \mc I_L^* \\ \tfrac 1 2 ([\nu] \mc I_L^*)\T & \nu\T \tilde P_c
\end{pmatrix}
\end{align}
is positive definite.
Similarly, $\tilde P_c$ is feasible under small perturbation (\ie, $\tilde P_c\in\inter{\mc F}$) if and only if there does not exists a positive vector $\nu\in\RR^n$ such that \eqref{eqn:necessary and sufficient condition} is positive semi-definite.
\end{theorem}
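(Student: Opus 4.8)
The plan is to read the result off the geometric characterization $\mc F=\bigcap_{\lambda\in\Lambda_1}H_\lambda$ established in Theorem~\ref{theorem:convexity of F}, translating membership in each half-space into a Schur-complement test on the matrix~\eqref{eqn:necessary and sufficient condition}. First I would record that substituting $\varphi(\lambda)=\tfrac12 h(\lambda)\inv[\lambda]\mc I_L^*$ into the norm yields $\|\varphi(\lambda)\|_{h(\lambda)}^2=\tfrac14([\lambda]\mc I_L^*)\T h(\lambda)\inv[\lambda]\mc I_L^*$, so that by~\eqref{eqn:definition of H lambda} the condition $\tilde P_c\in H_\lambda$ is precisely $\lambda\T\tilde P_c\le\tfrac14([\lambda]\mc I_L^*)\T h(\lambda)\inv[\lambda]\mc I_L^*$.

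Next I would apply the Schur complement. For a fixed $\nu$ with $h(\nu)$ positive definite (equivalently $\nu\in\Lambda$), the leading block $2h(\nu)$ of~\eqref{eqn:necessary and sufficient condition} is positive definite, so the full matrix is positive definite iff its Schur complement $2\nu\T\tilde P_c-\tfrac12([\nu]\mc I_L^*)\T h(\nu)\inv[\nu]\mc I_L^*$ is positive, i.e.\ iff $\nu\T\tilde P_c>\|\varphi(\nu)\|_{h(\nu)}^2$, and positive semi-definite iff $\nu\T\tilde P_c\ge\|\varphi(\nu)\|_{h(\nu)}^2$; conversely, positivity of the full matrix forces positivity of the leading block, so any witness $\nu$ must lie in $\Lambda$ (resp.\ $\cl{\Lambda}$). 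Both $\nu\T\tilde P_c$ and $\|\varphi(\nu)\|_{h(\nu)}^2$ are positively homogeneous of degree one in $\nu$, since $\varphi$ is scale-invariant and $h(t\nu)=t\,h(\nu)$, and every $\nu\in\Lambda$ is positive by Lemma~\ref{lemma:Lambda positive}; normalizing $\nu\mapsto\nu/(\mb1\T\nu)\in\Lambda_1$ thus shows that a positive $\nu$ making~\eqref{eqn:necessary and sufficient condition} positive definite exists iff some $\lambda\in\Lambda_1$ satisfies $\lambda\T\tilde P_c>\|\varphi(\lambda)\|_{h(\lambda)}^2$, i.e.\ iff $\tilde P_c\notin\bigcap_{\lambda\in\Lambda_1}H_\lambda=\mc F$. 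Negating gives the first equivalence.

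For the perturbation statement I would prove that $\tilde P_c\notin\inter{\mc F}$ iff some positive $\nu$ makes~\eqref{eqn:necessary and sufficient condition} positive semi-definite. If $\tilde P_c\notin\mc F$ the first part already supplies a positive-definite, hence semi-definite, witness; if $\tilde P_c\in\partial\mc F$ then a tight supporting half-space yields $\lambda\in\Lambda_1$ with $\lambda\T\tilde P_c=\|\varphi(\lambda)\|_{h(\lambda)}^2$, a semi-definite witness with vanishing Schur complement. For the converse with a witness $\nu\in\Lambda$, normalization gives $\lambda\in\Lambda_1$ with $\lambda\T\tilde P_c\ge\|\varphi(\lambda)\|_{h(\lambda)}^2$, so $\tilde P_c$ is not interior to $H_\lambda$; since the open set $\inter{\mc F}$ is contained in each closed half-space $H_\lambda$ it is contained in each open half-space $\inter{H_\lambda}$, whence $\tilde P_c\notin\inter{\mc F}$.

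The hard part will be a semi-definite witness $\nu>\mb0$ for which $h(\nu)$ is singular, so that $\nu\in\cl{\Lambda}\setminus\Lambda$ and the ordinary Schur complement is unavailable. Testing~\eqref{eqn:necessary and sufficient condition} against vectors $(z\T,t)\T$ with $h(\nu)z=\mb0$ forces $z\T[\nu]\mc I_L^*=0$, hence $[\nu]\mc I_L^*\in\im h(\nu)$, so the generalized Schur complement is finite and such a $\nu$ encodes a limiting supporting inequality of $\mc F$ coming from a boundary direction of $\Lambda_1$. To absorb this case I would pass to the closure: using the properties of $\cl{\Lambda_1}$ collected in Appendix~\ref{subsection:closure of Lambda}, the map $\lambda\mapsto\|\varphi(\lambda)\|_{h(\lambda)}^2$ extends continuously to the compact set $\cl{\Lambda_1}$, the intersection defining $\mc F$ may be taken over $\cl{\Lambda_1}$, and the absence of a positive semi-definite witness becomes the statement that $\lambda\T\tilde P_c<\|\varphi(\lambda)\|_{h(\lambda)}^2$ for every $\lambda\in\cl{\Lambda_1}$. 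Compactness then promotes this family of strict inequalities to a uniform positive slack, which is exactly what certifies $\tilde P_c\in\inter{\mc F}$. This closure-and-compactness argument, rather than the Schur-complement bookkeeping, is where the real work lies, as it is what distinguishes the genuine topological interior from merely being strictly interior to each individual $H_\lambda$.
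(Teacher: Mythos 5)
Your Schur-complement translation of the half-space characterization is the same route the paper takes (via the Haynsworth inertia formula and $\mc F = \bigcap_{\lambda\in\Lambda_1} H_\lambda$ from Theorem~\ref{theorem:convexity of F}), and your handling of witnesses $\nu\in\Lambda$, of infeasible $\tilde P_c$, and of boundary points $\tilde P_c\in\partial\mc F$ via an equality witness $\lambda\in\Lambda_1$ all match. The genuine gap is exactly the case you flag as ``the hard part'': a positive $\nu$ with $h(\nu)$ positive semi-definite but singular. The correct resolution is not a generalized Schur complement or a closure-and-compactness extension\textemdash the case is \emph{vacuous}, and the paper proves precisely that. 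After normalizing, such a $\nu$ lies in $\cl{\Lambda_1}$ by Lemma~\ref{lemma:closure of Lambda}, so $h(\nu)$ is a singular irreducible M-matrix (Lemma~\ref{lemma:Lambda properties}) whose kernel is spanned by a Perron vector $v>\mb 0$ (Proposition~\ref{proposition:perron root M-matrix}). Since $v>\mb 0$, $\nu>\mb 0$ and $\mc I_L^*\gneqq\mb 0$ (Lemma~\ref{lemma:positive open-circuit voltages}), we have $v\T[\nu]\mc I_L^*>0$, and evaluating the quadratic form of \eqref{eqn:necessary and sufficient condition} at $(tv\T,\,1)\T$ gives $t\,v\T[\nu]\mc I_L^* + \nu\T\tilde P_c$, a nonconstant affine function of $t$ that is unbounded below; hence \eqref{eqn:necessary and sufficient condition} is never positive semi-definite for such $\nu$. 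Your own observation that semi-definiteness forces $z\T[\nu]\mc I_L^*=0$ for kernel vectors $z$ is correct, but instead of yielding ``$[\nu]\mc I_L^*\in\im h(\nu)$, encoding a limiting supporting inequality,'' it collides with $v\T[\nu]\mc I_L^*>0$ and kills the case outright. As written, you never derive $\tilde P_c\notin\inter{\mc F}$ from a singular witness, so the converse implication of the perturbation statement is incomplete.

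The closure-and-compactness program cannot repair this, for two reasons. First, it proves the wrong implication: ``strict inequality for all $\lambda\in\cl{\Lambda_1}$, hence uniform slack, hence $\tilde P_c\in\inter{\mc F}$'' is the contrapositive of the direction you already settled directly (no witness $\Rightarrow$ interior, i.e.\ the direction handled by the PD witness for $\tilde P_c\notin\mc F$ and the equality witness from Theorem~\ref{theorem:boundary of F} for $\tilde P_c\in\partial\mc F$); it does not address ``singular witness $\Rightarrow$ not interior.'' Second, its key premise is false: by the same Perron computation, $[\lambda]\mc I_L^*$ retains a strictly positive component along $\ker h(\lambda)$ as $\lambda$ approaches $\partial\Lambda_1$, so $\|\varphi(\lambda)\|_{h(\lambda)}^2=\tfrac14([\lambda]\mc I_L^*)\T h(\lambda)\inv[\lambda]\mc I_L^*$ diverges to $+\infty$ there and does \emph{not} extend continuously (finite-valued) to the compact set $\cl{\Lambda_1}$. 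One could salvage the compactness step with extended-real lower semicontinuity, but it would remain redundant. Replacing the ``hard part'' with the vacuity argument above closes the proof, essentially as in the paper.
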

\begin{proof}
We will prove the logical transposition.

($\Leftarrow$): Without loss of generality we assume that $\|\nu\|_1 = 1$. If \eqref{eqn:necessary and sufficient condition} is positive semi-definite, then $h(\nu)$ is positive semi-definite. It follows from Lemmas~\ref{lemma:closure of Lambda} and \ref{lemma:Lambda properties} that $h(\nu)$ is an irreducible M-matrix. Let $v>\mb 0$ be a Perron vector of $h(\nu)$. Suppose that $h(\nu)$ is singular, then $h(\nu)v=\mb 0$ by Proposition~\ref{proposition:perron root M-matrix}. However, note that for $t\in\RR$ we have
\begin{align*}
\begin{pmatrix}
t v \\ 1
\end{pmatrix}\T\begin{pmatrix}
h(\nu) & \tfrac 1 2 [\nu] \mc I_L^* \\ \tfrac 1 2 ([\nu] \mc I_L^*)\T & \nu\T \tilde P_c
\end{pmatrix}
\begin{pmatrix}
t v \\ 1
\end{pmatrix} = t v\T [\nu]\mc I_L^* + \nu\T \tilde P_c,
\end{align*}
which is a nonconstant line in $t$ since $v\T [\nu]\mc I_L^*>0$, and is not bounded from below.
This contradicts the assumption that \eqref{eqn:necessary and sufficient condition} is positive semi-definite. Hence $h(\nu)$ must be positive definite and $\nu\in\Lambda_1$. Alternatively, if \eqref{eqn:necessary and sufficient condition} is positive definite, then $h(\nu)$ is positive definite. If $h(\nu)$ is positive definite, then by the Haynsworth inertia additivity formula (\cite{zhang2006schur}, Sec. 0.10) \eqref{eqn:necessary and sufficient condition} is positive definite (semi-definite) 
if and only if 
\begin{align}\label{eqn:necessary and sufficient condition:1}
\nu\T \tilde P_c - \tfrac 1 4 ([\nu] \mc I_L^*)\T h(\nu)\inv [\nu] \mc I_L^* > (\ge)\: 0.
\end{align}
Using \eqref{eqn:definition of m} and \eqref{eqn:definition norm}, we note that \eqref{eqn:necessary and sufficient condition:1} is equivalent to
\begin{align}\label{eqn:necessary and sufficient condition:2}
\nu\T \tilde P_c > (\ge)\: \tfrac 1 4 ([\nu] \mc I_L^*)\T h(\nu)\inv [\nu] \mc I_L^* = \|\varphi(\nu)\|_{h(\nu)}^2.
\end{align}
Theorem~\ref{theorem:convexity of F} implies that $\tilde P_c$ is not feasible if and only if there exists $\lambda\in\Lambda$ such that $\tilde P_c\not\in\ H_{\lambda}$, or equivalently, $\lambda\T \tilde P_c > \|\varphi(\lambda)\|_{\lambda}^2$. Thus, if \eqref{eqn:necessary and sufficient condition} is positive definite, then the strict inequality in \eqref{eqn:necessary and sufficient condition:2} holds and $\tilde P_c$ is not feasible. 
Moreover, if equality in \eqref{eqn:necessary and sufficient condition:2} holds then 
\begin{align*}
\nu\T \tilde P_c = \|\varphi(\nu)\|_{h(\nu)}^2 = \nu\T P_c(\varphi(\nu)).
\end{align*}
Lemma~\ref{lemma:lambda f} implies that $\tilde P_c = P_c(\varphi(\nu))$, and thus $\tilde P_c\in\partial \mc F$ by Theorem~\ref{theorem:boundary of F}. Thus, if \eqref{eqn:necessary and sufficient condition} is positive semi-definite, then $\tilde P_c \not\in \mc F$ or $\tilde P_c \in \partial\mc F$, and therefore $\tilde P_c\not\in\inter{\mc F}$.

($\Rightarrow$): The converse is obtained by reversing the steps.
\end{proof}
Theorem~\ref{theorem:necessary and sufficient condition} presents a necessary and sufficient LMI conditions for the feasibility (under small perturbation) of a DC power grid with constant-power loads.
A more common formulation of Theorem~\ref{theorem:necessary and sufficient condition} as an LMI condition can be obtained by replacing $[\nu]$ by a positive definite diagonal matrix $D$, and replacing $\nu\T \tilde P_c$ by $\mb 1\T D \tilde P_c$ (\textit{cf.} \cite{barabanov2016}).
\section{Conclusion of Part I}\label{section:conclusion}
In Part I of this paper we have studied the power flow feasibility of DC power grids with constant-power loads, and have presented a framework for the analysis of this feasibility problem.
Specifically, we have presented a geometric characterization of the feasible power demands in terms of half-spaces, along with necessary and sufficient LMI conditions to check if a vector of power demands is feasible (under small perturbation). 
In addition, we have given a novel proof for the convexity of the set of feasible power demands.
More importantly, we proved that there exists a one-to-one correspondence between the feasible power demands and the long-term voltage semi-stable operating points. This shows that for each feasible power demand there exists a unique operating point which is long-term voltage semi-stable and satisfies the power flow equations. 
This operating point can be found by solving an initial value problem.
The existence and uniqueness of this operating point proves that long-term (semi-)stability can be guaranteed for each feasible power demand. 
Furthermore, we showed that there exists a one-to-one correspondence between the feasible power demands under small perturbations and the long-term voltage stable operating points.

Our analysis is continued in Part II of this paper, in which we study high-voltage operating points and sufficient conditions for power flow feasibility, among other things.

\appendix
\renewcommand{\theproposition}{A.\arabic{theorem}}
\renewcommand{\thelemma}{A.\arabic{theorem}}
\subsection{Properties of Z-, M- and irreducible matrices}\label{subsection:matrix theory}

\begin{proposition}\label{proposition:continuity of perron root and vector}
The Perron root and Perron vector of an irreducible Z-matrix $A$ are continuous in the elements of $A$.
\end{proposition}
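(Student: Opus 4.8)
The plan is to exploit the fact, established in Proposition~\ref{proposition:Z-matrix perron result}, that for an irreducible Z-matrix the Perron root is a \emph{simple} eigenvalue which is uniquely characterized as the eigenvalue of smallest real part. Simple eigenvalues and their eigenvectors depend continuously (indeed analytically) on the matrix entries, so the strategy is to track this distinguished eigenvalue and its eigenvector through a perturbation. Fix an irreducible Z-matrix $A_0$ with Perron root $r_0$ and Perron vector $v_0>\mb 0$; it suffices to prove continuity at $A_0$.

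First I would handle the Perron root. The characteristic polynomial $\det(\lambda I - A)$ has coefficients that are polynomials in the entries of $A$, and $r_0$ is a simple root of $\det(\lambda I - A_0)$. By the standard fact that a simple root of a polynomial varies continuously with its coefficients (via the implicit function theorem applied to $\lambda\mapsto\det(\lambda I-A)$ at $(r_0,A_0)$, where $\partial_\lambda\det(\lambda I-A_0)|_{\lambda=r_0}\neq 0$), there is a neighborhood of $A_0$ and a continuous map $A\mapsto r(A)$ with $r(A_0)=r_0$ and $\det(r(A) I - A)=0$. It then remains to identify $r(A)$ with the Perron root of $A$ when $A$ is an irreducible Z-matrix near $A_0$. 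Since $r_0$ is the \emph{unique} eigenvalue of $A_0$ of smallest real part, it is separated from the remaining spectrum by a gap $\delta>0$, i.e. every other eigenvalue of $A_0$ has real part at least $r_0+\delta$. The spectrum of $A$ (with multiplicity) depends continuously on $A$, so for $A$ sufficiently close to $A_0$ exactly one eigenvalue lies near $r_0$, namely $r(A)$, while the other $n-1$ eigenvalues stay near the remaining eigenvalues of $A_0$. Hence $r(A)$ is the unique eigenvalue of $A$ of smallest real part and therefore equals its Perron root by Proposition~\ref{proposition:Z-matrix perron result}; this proves continuity of the Perron root at $A_0$.

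For the Perron vector I would use the adjugate. Because $r(A)$ is a simple eigenvalue, $r(A) I - A$ has rank $n-1$, so $\operatorname{adj}(r(A) I - A)\neq 0$, while the identity $(r(A) I - A)\operatorname{adj}(r(A) I - A)=\det(r(A) I - A) I = 0$ forces every column of $\operatorname{adj}(r(A) I - A)$ into the one-dimensional kernel of $r(A) I - A$, i.e. to be a scalar multiple of the Perron vector. The entries of $\operatorname{adj}(r(A) I - A)$ are polynomials in the entries of $A$ and in $r(A)$, hence continuous in $A$. Choosing an index $k$ for which the $k$-th column is nonzero at $A_0$, continuity keeps it nonzero nearby; as it is a nonzero multiple of a positive vector it is either strictly positive or strictly negative, so dividing by its $1$-norm and fixing the sign to be positive produces a continuous map $A\mapsto v(A)$ with $v(A)>\mb 0$, $\|v(A)\|_1=1$ and $A v(A)=r(A) v(A)$. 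This is the normalized Perron vector, and it is continuous in $A$.

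The main obstacle is the identification step in the second paragraph: continuity of a \emph{single selected} root only yields a nearby simple eigenvalue, and one must separately argue, using the isolation of $r_0$ in the spectrum together with continuity of the full spectrum, that this eigenvalue remains the one of smallest real part, so that it is genuinely the Perron root rather than some other nearby eigenvalue. A secondary point requiring care is the normalization of the eigenvector, which must be chosen (unit $1$-norm and positive sign) so as to remain continuous and to single out the positive Perron direction.
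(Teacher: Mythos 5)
Your proof is correct. Conceptually it travels the same road as the paper, which disposes of this proposition in a single line by citing Ortega for the two standard perturbation facts: eigenvalues are continuous in the matrix entries, and an eigenvector associated to a \emph{simple} eigenvalue is continuous. Everything in both arguments hinges on the simplicity of the Perron root guaranteed by Proposition~\ref{proposition:Z-matrix perron result}. What you do differently is make the argument self-contained: the implicit function theorem applied to $\lambda\mapsto\det(\lambda I-A)$ replaces the cited continuity of a simple root, and the adjugate construction $\operatorname{adj}(r(A)I-A)$, whose nonzero columns span the one-dimensional kernel and depend polynomially on $A$ and $r(A)$, replaces the cited continuity of the eigenvector. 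More substantively, you supply the identification step that the paper's citation leaves implicit: continuity of an eigenvalue branch by itself does not guarantee that the branch remains the \emph{Perron} root, and your spectral-gap argument (the other eigenvalues of $A_0$ have real part at least $r_0+\delta$, so for $A$ near $A_0$ the tracked root is still the unique eigenvalue of smallest real part) is precisely what closes that gap. Your normalization is also sound, since every nonzero vector in the kernel is a nonzero multiple of the positive Perron vector and hence has entries of one common sign, so scaling by the signed $1$-norm (or, slightly more cleanly, by $\mb 1\T c(A)$) yields a continuous positive unit eigenvector. In short: the paper buys brevity by outsourcing to a reference; your version buys transparency and rigor at the cost of length, and is if anything more complete than the paper's own proof.
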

\begin{proof}
All eigenvalues and any eigenvector corresponding to a simple eigenvalue are continuous in the elements of the corresponding matrix (see \cite[3.1.2]{ortega1990numerical} and \cite[3.1.3]{ortega1990numerical}).
\end{proof}

\begin{proposition}[{\cite[Thm. 5.8]{fiedler1986special}}]\label{proposition:perron root M-matrix}
An irreducible M-matrix is singular if and only if its Perron root is zero, in which case its kernel is spanned by any Perron vector.
\end{proposition}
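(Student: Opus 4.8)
The plan is to derive everything directly from Proposition~\ref{proposition:Z-matrix perron result}, which already supplies the structural facts about the Perron root, so that only a short argument about eigenvalue locations remains. Let $A$ be an irreducible M-matrix. Since every M-matrix is in particular a Z-matrix (Definition~\ref{definition:M-matrix}), $A$ is an irreducible Z-matrix, and Proposition~\ref{proposition:Z-matrix perron result} furnishes a Perron root $r$ that is real, simple, and has the smallest real part among all eigenvalues of $A$, together with a positive Perron vector $v$. By the defining property of an M-matrix, every eigenvalue of $A$ has nonnegative real part; in particular $r$, being real, satisfies $r \ge 0$.

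First I would establish the stated equivalence. Recall that $A$ is singular if and only if $0$ is an eigenvalue of $A$. If $r = 0$, then $0$ is an eigenvalue and $A$ is singular. Conversely, if $A$ is singular, then $0 \in \operatorname{spec}(A)$, so the smallest real part occurring among the eigenvalues is at most $0$; since $r$ realizes this smallest real part, we obtain $\operatorname{Re}(r) \le 0$, while the M-matrix property gives $\operatorname{Re}(r) \ge 0$. Hence $\operatorname{Re}(r) = 0$, and because $r$ is real this forces $r = 0$. This proves that $A$ is singular if and only if its Perron root vanishes.

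Finally, I would address the kernel. Suppose $A$ is singular, so that $r = 0$ by the previous paragraph. Since Proposition~\ref{proposition:Z-matrix perron result} asserts that $r$ is a simple eigenvalue, its algebraic—and therefore geometric—multiplicity equals one. As the eigenspace associated with the eigenvalue $0$ is precisely $\ker A$, it follows that $\ker A$ is one-dimensional. Any Perron vector is an eigenvector for $r = 0$, hence lies in $\ker A$ and, being nonzero, spans this one-dimensional space; equivalently, since a Perron vector is unique up to scaling, every Perron vector spans $\ker A$.

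I do not anticipate a genuine obstacle here, as the whole argument is essentially bookkeeping on top of Proposition~\ref{proposition:Z-matrix perron result}. The only point requiring a little care is the chain ``smallest real part, real, and nonnegative real part,'' which must be invoked in the right order to convert the analytic statement about the location of the spectrum into the algebraic statement that $0$ is, or is not, an eigenvalue; once that is handled, simplicity of $r$ immediately yields the kernel claim.
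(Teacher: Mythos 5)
Your proof is correct, and every step checks out: the reduction to Proposition~\ref{proposition:Z-matrix perron result}, the two-sided squeeze $\mathrm{Re}(r)\le 0$ (because $0$ is an eigenvalue and $r$ minimizes the real part) against $\mathrm{Re}(r)\ge 0$ (the M-matrix property), and the passage from algebraic simplicity of $r=0$ to a one-dimensional kernel spanned by the (up-to-scaling unique, positive) Perron vector. It is worth noting that the paper itself gives no proof of this proposition at all: it is imported verbatim as a citation to Fiedler's book \cite[Thm.~5.8]{fiedler1986special}. So your argument is not a variant of the paper's proof but a self-contained replacement for the citation, and a natural one, since it derives the result purely from Proposition~\ref{proposition:Z-matrix perron result} (which the paper does prove, via the shift $A=sI-B$ and Perron--Frobenius) plus the definition of an M-matrix. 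What your route buys is that the appendix would become self-contained, with this proposition a three-line corollary of machinery already established; what the paper's route buys is brevity and an anchor to the standard M-matrix literature, where the statement sits inside a larger web of equivalent characterizations. One small stylistic remark: your caution about invoking ``smallest real part, real, nonnegative real part'' in the right order is exactly the crux, and you handle it correctly; since $r$ is guaranteed real by Proposition~\ref{proposition:Z-matrix perron result}, you could even skip the $\mathrm{Re}(\cdot)$ bookkeeping and argue directly with $r\le 0\le r$.
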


\begin{proposition}\label{proposition:diagonal properties}
Consider the diagonal matrix $[d]$, $d\in\RR^n$. The following statements hold:
\begin{enumerate}
\item If $A$ is irreducible, then $A+[d]$ is irreducible; \label{proposition:diagonal properties:irreducible addition}
\item If $A$ is a Z-matrix, then $A+[d]$ is a Z-matrix; \label{proposition:diagonal properties:Z-matrix addition}
\item If $A$ is irreducible and $d>\mb 0$, then $A[d]$ are $[d]A$ are irreducible; \label{proposition:diagonal properties:irreducible multiplication}
\item If $A$ is a Z-matrix and $d>\mb 0$, then $A[d]$ are $[d]A$ are Z-matrices; \label{proposition:diagonal properties:Z-matrix multiplication}
\item If $A$ is an M-matrix and $d>\mb 0$, then $A[d]$ are $[d]A$ are M-matrices; \label{proposition:diagonal properties:M-matrix multiplication}
\item If $A$ is an irreducible M-matrix and $d\gneqq\mb 0$, then $A+[d]$ is a nonsingular irreducible M-matrices; \label{proposition:diagonal properties:irreducible M-matrix addition}
\end{enumerate}
\end{proposition}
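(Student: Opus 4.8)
The plan is to treat the six claims in order of difficulty: the structural items \ref{proposition:diagonal properties:irreducible addition}--\ref{proposition:diagonal properties:Z-matrix multiplication} follow directly from Definitions~\ref{definition:Z-matrix} and \ref{definition:irreducible matrix}, whereas the spectral items \ref{proposition:diagonal properties:M-matrix multiplication} and \ref{proposition:diagonal properties:irreducible M-matrix addition} are where the real work lies.

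For items \ref{proposition:diagonal properties:irreducible addition} and \ref{proposition:diagonal properties:Z-matrix addition} I would note that $A+[d]$ differs from $A$ only on the diagonal, so $(A+[d])_{[\alpha,\alpha\comp]}=A_{[\alpha,\alpha\comp]}$ for every nonempty $\alpha\subsetneqq\boldsymbol n$ and $(A+[d])_{ij}=A_{ij}\le 0$ for $i\neq j$; irreducibility and the Z-matrix sign pattern are therefore inherited. For items \ref{proposition:diagonal properties:irreducible multiplication} and \ref{proposition:diagonal properties:Z-matrix multiplication}, since $(A[d])_{ij}=A_{ij}d_j$ and $([d]A)_{ij}=d_iA_{ij}$ with $d>\mb 0$, both products preserve the off-diagonal zero pattern of $A$ (hence irreducibility) and satisfy $A_{ij}d_j\le 0$ and $d_iA_{ij}\le 0$ off the diagonal (hence the Z-matrix property).

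For item \ref{proposition:diagonal properties:M-matrix multiplication} the plan is to reduce to the nonsingular case and then perturb. If $A$ is a nonsingular M-matrix then $A\inv\ge 0$ by the inverse-nonnegativity characterization \cite[Thm.~5.12]{fiedler1986special}, so $(A[d])\inv=[d]\inv A\inv\ge 0$ since $[d]\inv>0$; as $A[d]$ is a Z-matrix by item \ref{proposition:diagonal properties:Z-matrix multiplication}, the same characterization makes $A[d]$ a nonsingular M-matrix, and symmetrically for $[d]A$. For a possibly singular M-matrix $A$, I would use that $A+\varepsilon I$ is a nonsingular M-matrix for every $\varepsilon>0$, apply the nonsingular case to $(A+\varepsilon I)[d]=A[d]+\varepsilon[d]$, and then let $\varepsilon\to 0$. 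Since the eigenvalues depend continuously on the matrix entries (as in Proposition~\ref{proposition:continuity of perron root and vector}), every eigenvalue of $A[d]$ is a limit of eigenvalues with positive real part and hence has nonnegative real part; together with the Z-matrix property this shows $A[d]$, and likewise $[d]A$, is an M-matrix.

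Item \ref{proposition:diagonal properties:irreducible M-matrix addition} I expect to be the crux, and the plan is to compare Perron roots through the spectral radius. Choosing $c$ large enough that $N':=cI-A$ and $N'':=cI-A-[d]$ are both nonnegative, these agree off the diagonal, so both are irreducible, and $N''=N'-[d]\lneqq N'$ because $d\gneqq\mb 0$. Writing an irreducible Z-matrix as $cI$ minus a nonnegative matrix shows that its Perron root is $c$ minus the spectral radius of that nonnegative part and that every eigenvalue has real part at least the Perron root; in particular a nonnegative Perron root is equivalent to being an M-matrix. The strict Perron--Frobenius monotonicity of the spectral radius for irreducible nonnegative matrices then yields $\rho(N'')<\rho(N')$, so the Perron root of $A+[d]$ equals $c-\rho(N'')>c-\rho(N')\ge 0$, the last step because $A$ is an M-matrix. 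As $A+[d]$ is an irreducible Z-matrix by items \ref{proposition:diagonal properties:irreducible addition} and \ref{proposition:diagonal properties:Z-matrix addition}, its positive Perron root makes it a nonsingular M-matrix by Proposition~\ref{proposition:perron root M-matrix}. The main obstacles are thus the singular case of item \ref{proposition:diagonal properties:M-matrix multiplication}, handled by perturbation and continuity of eigenvalues, and the strict spectral-radius decrease underlying item \ref{proposition:diagonal properties:irreducible M-matrix addition}.
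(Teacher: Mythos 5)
Your proposal is correct, but it takes a genuinely different route from the paper for the two nontrivial items. The paper's proof is essentially bibliographic: it disposes of items \ref{proposition:diagonal properties:irreducible addition}--\ref{proposition:diagonal properties:Z-matrix multiplication} exactly as you do (directly from Definitions~\ref{definition:Z-matrix} and \ref{definition:irreducible matrix}), but then cites \cite{PLEMMONS1977175} for item \ref{proposition:diagonal properties:M-matrix multiplication}, and for item \ref{proposition:diagonal properties:irreducible M-matrix addition} it splits into cases, invoking \cite[Cor.~3.9]{li1995characterizations} when $A$ is singular and \cite[Thm.~1, $\mathrm{A}_3$]{PLEMMONS1977175} when $A$ is nonsingular. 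You instead prove both items from scratch: for item \ref{proposition:diagonal properties:M-matrix multiplication} the inverse-nonnegativity argument $(A[d])\inv=[d]\inv A\inv\ge 0$ in the nonsingular case, plus the $A+\varepsilon I$ perturbation and eigenvalue continuity in the singular case, is sound (note that nonsingular M-matrices have all eigenvalues with strictly positive real part, since for a Z-matrix $sI-B$ a zero-real-part eigenvalue forces the eigenvalue $0$ itself, so the limiting argument closes correctly); and for item \ref{proposition:diagonal properties:irreducible M-matrix addition} the shift $N'=cI-A$, $N''=N'-[d]\lneqq N'$ combined with the strict Perron--Frobenius monotonicity $\rho(N'')<\rho(N')$ (which needs only the larger matrix $N'$ irreducible, as you have) gives a uniform proof covering the singular and nonsingular cases at once, where the paper needs two separate references. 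What your route buys is self-containedness and, in item \ref{proposition:diagonal properties:irreducible M-matrix addition}, a single unified argument; what the paper's route buys is brevity. One small citation quibble: \cite[Thm.~5.12]{fiedler1986special} is stated for \emph{irreducible} M-matrices (positive inverse), whereas in item \ref{proposition:diagonal properties:M-matrix multiplication} your $A$ is a general nonsingular M-matrix, for which the inverse-nonnegativity equivalence is better attributed to the characterizations in \cite{PLEMMONS1977175}; the fact itself is standard, so this does not affect correctness.
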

\begin{proof}
Statements \ref{proposition:diagonal properties:irreducible addition}, \ref{proposition:diagonal properties:Z-matrix addition}, \ref{proposition:diagonal properties:irreducible multiplication} and \ref{proposition:diagonal properties:Z-matrix multiplication} follow directly from the definitions of a Z-matrix and irreducible matrix.
Statement \ref{proposition:diagonal properties:M-matrix multiplication} is shown in \cite{PLEMMONS1977175}.
Statement \ref{proposition:diagonal properties:irreducible M-matrix addition} follows from \cite[Cor.~3.9]{li1995characterizations} if $A$ is singular, and is implied by \cite[Thm. 1, {$\mathrm{A}_3$}]{PLEMMONS1977175} if $A$ is nonsingular.
\end{proof}
\begin{proposition}\label{proposition:sum of irreducible Z-matrices}
The sum of two irreducible Z-matrices is an irreducible Z-matrix.\label{proposition:diagonal properties:sum of Z-matrix multiplication}
\end{proposition}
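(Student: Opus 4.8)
The plan is to verify the two defining properties of an irreducible Z-matrix separately for the sum $A+B$ of two irreducible Z-matrices $A$ and $B$. The Z-matrix property is immediate: for any $i\neq j$, the off-diagonal entry $(A+B)_{ij}=A_{ij}+B_{ij}$ is a sum of two nonpositive numbers by Definition~\ref{definition:Z-matrix} applied to $A$ and to $B$, hence $(A+B)_{ij}\le 0$. So the only real work is irreducibility.

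For irreducibility I would fix an arbitrary nonempty $\alpha\subsetneqq\boldsymbol n$ and show that $(A+B)_{[\alpha,\alpha\comp]}\neq 0$. The crucial observation is that every entry of this submatrix sits in an off-diagonal position, since $i\in\alpha$ and $j\in\alpha\comp$ forces $i\neq j$. By the irreducibility of $A$ (Definition~\ref{definition:irreducible matrix}) there is a position $(i_0,j_0)$ with $i_0\in\alpha$, $j_0\in\alpha\comp$ and $A_{i_0j_0}\neq 0$; since $A$ is a Z-matrix this forces $A_{i_0j_0}<0$. Because $B$ is also a Z-matrix, $B_{i_0j_0}\le 0$, so at this position there is no cancellation: $(A+B)_{i_0j_0}=A_{i_0j_0}+B_{i_0j_0}\le A_{i_0j_0}<0$. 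Hence $(A+B)_{[\alpha,\alpha\comp]}\neq 0$, and since $\alpha$ was arbitrary, $A+B$ is irreducible.

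I expect no serious obstacle here. The only mechanism by which irreducibility could fail for a sum is the cancellation of a nonzero off-diagonal entry of one summand against the corresponding entry of the other. The shared Z-matrix structure forces all off-diagonal entries of both summands to be nonpositive, which rules out cancellation entirely: a strictly negative off-diagonal entry of $A$ can only be made \emph{more} negative by adding the nonpositive corresponding entry of $B$. This sign-coherence is precisely what makes the statement true, and it is worth noting that the analogous claim fails if the Z-matrix hypothesis is dropped.
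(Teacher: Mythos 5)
Your proof is correct and uses exactly the same mechanism as the paper's one-line argument: since $A$ and $B$ are Z-matrices, the blocks $A_{[\alpha,\alpha\comp]}$ and $B_{[\alpha,\alpha\comp]}$ are nonpositive and (by irreducibility) not identically zero, so their sum cannot cancel and remains nonzero for every nonempty $\alpha\subsetneqq\boldsymbol n$. You merely spell out entry-wise, and with the explicit Z-matrix check, what the paper compresses into the observation that $A_{[\alpha,\alpha\comp]}\lneqq 0$ and $B_{[\alpha,\alpha\comp]}\lneqq 0$ imply $(A+B)_{[\alpha,\alpha\comp]}\lneqq 0$.
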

\begin{proof}
If $A_{[\alpha,\alpha\comp]}\lneqq 0$ and $B_{[\alpha,\alpha\comp]}\lneqq 0$ then $(A+B)_{[\alpha,\alpha\comp]}\lneqq 0$ for all nonempty $\alpha\subsetneqq \boldsymbol n$.
\end{proof}

\renewcommand{\thelemma}{B.\arabic{theorem}}
\subsection{Properties of $\Lambda_1$ and its closure}\label{subsection:closure of Lambda}

Recall from \eqref{eqn:Lambda} and \eqref{eqn:Lambda_1} that $\Lambda$ ($\Lambda_1$) is the set of vectors $\lambda$ such that $h(\lambda)=\tfrac 1 2 ([\lambda]Y_{LL} + Y_{LL}[\lambda])$ is positive definite (and ${\|\lambda\|_1=1}$).

\begin{lemma}\label{lemma:Lambda convex}
The set $\Lambda$ is an open convex cone.
\end{lemma}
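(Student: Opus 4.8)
The plan is to exploit the linearity of the map $h$ together with the fact that the symmetric positive definite matrices form an open convex cone. First I would record that $h$ is linear in its argument: for scalars $\alpha,\beta$ and vectors $\lambda,\mu\in\RR^n$ we have $h(\alpha\lambda+\beta\mu)=\alpha\,h(\lambda)+\beta\,h(\mu)$, since the map $\lambda\mapsto[\lambda]$ and multiplication by the fixed matrix $Y_{LL}$ are both linear. I would also note that each $h(\lambda)$ is symmetric, because $([\lambda]Y_{LL})\T = Y_{LL}[\lambda]$ (using that $Y_{LL}$ and $[\lambda]$ are symmetric), so $h(\lambda)$ genuinely lands in the space of symmetric matrices, where positive definiteness is the relevant notion.

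For the cone property, I would fix $\lambda\in\Lambda$ and $t>0$. By linearity $h(t\lambda)=t\,h(\lambda)$, and a positive scalar multiple of a positive definite matrix is positive definite, so $t\lambda\in\Lambda$. For convexity, take $\lambda,\mu\in\Lambda$ and $\theta\in[0,1]$; linearity gives $h(\theta\lambda+(1-\theta)\mu)=\theta\,h(\lambda)+(1-\theta)\,h(\mu)$, and for any $x\neq\mb 0$,
\[
x\T h(\theta\lambda+(1-\theta)\mu)\,x = \theta\,x\T h(\lambda)\,x + (1-\theta)\,x\T h(\mu)\,x > 0,
\]
since each quadratic form is strictly positive and at least one of $\theta,1-\theta$ is positive. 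Hence $\theta\lambda+(1-\theta)\mu\in\Lambda$, so $\Lambda$ is a convex cone.

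For openness, I would argue that $\Lambda$ is precisely the preimage under $h$ of the set of positive definite matrices. Since $h$ is continuous (being linear on a finite-dimensional space) and the set of symmetric positive definite matrices is open in the space of symmetric matrices — for instance because the smallest eigenvalue depends continuously on the matrix entries and positive definiteness is exactly the condition that this smallest eigenvalue be strictly positive — the preimage $\Lambda$ is open. One may phrase all three properties at once by saying $\Lambda=h\inv(\mathcal P)$, where $\mathcal P$ is the open convex cone of positive definite matrices, and that preimages of open convex cones under linear maps are open convex cones.

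I do not anticipate a genuine obstacle: the whole statement reduces to linearity of $h$ and standard facts about the positive definite cone. The only point meriting a moment's care is the boundary cases $\theta\in\{0,1\}$ in the convexity argument, which are handled by observing that the surviving term alone is already strictly positive.
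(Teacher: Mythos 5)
Your proof is correct and follows essentially the same route as the paper, which likewise deduces the result from the linearity of $h$ together with the facts that convex combinations of positive definite matrices are positive definite and that the set of positive definite matrices is open. Your version merely spells out the details (the cone property under scaling by $t>0$, symmetry of $h(\lambda)$, and the preimage formulation) that the paper leaves implicit.
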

\begin{proof}
The convex combination of positive definite matrices is again positive definite, and the set of all positive definite matrices is open. The result follows since $h(\lambda)$ is linear in $\lambda$.
\end{proof}

\begin{lemma}\label{lemma:Lambda positive}
The set $\Lambda$ is contained in the positive orthant. \Ie, $\lambda>\mb 0$ for $\lambda\in\Lambda$. 
\end{lemma}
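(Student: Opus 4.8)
The plan is to exploit the elementary fact that a positive-definite matrix necessarily has strictly positive diagonal entries, combined with the explicit form of the diagonal of $h(\lambda)$.

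First I would fix an arbitrary $\lambda\in\Lambda$, so that $h(\lambda)$ is positive definite by definition \eqref{eqn:Lambda}. Evaluating the associated quadratic form at the $i$-th standard basis vector $e_i$ — equivalently, reading off the $(i,i)$-entry of $h(\lambda)$ — and invoking the identity \eqref{eqn:rewriting quadratic}, I would compute
\begin{align*}
e_i\T h(\lambda) e_i = \lambda\T [e_i] Y_{LL} e_i = \lambda_i (Y_{LL})_{ii}.
\end{align*}
Positive definiteness of $h(\lambda)$ forces the left-hand side to be strictly positive, so $\lambda_i (Y_{LL})_{ii} > 0$ for every $i$.

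Next I would observe that $Y_{LL}$ is itself positive definite, being a principal submatrix of $Y$, so in particular $(Y_{LL})_{ii} = e_i\T Y_{LL} e_i > 0$ for each $i$. Dividing the strictly positive quantity $\lambda_i (Y_{LL})_{ii}$ by the strictly positive factor $(Y_{LL})_{ii}$ yields $\lambda_i > 0$. Since $i$ was arbitrary, this gives $\lambda > \mb 0$, which is the assertion.

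There is essentially no serious obstacle here: the result is a one-line consequence of the diagonal structure of $h(\lambda)$. The only point that deserves a moment's care is the strict positivity of the diagonal of $Y_{LL}$, which I would justify via its positive definiteness rather than through any graph-theoretic or irreducibility argument, so that the conclusion does not covertly depend on Assumption~\ref{assumption:loads connected}. It is worth noting that this proof uses only positive \emph{semi}-definiteness-to-positive-definiteness of the diagonal in the strict sense, and the same computation shows that if $h(\lambda)$ is merely positive semi-definite then $\lambda\ge\mb 0$, which is the natural analogue for the closure $\cl\Lambda$.
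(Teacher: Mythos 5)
Your proposal is correct and follows essentially the same argument as the paper: both proofs read off the diagonal entries $\lambda_i (Y_{LL})_{ii}$ of $h(\lambda)$, invoke the strict positivity of the diagonal of a positive definite matrix together with the positive definiteness of $Y_{LL}$, and conclude $\lambda_i > 0$ for all $i$. Your closing remark about the semi-definite case yielding $\lambda \ge \mb 0$ is also correct and consistent with how the paper later handles $\cl{\Lambda_1}$.
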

\begin{proof}
Let $\lambda$ be such that $h(\lambda)$ is positive definite.
Recall that the matrix $Y_{LL}$ is positive definite.
A matrix is positive definite only if its diagonal elements are positive. 
The diagonal elements of $Y_{LL}$ and $h(\lambda)$ are respectively given by $(Y_{LL})_{ii}$ and $\lambda_i (Y_{LL})_{ii}$, and therefore
$(Y_{LL})_{ii}>0$ and $\lambda_i (Y_{LL})_{ii}>0$. This implies that $\lambda_i>0$ for all $i$.
\end{proof}

\begin{lemma}
The set $\Lambda_1$ is a bounded convex set.
\end{lemma}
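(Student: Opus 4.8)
The plan is to exhibit $\Lambda_1$ as an intersection of convex sets that happens to lie inside a bounded region, so that both claimed properties follow with essentially no computation.

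First I would treat convexity. By Lemma~\ref{lemma:Lambda convex} the set $\Lambda$ is convex (indeed an open convex cone), the hyperplane $\set{\lambda}{\mb 1\T \lambda = 1}$ is affine and hence convex, and the positive orthant $\set{\lambda}{\lambda>\mb 0}$ is convex. Since $\Lambda_1$ is by its definition~\eqref{eqn:Lambda_1} exactly the intersection of these three sets, it is convex as an intersection of convex sets. This is precisely the observation already recorded immediately below~\eqref{eqn:Lambda_1}, so the convexity half requires nothing new.

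Second, for boundedness I would invoke the normalization that is built into the definition of $\Lambda_1$. Every $\lambda\in\Lambda_1$ satisfies $\|\lambda\|_1 = \mb 1\T\lambda = 1$, so $\Lambda_1$ is contained in the unit sphere of the $\ell_1$-norm, which is a bounded subset of $\RR^n$. Combined with the positivity $\lambda>\mb 0$ supplied by Lemma~\ref{lemma:Lambda positive}, this even forces the sharper coordinatewise bound $0<\lambda_i<1$ for each $i$. Either way, $\Lambda_1$ is bounded.

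There is no genuine obstacle here: the lemma is an immediate consequence of the definition of $\Lambda_1$, the convexity of $\Lambda$ from Lemma~\ref{lemma:Lambda convex}, and the fact that the $\ell_1$-normalization confines $\Lambda_1$ to a bounded region. The only point I would flag explicitly is that boundedness does not actually use the positivity or positive-definiteness constraints at all—it is forced by the single condition $\|\lambda\|_1 = 1$—whereas convexity is where the structural input of Lemma~\ref{lemma:Lambda convex} enters.
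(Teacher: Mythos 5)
Your proof is correct and follows essentially the same route as the paper: both arguments obtain convexity by writing $\Lambda_1$ as an intersection of convex sets after observing that, on positive vectors, the constraint $\|\lambda\|_1 = 1$ is the linear constraint $\mb 1\T\lambda = 1$, and both obtain boundedness directly from the $\ell_1$-normalization. The only cosmetic difference is that the paper invokes Lemma~\ref{lemma:Lambda positive} to get positivity of $\lambda\in\Lambda$ before linearizing the norm, whereas you read the positivity off the definition of $\Lambda_1$ itself; both are valid.
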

\begin{proof}
Let $\lambda\in\Lambda$.
Since $\lambda>\mb 0$ by Lemma~\ref{lemma:Lambda positive}, it follows that $\|\lambda\|_1 = \lambda\T \mb 1$.
Hence $\Lambda_1 = \Lambda \cap \set{\lambda}{\|\lambda\|_1 = 1} = \Lambda \cap \set{\lambda}{\lambda\T \mb 1 = 1}$.
The latter expression is an intersection of convex sets (see Lemma~\ref{lemma:Lambda convex}). Hence $\Lambda_1$ is convex.
The set $\set{\lambda}{\|\lambda\|_1 = 1}$ is bounded and thus $\Lambda_1$ is bounded.
\end{proof}

\begin{lemma}\label{lemma:closure of Lambda}
The closure of $\Lambda_1$ satisfies
\begin{align*}
\cl{\Lambda_1} = \set{\lambda}{h(\lambda)\text{ is positive semi-definite},\|\lambda\|_1=1 }.
\end{align*}
\end{lemma}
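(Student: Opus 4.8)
The plan is to prove the two inclusions separately, writing $S := \set{\lambda}{h(\lambda)\text{ is positive semi-definite},\ \|\lambda\|_1=1}$ for the set on the right-hand side.

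First I would dispatch the inclusion $\cl{\Lambda_1}\subseteq S$. Since a positive definite matrix is in particular positive semi-definite, we have $\Lambda_1\subseteq S$ directly. The set $S$ is closed: positive semi-definiteness of $h(\lambda)$ is the condition that $x\T h(\lambda)x\ge 0$ for all $x$, a closed condition since $h$ is linear in $\lambda$, and $\set{\lambda}{\|\lambda\|_1=1}$ is closed. Taking closures then yields $\cl{\Lambda_1}\subseteq \cl S = S$.

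The substance is the reverse inclusion $S\subseteq\cl{\Lambda_1}$, which I would establish by a homotopy towards the known interior point $\tfrac 1 n\mb 1$. Note first that $\tfrac 1 n\mb 1\in\Lambda_1$, because $h(\tfrac 1 n\mb 1)=\tfrac 1 n Y_{LL}$ is positive definite, $\tfrac 1 n \mb 1>\mb 0$ and $\|\tfrac 1 n\mb 1\|_1=1$. Now fix $\lambda\in S$. Repeating the diagonal argument of Lemma~\ref{lemma:Lambda positive} with non-strict inequalities — the $i$-th diagonal entry of $h(\lambda)$ equals $\lambda_i(Y_{LL})_{ii}$, which must be nonnegative while $(Y_{LL})_{ii}>0$ — shows that $\lambda\ge\mb 0$, so that $\|\lambda\|_1=\mb 1\T\lambda=1$. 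For $t\in(0,1]$ define
\begin{align*}
\lambda_t := (1-t)\lambda + \tfrac t n\mb 1.
\end{align*}
Then $\lambda_t>\mb 0$, as a convex combination of a nonnegative and a positive vector with positive weight on the latter, and $\|\lambda_t\|_1=\mb 1\T\lambda_t=(1-t)+t=1$. Moreover, by linearity of $h$,
\begin{align*}
h(\lambda_t) = (1-t)\,h(\lambda) + \tfrac t n Y_{LL},
\end{align*}
which is positive definite for every $t>0$, being the sum of the positive semi-definite matrix $(1-t)h(\lambda)$ and the positive definite matrix $\tfrac t n Y_{LL}$. Hence $\lambda_t\in\Lambda_1$ for all $t\in(0,1]$, and letting $t\to 0^+$ gives $\lambda_t\to\lambda$, so $\lambda\in\cl{\Lambda_1}$.

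I do not anticipate a genuine obstacle here; the argument is essentially the observation that $\Lambda$ is a convex cone (Lemma~\ref{lemma:Lambda convex}) with $\tfrac 1 n\mb 1$ in its relative interior, so every point of the normalized slice $S$ is approached from inside along the segment to $\tfrac 1 n \mb 1$. The only points requiring care are the routine verifications that the convex combination keeps $\|\lambda_t\|_1=1$ (which relies on $\lambda\ge\mb 0$, hence on the semidefinite diagonal argument) and that a positive semi-definite matrix plus a positive definite matrix is positive definite.
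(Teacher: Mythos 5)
Your proof is correct and takes essentially the same route as the paper's, whose entire argument is the one-line observation that the claim follows from the nonemptiness of $\Lambda_1$, the linearity of $h$, and the fact that the positive semi-definite matrices form the closure of the positive definite matrices; your segment toward $\tfrac 1n \mb 1\in\Lambda_1$ is exactly the standard unpacking of that remark. Your extra step showing $\lambda\ge\mb 0$ via the diagonal entries $\lambda_i (Y_{LL})_{ii}$, so that the non-affine constraint $\|\lambda\|_1=1$ is preserved along the segment, is a worthwhile detail that the paper's terse proof leaves implicit.
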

\begin{proof}
Since $\Lambda_1$ is nonempty, this follows directly from linearity of $h$, and the fact that the positive semi-definite matrices form the closure of the positive definite matrices.%
\end{proof}

\begin{lemma}\label{lemma:Lambda properties}
The set $\cl{\Lambda_1}$ is contained in the positive orthant. Moreover, the matrix $h(\lambda)$ for $\lambda\in\cl{\Lambda_1}$ is an irreducible M-matrix.
\end{lemma}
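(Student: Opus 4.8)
The plan is to start from the positive semi-definite description of $\cl{\Lambda_1}$ furnished by Lemma~\ref{lemma:closure of Lambda}, namely $\cl{\Lambda_1} = \set{\lambda}{h(\lambda)\text{ is positive semi-definite},\ \|\lambda\|_1=1}$, and then to prove the two assertions in turn: first that every such $\lambda$ is \emph{strictly} positive (which is what ``positive orthant'' means here, cf.\ Lemma~\ref{lemma:Lambda positive}), and second that $h(\lambda)$ is an irreducible M-matrix. The strict positivity is the real content; once it is in hand, irreducibility follows almost verbatim from the argument already used in the proof of Lemma~\ref{lemma:characterization pddf M-matrix}.

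For the strict positivity, fix $\lambda\in\cl{\Lambda_1}$. Since $h(\lambda)$ is positive semi-definite, its diagonal entries $\lambda_i (Y_{LL})_{ii}$ are nonnegative while $(Y_{LL})_{ii}>0$, so $\lambda\ge\mb 0$ (equivalently, $\cl{\Lambda_1}$ is the closure of a subset of the positive orthant). The key step is to upgrade $\lambda\ge\mb 0$ to $\lambda>\mb 0$. Suppose $\lambda_i=0$ for some $i$. A symmetric positive semi-definite matrix with a vanishing diagonal entry has the entire corresponding row and column equal to zero (the nonnegativity of the $2\times 2$ principal minor forces $h(\lambda)_{ij}=0$). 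Using $h(\lambda)_{ij}=\tfrac12(\lambda_i+\lambda_j)(Y_{LL})_{ij}$, for any $j$ with $(Y_{LL})_{ij}\neq 0$ this yields $\lambda_j=0$. Since $Y_{LL}$ is irreducible by Assumption~\ref{assumption:loads connected}, its graph is connected, so propagating this implication along edges forces $\lambda=\mb 0$, contradicting $\|\lambda\|_1=1$. Hence $\lambda>\mb 0$, which proves the first claim. I expect this propagation-along-the-connected-graph step to be the main obstacle, since it is the only place where the semidefiniteness must be combined with the irreducibility of $Y_{LL}$.

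With $\lambda>\mb 0$ established, the second claim is routine. Because $Y_{LL}$ is an irreducible Z-matrix and $\lambda>\mb 0$, both $[\lambda]Y_{LL}$ and $Y_{LL}[\lambda]$ are irreducible Z-matrices by Proposition~\ref{proposition:diagonal properties}.\ref{proposition:diagonal properties:irreducible multiplication} and \ref{proposition:diagonal properties:Z-matrix multiplication}, whence $2h(\lambda)=[\lambda]Y_{LL}+Y_{LL}[\lambda]$ is an irreducible Z-matrix by Proposition~\ref{proposition:sum of irreducible Z-matrices}; thus so is $h(\lambda)$. Finally, $h(\lambda)$ is symmetric and positive semi-definite, so all its eigenvalues are real and nonnegative, in particular of nonnegative real part, and therefore $h(\lambda)$ is an M-matrix by Definition~\ref{definition:M-matrix}. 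Combining the two facts, $h(\lambda)$ is an irreducible M-matrix for every $\lambda\in\cl{\Lambda_1}$, which completes the proof.
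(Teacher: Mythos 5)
Your proof is correct, but the key step is carried out by a genuinely different argument than the paper's. To rule out a zero entry $\lambda_i=0$, the paper argues globally: it lets $\alpha$ be the zero set of $\lambda$, takes an arbitrary $w$ supported on $\alpha$, and exploits positive semi-definiteness through the quadratic inequality $0\le (v-\beta w)\T h(\lambda)(v-\beta w)$ for all $v$ and $\beta$, concluding $[\lambda]Y_{LL}w=\mb 0$ and hence $(Y_{LL})_{[\alpha\comp,\alpha]}=0$, contradicting irreducibility in one block-level stroke. You instead argue locally and entrywise: from the standard fact that a symmetric PSD matrix with $h(\lambda)_{ii}=0$ has vanishing $i$-th row and column (via the $2\times 2$ principal minor), together with the explicit formula $h(\lambda)_{ij}=\tfrac 1 2(\lambda_i+\lambda_j)(Y_{LL})_{ij}$, you deduce that a zero of $\lambda$ propagates to every neighbor $j$ with $(Y_{LL})_{ij}\neq 0$, and connectivity of the load graph (equivalently, irreducibility of $Y_{LL}$) then forces $\lambda=\mb 0$, contradicting $\|\lambda\|_1=1$. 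Both proofs use irreducibility in an essential way; yours is arguably more elementary (no quadratic-form manipulation, only nonnegativity of $2\times 2$ minors) and makes the combinatorial mechanism visible edge by edge, while the paper's block argument dispatches the whole zero set at once without tracing paths. A further small difference: you derive $\lambda\ge\mb 0$ directly from the nonnegative diagonal of a PSD matrix, whereas the paper inherits it from $\Lambda_1$ lying in the positive orthant (Lemma~\ref{lemma:Lambda positive}); your version has the minor advantage of treating all of $\cl{\Lambda_1}$ uniformly rather than reducing to $\partial\Lambda_1$. The second half of your proof (irreducible Z-matrix via Propositions~\ref{proposition:diagonal properties} and \ref{proposition:sum of irreducible Z-matrices}, then M-matrix from real nonnegative eigenvalues of a symmetric PSD matrix) coincides with the paper's.
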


\begin{proof}
The vectors in $\Lambda_1$ are positive, and so the vectors in $\cl{\Lambda_1}$ are nonnegative.
To show that $\cl{\Lambda_1}$ lies in the positive orthant, it suffices to show that if a vector ${\lambda\in\partial \Lambda_1}$ contains zeros, then $Y_{LL}$ is not irreducible, which is a contradiction.

Suppose $\lambda\in \partial \Lambda_1$ such that $\lambda_{[\alpha]}=\mb 0$ and $\lambda_{[\alpha\comp]} > \mb 0$ for some nonempty set $\alpha\subset\boldsymbol n$.
Let $w$ be a vector such that $w_{[\alpha\comp]} = \mb 0$ and $w_{[\alpha]}$ is arbitrary. We therefore have $[\lambda]w = \mb 0$.
Since $h(\lambda)$ is positive semi-definite, the following inequality holds for every vector $v$ and scalar $\beta$:
\begin{multline}\label{eqn:Lambda properties:inequality}
\hspace{-1em} 0\le (v-\beta w)\T h(\lambda)(v-\beta w) 
= (v-\beta w)\T [\lambda]Y_{LL} (v-\beta w) \\= v\T [\lambda]Y_{LL} (v-\beta w)= v\T[\lambda]Y_{LL}v - \beta v\T[\lambda]Y_{LL}w
\end{multline}
If $v$ is such that $ v\T[\lambda]Y_{LL}w\neq 0$, then \eqref{eqn:Lambda properties:inequality} is violated when we take $\beta$ such that $\beta v\T[\lambda]Y_{LL}w$ is sufficiently large.
It follows that $v\T[\lambda]Y_{LL}w = 0$ for all $v$.
This implies that 
\begin{align}\label{eqn:Lambda properties:Perron vector identity}
[\lambda]Y_{LL}w=\mb 0.
\end{align}
The rows of \eqref{eqn:Lambda properties:Perron vector identity} corresponding to $\alpha\comp\subset\boldsymbol n$ satisfy
\begin{align}\label{eqn:Lambda properties:alpha comp rows}
[\lambda_{[\alpha\comp]}] ((Y_{LL})_{[\alpha\comp,\alpha]}w_{[\alpha]} + (Y_{LL})_{[\alpha\comp,\alpha\comp]}w_{[\alpha\comp]}) = \mb 0.
\end{align}
Recall that $\lambda_{[\alpha\comp]}>\mb 0$ and $w_{[\alpha\comp]} = \mb 0$, and thus \eqref{eqn:Lambda properties:alpha comp rows} implies
\begin{align}\label{eqn:Lambda properties:contradiction}
(Y_{LL})_{[\alpha\comp,\alpha]}w_{[\alpha]} = \mb 0.
\end{align}
Since $w_{[\alpha]}$ is arbitrary, \eqref{eqn:Lambda properties:contradiction} should hold for all $w_{[\alpha]}$, and hence $(Y_{LL})_{[\alpha\comp,\alpha]}=0$. However, this contradicts the assumption that $Y_{LL}$ is irreducible. We conclude that $\lambda>\mb 0$.
Since $Y_{LL}$ is an irreducible Z-matrix and $\lambda > \mb 0$, Propositions~\ref{proposition:diagonal properties} and \ref{proposition:sum of irreducible Z-matrices} imply that $h(\lambda)$ is an irreducible Z-matrix. Since $h(\lambda)$ is positive semi-definite, its eigenvalues are real and nonnegative, and so $h(\lambda)$ is an M-matrix.
\end{proof}

\begin{lemma}\label{lemma:positive semi-definite h}
Let $\lambda\in\partial\Lambda_1$.
For every $v\gneqq\mb 0$ in the kernel of $h(\lambda)$ we have $\lambda\T \pdd {P_c} {V_L} (x)v > 0$ for all $x\in\RR^n$.
\end{lemma}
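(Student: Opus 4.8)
The plan is to show that the quantity $\lambda\T \pdd{P_c}{V_L}(x)v$ is in fact \emph{independent} of $x$ and equals a manifestly positive expression. First I would record the two structural facts about $\lambda$ and $v$ that the argument needs. Since $\lambda\in\partial\Lambda_1$, the matrix $h(\lambda)$ is positive semi-definite but not positive definite (by Lemma~\ref{lemma:closure of Lambda}), hence singular; by Lemma~\ref{lemma:Lambda properties} it is moreover an irreducible M-matrix and $\lambda>\mb 0$. Proposition~\ref{proposition:perron root M-matrix} then tells me its kernel is one-dimensional and spanned by a positive Perron vector. Consequently any $v\gneqq\mb 0$ lying in $\ker h(\lambda)$ is a positive multiple of that Perron vector, so in fact $v>\mb 0$.

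Next I would substitute the explicit Jacobian \eqref{eqn:jacobian of f}, namely $\pdd{P_c}{V_L}(x) = [Y_{LL}(V_L^*-x)] - [x]Y_{LL}$, into $\lambda\T \pdd{P_c}{V_L}(x)v$ and separate the terms depending on $x$. Using that $\lambda\T[d]v=\sum_i\lambda_i d_i v_i$ is symmetric in the vector $d$, the diagonal-matrix carriers $[\lambda]$ and $[v]$, and that $Y_{LL}$ is symmetric, the $x$-dependent contribution collapses to $-x\T(Y_{LL}[\lambda]+[\lambda]Y_{LL})v = -2x\T h(\lambda)v$. This is the crux of the argument: because $v\in\ker h(\lambda)$, this term is exactly zero, which is precisely what makes the conclusion hold simultaneously for all $x\in\RR^n$.

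What then remains is the $x$-free term $([\lambda]v)\T Y_{LL}V_L^*$. Using $Y_{LL}V_L^* = \mc I_L^*$ from \eqref{eqn:open-circuit voltages}, this simplifies to $(\mc I_L^*)\T[\lambda]v = \sum_i (\mc I_L^*)_i\lambda_i v_i$. Finally I would conclude positivity: by Lemma~\ref{lemma:positive open-circuit voltages} we have $\mc I_L^*\gneqq\mb 0$, while $\lambda>\mb 0$ and $v>\mb 0$ from the first step, so every summand is nonnegative and at least one (corresponding to a strictly positive entry of $\mc I_L^*$) is strictly positive.

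I do not anticipate a serious obstacle beyond careful bookkeeping; the only subtle point is the first step, namely upgrading $v\gneqq\mb 0$ to $v>\mb 0$, which hinges on the one-dimensionality of the kernel of the irreducible \emph{singular} M-matrix $h(\lambda)$. The remainder is a direct computation whose payoff is the cancellation of the $x$-dependent terms.
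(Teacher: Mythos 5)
Your proposal is correct and follows essentially the same route as the paper's proof: upgrade $v\gneqq\mb 0$ to $v>\mb 0$ via the one-dimensional kernel of the singular irreducible M-matrix $h(\lambda)$ (Lemma~\ref{lemma:Lambda properties} and Proposition~\ref{proposition:perron root M-matrix}), substitute the Jacobian \eqref{eqn:jacobian of f}, observe that the $x$-dependent part collapses to $-2x\T h(\lambda)v=0$, and conclude $\lambda\T[\mc I_L^*]v>0$ from $\lambda>\mb 0$, $v>\mb 0$, $\mc I_L^*\gneqq\mb 0$. The computation and the supporting lemmas you invoke match the paper's argument step for step.
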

\begin{proof}
Since $\lambda\in\partial\Lambda_1$, the matrix $h(\lambda)$ is singular.
By Lemma~\ref{lemma:Lambda properties}, it is a singular irreducible M-matrix, and its Perron root is zero.
The kernel of $h(\lambda)$ is spanned by any Perron vector, by Proposition~\ref{proposition:perron root M-matrix}.
This implies that $v$ is a Perron vector and $v>\mb 0$. 
Let $x$ be any vector. By substituting \eqref{eqn:jacobian of f}, we note that $\lambda \T \pdd {P_c} {V_L} (x) v$ is equivalent to
\begin{multline}\label{eqn:halfspace image:contradiction formula}
\lambda \T \pdd {P_c} {V_L} (x) v = \lambda\T [Y_{LL}V_L^*] v - \lambda \T ([x]Y_{LL}+[Y_{LL}x]) v \\
= \lambda\T [\mc I_L^*] v - 2x\T h(\lambda) v = \lambda\T [\mc I_L^*] v,
\end{multline}
where we used the fact that $h(\lambda) v = \mb 0$.
Since $v>\mb 0$, $\lambda>\mb 0$ and $\mc I_L^*\gneqq \mb 0$, it follows from \eqref{eqn:halfspace image:contradiction formula} that $\lambda\T [\mc I_L^*] v > 0$. 
We conclude that $\lambda \T \pdd {P_c} {V_L} (x) v > 0$. 
\end{proof}
\subsection{Proofs concerning Section~\ref{subsection:convex hull of F}}\label{appendix:proof of convex hull of F}
\subsubsection*{Proof of Theorem~\ref{theorem:supporting half-spaces of F}}
($\Leftarrow$): 
The half-space $H(\lambda,s)$ with $\lambda\in \Lambda_1$ and $s=\|\varphi (\lambda)\|_{h(\lambda)}^2$ is given by
\begin{align*}
H(\lambda,s) = \set{y}{\lambda\T y \le \|\varphi (\lambda)\|_{h(\lambda)}^2}.
\end{align*}
Since $\lambda\in\Lambda_1$, Lemma~\ref{lemma:lambda f} states that \eqref{eqn:lambda f inequality} holds for all $x\in\RR^n$. %
This implies that $P_c(x)\in H(\lambda,s)$ for all $x\in\RR^n$, and thus $\im P_c \subset H(\lambda,s)$.
To show that $P_c(\varphi (\lambda))$ is the unique point of support, we show that 
\begin{align}\label{eqn:halfspace image:support}
\cl{\im P_c} \cap \partial H(\lambda,s) = \{P_c(\varphi (\lambda))\}.
\end{align}
Let $y\in \cl{\im P_c} \cap \partial H(\lambda,s)$, then there exists a sequence $\{x_k\}_{k\in\NN}\in\RR^n$ such that
\begin{align}\label{eqn:halfspace image:limit 2}
\lim_{k\to\infty}P_c(x_k)= y.
\end{align} %
Since $y\in\partial H(\lambda,s)$, multiplying \eqref{eqn:halfspace image:limit 2} by $\lambda\T $ yields
\begin{align}\label{eqn:halfspace image:limit}
\lim_{k\to\infty}\lambda\T P_c(x_k)= \lambda\T y = s = \|\varphi (\lambda)\|_{h(\lambda)}^2.
\end{align}
It follows from rearranging \eqref{eqn:halfspace image:limit}  and applying \eqref{eqn:lambda f} that
\begin{multline*}
0 = \lim_{k\to\infty} \l(\|\varphi (\lambda)\|_{h(\lambda)}^2 - \lambda\T P_c(x_k)\r) \\= \lim_{k\to\infty}\|\varphi (\lambda) - x_k\|_{h(\lambda)}^2.
\end{multline*} Hence $\lim_{k\to\infty} x_k = \varphi (\lambda)$, and so \eqref{eqn:halfspace image:support} holds.
This proves that $H(\lambda,s)$ supports $\im P_c$, and that $P_c(\varphi (\lambda))$ is a point of support.
The same is true for $\mc F$ since $\mc F\subset \im P_c$ and $P_c(\varphi (\lambda))\in\mc F$.

($\Rightarrow$): 
Let $x>\mb 0$ be a vector. Let $\lambda$ be such that $\|\lambda\|_1=1$ and $\lambda\not\in\Lambda_1$, which means that $h(\lambda)$ is not positive definite.
We will show that there exists a vector $v\ge \mb 0$ such that $\lambda\T P_c(x + t v)$ for scalars $t\ge 0$ is not bounded from above. Since $x+tv>\mb 0$ for all $t\ge 0$, this implies %
that the hyperplane $H(\lambda,s)$ does not contain $\mc F$ for any scalar $s$. The same holds for $\im P_c$ since $\mc F\subset \im P_c$.
Lemma~\ref{lemma:sum formula f} yields
\begin{align}\label{eqn:halfspace image:line mapping}
P_c(\hat x + t v) = P_c(\hat x)+ t \pdd {P_c} {V_L} (\hat x)v - t^2[v]Y_{LL}v.
\end{align}
We multiply \eqref{eqn:halfspace image:line mapping} by $\lambda\T$ and use \eqref{eqn:rewriting quadratic}, which implies
\begin{align}\label{eqn:halfspace image:function sum}
\lambda\T P_c(\hat x + t v) = \lambda\T P_c(\hat x) + t \lambda \T \pdd {P_c} {V_L} (\hat x) v - t^2 v\T h(\lambda)v.
\end{align}

If $\lambda_i<0$ for some $i$, then $e_i\T h(\lambda) e_i = (Y_{LL})_{ii} \lambda_i <0$. Hence, taking $v=e_i\gneqq \mb 0$ in \eqref{eqn:halfspace image:function sum} describes a parabola in $t$ which is not bounded from above. Thus $\lambda\T P_c(x)$ is not bounded from above for $t\ge 0$.

If $\lambda\ge \mb 0$ and $\lambda\not\in \cl{\Lambda_1}$ then the matrix $h(\lambda)$ has a negative eigenvalue by Lemma~\ref{lemma:closure of Lambda}.
Let $r$ be the eigenvalue of $h(\lambda)$ with the smallest (\ie, most negative) real part.
Since $\lambda\ge \mb 0$, it follows that $h(\lambda)$ is a Z-matrix.
The matrix $h(\lambda)$ is block diagonal, where each block corresponds to an irreducible component of $h(\lambda)$.
Let $h(\lambda)_{[\alpha,\alpha]}$ be the irreducible component that corresponds to the negative eigenvalue $r$.
The matrix $h(\lambda)_{[\alpha,\alpha]}$ is an irreducible Z-matrix with Perron root $r$ and Perron vector $w>\mb 0$.
Let $v$ in \eqref{eqn:halfspace image:function sum} be such that $v_{[\alpha]} = w$ and $v_{[\alpha\comp]}=\mb 0$, then $v\T h(\lambda)v= r w\T w <0$. It follows that \eqref{eqn:halfspace image:function sum} describes a parabola in $t$ which is not bounded from above. Thus $\lambda\T P_c(x)$ is not bounded from above for $t\ge 0$.

Finally, suppose $\lambda\in\partial \Lambda_1$, which implies by Lemma~\ref{lemma:Lambda properties} that $\lambda>\mb 0$ and that $h(\lambda)$ is an irreducible M-matrix.
The matrix $h(\lambda)$ is singular since $\lambda\not\in\Lambda_1$.
Let $v>\mb 0$ in \eqref{eqn:halfspace image:function sum} be a Perron vector of $h(\lambda)$.
Proposition~\ref{proposition:perron root M-matrix} states that $v$ spans the kernel of $h(\lambda)$, and so $v\T h(\lambda)v=0$. 
By Lemma~\ref{lemma:positive semi-definite h} we know that $\lambda \T \pdd {P_c} {V_L} (x) v > 0$.
This implies that \eqref{eqn:halfspace image:function sum} describes a half-line for $t\ge 0$ which is not bounded from above.
Hence, $\lambda\T P_c(x)$ is not bounded from above for $t\ge 0$. %
\hfill\QED

\subsubsection*{Proof of Theorem~\ref{theorem:boundary of F}}\label{appendix:proof of boundary of F}
The half-spaces $H_\lambda$ for $\lambda\in\Lambda_1$ are all supporting half-spaces of $\conv{\mc F}$, which follows from \eqref{eqn:intersection formula for F} of Corollary~\ref{corollary:intersection formula for F}. 
Theorem~\ref{theorem:supporting half-spaces of F} proves that $P_c(\varphi (\lambda))$ is a point of support to $H_\lambda$, and that it is unique in the case of $\mc F$. 
Theorem 2.15 of \cite{valentine1964convex} states that all boundary points of a convex set are a point of support associated to some supporting half-space. 
This implies that $P_c(\partial\mc D) \subset \partial\conv{\mc F}$.
We prove equality by showing that there are no other points of support.

Let $y\in \cl{\conv{\mc F}}$. Then there exists a sequence $\{y_k\}_{k\in\NN}\in\conv{\mc F}$ such that $\lim_{k\to\infty}y_k= y$.
This means that for $k\in\NN$ there exists $x_k,z_k\in\RR^n$ and scalars $\theta_k$ such that $y_k = \theta_k P_c(x_k)+(1-\theta_k)P_c(z_k)$ and $0<\theta_k<1$.
Suppose there exists $\widetilde \lambda\in\Lambda_1$ so that $y\in\partial H_{\widetilde \lambda}$. Hence, $y$ is a point of support associated to $H_{\widetilde \lambda}$.
We define $s:=\|\varphi (\lambda)\|_{h(\lambda)}^2$ and observe that
\begin{multline*}
\lim_{k\to\infty}\widetilde \lambda\T(\theta_k P_c(x_k)+(1-\theta_k)P_c(z_k)) \\= \lim_{k\to\infty}\widetilde \lambda\T y_k = \widetilde \lambda\T y = s.
\end{multline*}
Lemma~\ref{lemma:lambda f} implies that for all $k$ we have $\widetilde \lambda\T P_c(x_k)\le s$, with equality if and only if $x_k = \varphi (\widetilde \lambda)$, and the same holds for $z_k$. This implies that
\begin{align}\label{eqn:boundary of F:inequality}
\widetilde \lambda\T(\theta_k P_c(x_k)+(1-\theta_k)P_c(z_k)) \le s. 
\end{align}
In order to converge to equality in \eqref{eqn:boundary of F:inequality} as $k\to\infty$, we require that either $x_k\to \varphi (\widetilde \lambda)$ and $z_k\to \varphi (\widetilde \lambda)$, $x_k \to \varphi (\widetilde \lambda)$ and $\theta_k \to 1$, or $z_k \to \varphi (\widetilde \lambda)$ and $\theta_k\to 0$. In all cases it follows that $y = \lim_{k\to\infty} y_k = P_c(\varphi (\widetilde \lambda))$. Hence $P_c(\varphi (\widetilde \lambda))$ is the unique point of support.

Lemma~\ref{lemma:lambda f} implies that if $P_c(\varphi (\lambda_1)) = P_c(\varphi (\lambda_2))$, then $\varphi (\lambda_1) = \varphi (\lambda_2)$. Hence the map $\lambda\to P_c(\varphi (\lambda))$ is a one-to-one correspondence between $\Lambda_1$ and $\partial\conv{\mc F}$, and $\lambda\to P_c(\varphi (\lambda))$ for $\lambda\in\Lambda_1$ parametrizes $\partial\conv{\mc F}$.

Note the inclusion 
\begin{align*}
\partial \conv{\mc F} = P_c(\partial \mc D)\subset \mc F\subset \conv{\mc F},
\end{align*} which implies that $\conv{\mc F}$ is closed.\hfill\QED
\subsection{Proofs concerning Section~\ref{subsection:convexity of F}}\label{appendix:proofs for convexity of F}
\subsubsection*{Proof of Lemma~\ref{lemma:convex combination}}
First we show that if a path $\gamma:[0,T]\to\mc D$ satisfies \eqref{eqn:convex combination:differential equation} with $\gamma(0)=\hat V_L\in \mc D$, then \eqref{eqn:convex combination:convex combination} holds.
Indeed, note that the matrix $\pdd {P_c} {V_L} (\gamma(\tau))$ is invertible for $0\le\tau \le T$ since $\gamma(\tau)\in\mc D$, and note for $0\le\theta\le T$ that by the fundamental theorem of calculus we have
\begin{align}\label{eqn:convex combination:fundamental theorem of calculus}
P_c (\gamma(\theta)) &= P_c (\gamma(0)) + \int_0^\theta \pdd {P_c} {V_L} (\gamma(\tau))\dot \gamma(\tau) \d\tau.
\end{align}
Substitution of \eqref{eqn:convex combination:differential equation} in \eqref{eqn:convex combination:fundamental theorem of calculus} yields
\begin{align}\label{eqn:convex combination:necessary consequence}
P_c (\gamma(\theta)) &= P_c (\gamma(0)) + \int_0^\theta (\tilde P_c - \hat P_c) \d\tau.
\end{align}
Eq. \eqref{eqn:convex combination:convex combination} follows from \eqref{eqn:convex combination:necessary consequence} since $P_c (\gamma(0)) = P_c (\hat V_L) = \hat P_c$.

To complete the proof it remains to show that a solution $\gamma(\theta)\in \mc D$ to \eqref{eqn:convex combination:differential equation} for $\theta\in[0,1]$ exists and that this solution is unique. %
Let the map $\psi: \mc D\to\RR^n$ be defined by
\begin{align*}
\psi(z) := \l(\pdd {P_c} {V_L} (z)\r)\inv (\tilde P_c-\hat P_c).
\end{align*}
The map $\psi(z)$ is continuously differentiable since $\pdd {P_c} {V_L} (z)$ is invertible for $z\in\mc D$.
Corollary 8.17 of \cite{kelley2010theory} states that the initial value problem \eqref{eqn:convex combination:differential equation} %
has a unique solution $\gamma:(-\varepsilon,\varepsilon)\to B$ for some $\varepsilon > 0$, where $B$ is an open neighborhood of $\hat V_L$ which is contained in $\mc D$.
Since $\psi(z)$ is continuous at all $z\in\mc D$, the solution $\gamma$ can be extended to a maximal interval of existence.
Indeed, by the Theorem 8.33 of \cite{kelley2010theory} we extend $\gamma$ so that either \mbox{(i) $\gamma(\theta) \to \partial \mc D$,} or (ii) $|\gamma(\theta)_i| \to \infty$ for some $i\in\boldsymbol n$, as $\theta\to \omega$ where $\omega\in\RR_{>0} \cup\{+\infty\}$. We will treat cases (i) and (ii) separately.
\subsubsection*{Case (i)}
Let $x\in\partial D$ such that $\gamma(\theta) \to x$ as $\theta\to \omega$ and let $y:= P_c(x)$. By continuity of $P_c(V_L)$ it follows that $P_c(\gamma(\theta)) \to y$ as $\theta\to \omega$. Since $\gamma(\theta)\in\mc D$ for $0\le \theta<\omega$, the first part of this proof showed that \eqref{eqn:convex combination:convex combination} holds for $0\le \theta<\omega$. Suppose $\omega = 1$, then taking the limit $\theta \to \omega$ in \eqref{eqn:convex combination:convex combination} implies that $\tilde P_c = P_c(\gamma(\omega)) = y$, which lies on the boundary of $\mc F$. This contradicts the fact that $\tilde P_c\in\inter{\conv{\mc F}}$.
Suppose $1 > \omega$, then \eqref{eqn:convex combination:convex combination} implies that $P_c(\gamma(\omega)) = y$ is a convex combination of $\hat P_c$ and $\tilde P_c$. Since $\hat V_L\not\in\partial\mc D$ it follows from  Theorem~\ref{theorem:boundary of F} that $\hat P_c\not\in \partial\conv{\mc F}$. Since $\hat P_c\in \mc F$ we therefore have $\hat P_c\in\inter{\conv{\mc F}}$. Let $\lambda\in\Lambda_1$ such that $P_c(\varphi(\lambda))=y$, which exists by Theorem~\ref{theorem:boundary of F}, and define $s:=\|\varphi(\lambda)\|_{h(\lambda)}^2 = \lambda\T P_c(\varphi(\lambda)) =\lambda\T y$. Note that $\lambda\T \hat P_c < s$ and $\lambda\T \tilde P_c < s$ since $\hat P_c,\tilde P_c\in\inter{\conv{\mc F}}$. But since $y$ is a convex combination of $\hat P_c$ and $\tilde P_c$, this would imply that $\lambda\T y < s$, which is a contradiction.
We conclude that $1 < \omega$, and in particular $\gamma(\theta)\in\mc D$ for $0\le \theta \le 1$.
\subsubsection*{Case (ii)}
We will show that $\omega=+\infty$ and that $P_c(\gamma(\theta))$ describes a half-line for $0\le\theta<\infty$.
Let $\lambda\in\Lambda_1$. Note that $|\gamma(\theta)_i|\to\infty$ implies that also $|\varphi(\lambda)_i - \gamma(\theta)_i| \to \infty$. Therefore also $\|\varphi(\lambda)-\gamma(\theta)\|_{h(\lambda)} \to \infty$. It follows from Lemma~\ref{lemma:lambda f} that $\lambda\T P_c(\gamma(\theta))\to-\infty$. This holds for all $\lambda\in\Lambda_1$ and so $P_c(\gamma(\theta))$ does not intersect the boundary of $\mc F$ for $0\le\theta<\omega$. The first part of this proof showed that \eqref{eqn:convex combination:convex combination} holds for $0\le \theta<\omega$, which describes a half-line in $\theta$. Since $\lambda>\mb 0$ by Lemma~\ref{lemma:Lambda positive}, it follows from $\lambda\T P_c(\gamma(\theta))\to-\infty$ that $P_c(\gamma(\theta))_j\to-\infty$ for some $j\in \boldsymbol n$. As a result, \eqref{eqn:convex combination:convex combination} implies that $\omega = +\infty$.
In particular it follows that $\tilde P_c$ lies on the half-line and that $\gamma(\theta)\in\mc D$ for $0\le \theta \le 1$.

To show uniqueness, we note again that $\psi(z)$ is continuously differentiable. Corollary 8.17 of \cite{kelley2010theory} states that %
\eqref{eqn:convex combination:differential equation}
has a unique solution in an open neighborhood around \textit{any} given initial value in $\mc D$. 
Taking any point $\gamma(\theta)$ with $0\le\theta\le 1$ as an initial value shows that the solution $\gamma$ is unique at each point, and hence is unique in $\mc D$.

Since \eqref{eqn:convex combination:convex combination} holds for $0 \le \theta \le 1$, \eqref{eqn:convex combination:convex combination} implies that $\tilde P_c = P_c(\gamma(1))\in P_c(\mc D)$.
\hfill\QED

\end{document}